%% file: main.tex
\documentclass[%
 reprint,
superscriptaddress,
amsmath,amssymb,
aps,
floatfix,
]{revtex4-2}

\usepackage[english]{babel}

\usepackage{amsmath,amssymb,amsthm}
\usepackage{tikz}

\usepackage{pgfplots}
\pgfplotsset{width=10cm,compat=1.9}

\usepackage{physics}
\usepackage{graphicx}
\usepackage{tikz}
\usetikzlibrary{calc}
\usepackage[colorlinks=true, allcolors=blue]{hyperref}

\definecolor{blau}{RGB}{50,50,150}

\newtheorem{theorem}{Theorem}
\newtheorem{lemma}{Lemma}
\newtheorem{definition}{Definition}

\DeclareMathOperator{\poly}{poly}
\DeclareMathOperator{\ad}{ad}

\DeclareMathOperator*{\esssup}{ess\,sup}

\begin{document}
\preprint{APS/123-QED}

\title{Polynomial equivalence of the global transverse-field Ising model and the gate model of quantum computation}

\author{Matthias Werner}
\email{matthias.werner@qilimanjaro.tech}
\affiliation{Qilimanjaro Quantum Tech., Carrer de Veneçuela, 74, Sant Martí, 08019 Barcelona, Spain}
\affiliation{Departament de Física Quàntica i Astrofísica (FQA), Universitat de Barcelona (UB), Carrer de Martí i Franqués, 1, 08028 Barcelona, Spain}
\affiliation{Institut de Ci\`encies del Cosmos, Universitat de Barcelona, ICCUB, Carrer de Martí i Franqu\`es, 1, 08028 Barcelona, Spain}

\thispagestyle{empty}
\pagestyle{empty}

\begin{abstract}
    \input{00_abstract}
\end{abstract}

\maketitle

\input{01_introduction}
\input{02_overview}
\input{03_related_works}
\input{04_methods}
\input{05_numerical_validation}
\input{06_conclusion}
\input{07_acknowledgment}

\bibliography{references}

\onecolumngrid
\appendix
\input{08_appendix_a}
\input{09_appendix_b}
\input{10_appendix_c}

\end{document}

%% file: 00_abstract.tex
The transverse-field Ising model has attracted a lot of attention in recent years, especially in the quantum simulation and quantum computation literature. This interest is driven by many platforms for analog quantum computation, which implement the transverse-field Ising model for solving optimization problems, such as quantum annealing. However, it has remained an open question whether the Ising model with a global transverse field is equivalent to the gate model of quantum computation. Here we answer this question affirmatively for the case of a non-monotonic time-dependent transverse field. Building on a recent result by Cesa and Pichler on global control of Rydberg atoms, we provide a construction that allows simulating arbitrary quantum circuits using the Ising model with global transverse field with polynomial overhead in time, qubit number, and energy scale. Although the polynomial overheads we establish here are large relative to what is feasible on real-world quantum hardware, our result motivates the development of more sophisticated methods for simulating quantum circuits using the Ising model with a global transverse field. Additionally, under the assumption that quantum computing is strictly more powerful than classical computing, our result serves as a no-go theorem for efficient classical simulation of the transverse-field Ising model with a time-dependent global transverse field. Therefore, our finding is relevant for multiple communities, from analog quantum simulation and quantum optimization on various platforms to complexity and control theory.

%% file: 01_introduction.tex
\section{Introduction}
In recent years, quantum computers have reached unprecedented system sizes and coherence times, allowing for the first experiments where the simulability of the results using classical hardware becomes highly questionable, if not impossible~\cite{King_2025}. In parallel with the gate model of quantum computation, analog approaches to quantum computing, such as quantum annealing and analog quantum simulation, have established themselves as alternative computing paradigms dedicated to optimization problems~\cite{Albash_2018} and quantum simulation of quantum many-body physics~\cite{Daley_2023, King_2025}, chemistry~\cite{Arg_ello_Luengo_2019} and other fields.\\
A relevant question for these novel models of computation is the issue of universality, or more precisely polynomial equivalence to the gate model of quantum computation. Recently, it has been an active research question to establish universality results for quantum optimization and simulation paradigms~\cite{Lloyd_2018, Morales_2020, Zheng_2024, Hu_2026}, which are based on the transverse-field Ising model (TFIM). \\
In this work, we present a method to simulate arbitrary quantum circuits using the TFIM with a global transverse field. We use a precisely timed non-monotonic global transverse field to realize effective gates on groups of qubits simultaneously. The qubits are strongly coupled, which allows us to move logical quantum information around the device and process it. The approach is based on a method proposed by Cesa and Pichler~\cite{Cesa_2023}, who showed how to perform universal quantum computation using global control of two distinct species of atoms only. Here, we will use fully global time-dependent control and static local control. Combined with the non-monotonic global transverse field, this makes our method in principle applicable to quantum annealing hardware with the reverse-annealing feature, as well as other platforms that implement a time-dependent TFIM. Furthermore, we provide an error estimate and show that any quantum circuit can be simulated with polynomial overhead in resources, thus establishing the polynomial equivalence between the TFIM with a global time-dependent transverse field and the gate model of quantum computation.

We require only a few mild assumptions on the time-dependence of the global transverse field, making our result applicable for many devices. This proves that various quantum simulation platforms based on the TFIM are implementations of a universal model of quantum computation. Our finding also has implications for the question of efficient classical simulability of this model. Assuming that quantum computers can efficiently solve tasks that are exponentially expensive for classical computers, i.e. if $\text{BQP} \neq \text{BPP}$, our result establishes that no classical algorithm can efficiently simulate the dynamics of the TFIM with global transverse field, providing a no-go theorem for classical simulation.

This work is structured as follows. First, we state the problem and the main result in Section~\ref{sec:ProblemAndMainResult}. Then, we provide a discussion of related work and situate our finding in the context of the field in Section~\ref{sec:RelatedWork}. In Section~\ref{sec:Methods} we outline the proof of the main theorem, which we numerically validate in Section~\ref{sec:NumericalValidation}. The technical details of the proof are given in the appendix. Ultimately, we summarize our conclusions in Section~\ref{sec:Conclusion}.

%% file: 02_overview.tex
\section{Problem definition and main result}
\label{sec:ProblemAndMainResult}
We first introduce the global transverse-field Ising model (TFIM). We will denote the Pauli operators acting on qubit $n$ as $X_n$, $Y_n$ and $Z_n$. The TFIM consists of a diagonal Hamiltonian
\begin{equation}
    H_Z = \sum_{m} h_m Z_m + \sum_{(m,n) \in E(\mathcal{G})} J_{mn} Z_m Z_n
    \label{eq:DefHz}
\end{equation}
with local longitudinal fields $h_m$ and coupling constants $J_{mn}$. The qubits are arranged on a graph $\mathcal{G}$, where the edges $(m,n) \in E(\mathcal{G})$ correspond to $ZZ$-interactions, while the nodes correspond to qubits. The quantum fluctuations are introduced via homogeneous transverse fields on each qubit
\begin{equation}
    H_X = \sum_{m} X_m \ .
    \label{eq:DefHx}
\end{equation}
The full Hamiltonian of the global TFIM is time-dependent, where the time-dependence is introduced via $H_X$ and the function $\Gamma_t: [0, T] \rightarrow \mathbb{R}$, i.e.
\begin{equation}
    H_t = \Gamma_t H_X + H_Z \ .
    \label{eq:DefFullHamiltonian}
\end{equation}
The control field $\Gamma_t$ is also called the schedule. For many devices, one has to assume $\Gamma_t \geq 0$. While this is not generally required for all platforms, we will work within this limitation to maintain the broad applicability of our claim. With this we define:
\begin{definition}
    (Global TFIM as computational model) A system of $n_q$ qubits initialized in $\ket{0}$ evolving in time under the time-dependent Hamiltonian $H_t$ from Eq.~\eqref{eq:DefFullHamiltonian}, where $h_m, J_{mn}, |\Gamma_t| = \poly(n_q)$ for all $m$, $n$ and $0 \leq t \leq T$. After the time evolution, the qubits are measured in the computational basis.
\end{definition}
Note that the global TFIM is commonly used in the quantum annealing literature~\cite{Albash_2018, Callison_2021}. For this particular application, one starts with $\Gamma_0 \gg 1$ and decreases over time $T$ to $\Gamma_T = 0$ at the end of the anneal. In the conventional mode of quantum annealing, also called forward annealing, $\Gamma_t$ decreases monotonically, i.e. one anneals the system directly from dominating fluctuations to the classical target Hamiltonian $H_Z$. However, some devices have implemented the reverse annealing feature~\cite{Venturelli_2019, Passarelli_2020, Pelofske_2023}, where one prepares the system in the target Hamiltonian $H_Z$ and in a given classical state and for a certain amount of time moves to a Hamiltonian with $\Gamma_t > 0$ and then back to the target Hamiltonian $H_Z$. It is also possible to repeat this process multiple times.\\
In some devices designed to perform optimization via quantum annealing, there are additional constraints on the schedule $\Gamma_t$, such as bounds on the gradient with respect to time $|\partial_t \Gamma_t | \leq K$, or, more generally, Lipschitz continuity $|\Gamma_{t_2} - \Gamma_{t_1}| \leq K |t_2 - t_1|$ for $K=\mathcal{O}(1)$. Here, we show how a large class of $\Gamma_t$, including those respecting such constraints, can still be used for universal quantum computation in the sense that the global TFIM is polynomially equivalent to the gate model of quantum computation.\\

We will use a definition of computational equivalence closely related to the definition used for Hamiltonian simulation~\cite{Kohler_2022}. To this end, let us first define polynomial simulation.
\begin{definition} (Polynomial simulation)
    The computational model $\mathcal{M}_2$ polynomially simulates the computational model $\mathcal{M}_1$ if and only if any computation by $\mathcal{M}_1$ on $n_1$ qubits in time $T_1$ can be done by $\mathcal{M}_2$ on $n_2$ qubits in time $T_2$ up to an acceptable error $\varepsilon > 0$ with $n_2, T_2 = \poly (\varepsilon^{-1}, n_1, T_1)$.
\end{definition}
For digital computing models, the time $T_i$ is usually measured in the number $p$ of basic operations, also called gates, so that $p$ and $T_i$ are often used interchangeably. Based on the definition of polynomial simulation, we can define polynomial equivalence.
\begin{definition} (Polynomial equivalence)
    \label{def:PolyEquivalence}
    Two computational models $\mathcal{M}_1$ and $\mathcal{M}_2$ are polynomially equivalent if and only if $\mathcal{M}_1$ polynomially simulates $\mathcal{M}_2$ and $\mathcal{M}_2$ polynomially simulates $\mathcal{M}_1$.
\end{definition}
Here we will give a constructive method to simulate any quantum circuit consisting of $p$ quantum gates acting on $n_q$ qubits using the global TFIM. As we will argue, the required resources in terms of coupling strength, qubit and time overhead are polynomial. This allows us to make the following statement, which is our main result.
\begin{theorem} (informal)
    The global TFIM with non-monotonic schedules is polynomially equivalent to the gate model of quantum computation.
    \label{theorem:MainResult}
\end{theorem}
A formal version of Theorem~\ref{theorem:MainResult}, together with its proof, is given in Appendix~\ref{sec:AppendixProofOfMainResult}. To show polynomial equivalence, it is required to show that the gate model simulates the global TFIM and vice versa. It is widely known that the gate model efficiently simulates the global TFIM, for example, by Trotterization. Thus, to show our claim, it is necessary to prove that the global TFIM can simulate the gate model, which is what we prove here.\\

The formal version of Theorem~\ref{theorem:MainResult} also states the resource scaling. The global TFIM can simulate a quantum circuit consisting of $p$ gates that act on $n_q$ qubits up to an acceptable error $\varepsilon$ using $n_q' = \mathcal{O}\left( n_q^2 \right)$ qubits, a coupling $|J| = \mathcal{O} \left( p n_q^3 \varepsilon^{-1} \right)$ and evolving over time $T = \mathcal{O} \left( p^{8+2\nu} n_q^{30 + 8\nu} \varepsilon^{-7-2\nu} \right)$ for any $\nu > 0$. The exponents are quite large, which is due to compounding effects by several techniques to reduce approximation errors, as we shall discuss in the following. This includes the need to increase the number of pulses due to weak driving fields and the method of gaining group-wise control of the qubits.

It seems relevant to point out that while our result shows the polynomial equivalence in a strict sense, the exponents are too large for our method to be easily implementable. This is not only a concern with respect to coherence times, but also with respect to the coupling energies~\cite{Harley_2024}. However, our result can also be understood as a no-go-theorem for efficient classical simulability of the global TFIM, assuming that superpolynomial quantum advantage over classical computing exists, which is a widely held assumption in complexity theory. Furthermore, in the proof, we use fairly rudimentary techniques, and we cannot make claims about the tightness of the worst-case scaling we derive here. We conjecture that more favorable bounds can be found by more sophisticated mathematical machinery or by different physical approaches, as has been explored in the literature~\cite{Planckian_2025, Planckian_2026a, Planckian_2026b}.\\
A direct consequence of our finding is that various analog quantum computation platforms, such as those manufactured by D-Wave~\cite{Venturelli_2019, Passarelli_2020, Pelofske_2023}, Pasqal~\cite{Henriet_2020, Scholl_2021, Scholl_2022, Leclerc_2025} and QuEra~\cite{Wurtz_2023, Bombieri_2025} are, in principle, implementations of a universal model of quantum computation. However, this claim holds in the sense of a classification of the theoretical computational model, while concealing substantial issues with short coherence times.

%% file: 03_related_works.tex
\section{Related work}
\label{sec:RelatedWork}
\begin{table*}[]
    \centering
    \begin{tabular}{|c||c|c|c|c|c|}
        \hline
        & Cesa \& Pichler~\cite{Cesa_2023} & Lloyd~\cite{Lloyd_2018} & Hu et al.~\cite{Hu_2026} & Zheng et al.~\cite{Zheng_2024} & this work \\
        \hline \hline
         Fully & Pulses applied exactly & evolution under $H_X$  & three global control & site-resolved & static local control \\
         global control & to two species $\mathcal{A}$ and & and $H_Z$ + control & fields + one field to & $X$-, $Z$- and & and fully global time- \\
         & $\mathcal{B}$ separately & of a single qubit & break symmetry & $ZZ$-terms & dependent control via $\Gamma_t$\\
         \hline
         $|J| < \infty$ & $\times$ & \checkmark & \checkmark & \checkmark & \checkmark \\
         \hline
         Slow control & \checkmark & $\times$ & $\times$ & N/A & \checkmark\\
         \hline
         Error analysis & N/A & \checkmark & $\times$ & \checkmark & \checkmark \\
         \hline
         Poly. equiv. & $\times$ & \checkmark & $\times$ & \checkmark & \checkmark \\
         \hline
    \end{tabular}
    \caption{Comparison of the results in this work with Cesa \& Pichler, Lloyd, Hu et al. and Zheng et al. on whether the control is fully global, whether finite interaction strength is assumed, whether slow control suffices or fast switching is required, whether an error analysis is provided, and whether polynomial equivalence to the gate model follows from the result in the sense of Definition~\ref{def:PolyEquivalence}. For the work by Zheng et al. the question of slow control does not arise, since they consider time-independent Hamiltonians. Furthermore, Cesa and Pichler do not require an error analysis, as their method can realize any unitary exactly.}
    \label{tab:ResultsSummary}
\end{table*}
The universality of a computational model can be shown by establishing computational equivalence with a known universal model. The quintessential universal model of quantum computation is gate-based quantum computation. However, various other models have been proven to be universal over the years, most notably measurement-based quantum computation~\cite{Raussendorf_2001}, adiabatic quantum computation~\cite{Aharonov_2008}, and quantum walks~\cite{Childs_2009}. Furthermore, there is extensive literature on the universality of quantum Hamiltonians, where time-independent Hamiltonians can encode quantum computations in their ground states~\cite{Cubitt_2018, Kohler_2022}. These mappings often require more exotic interactions, or utilize interacting subsystems that go beyond two-level systems. In the context of Hamiltonian simulation, the classical Ising model is known to be universal for classical Hamiltonians~\cite{Cuevas_2016}, i.e. Hamiltonians that are diagonal in the computational basis, while the quantum Ising model with transverse field is universal for stoquastic Hamiltonians~\cite{Bravyi_2017}, i.e. Hamiltonians with non-positive off-diagonal elements in the computational basis. These models have weaker simulation capabilities~\cite{Cubitt_2018}, and are thus considered non-universal for quantum computation. Note that these results are concerned with equilibrium properties. Here, we introduce time-dependence of the transverse field and show that this results in a computationally universal model. Thus, we consider an out-of-equilibrium setting.

In particular, it has been an open question in the literature, whether quantum annealing as a quantum computational model is universal. By quantum annealing we mean programming an optimization problem into a diagonal Ising Hamiltonian $H_Z$ and initializing the system in the ground state of a large transverse field $H_X$, which is slowly turned off over time. The hope is to reach the ground state of $H_Z$ if the turning-off of the transverse field is sufficiently slow, such that as a consequence of the adiabatic theorem, the system remains in, or close to, the ground state.

In this setting, quantum annealing is expected not to be universal for quantum computation. This assertion is in part grounded in the observation that the TFIM is described by a stoquastic Hamiltonian, and thus its equilibrium properties can be computed by Quantum Monte Carlo (QMC) methods~\cite{Bravyi_2015, Hen_2021}. While this does not necessarily imply the absence of quantum advantage, it is believed that stoquastic Hamiltonians cannot encode any given quantum computation. Out of equilibrium, however, QMC does not guarantee efficient simulation and the claims about classical simulation do not hold.

Recently, there have been attempts to show how to reproduce a universal set of quantum gates using quantum annealing hardware. In~\cite{Imoto_2024}, it was shown how to realize individual gates without the ability to compose them into a complete quantum circuit using a global TFIM.

The Quantum Alternating Operator Ansatz, also called the Quantum Approximate Optimization Algorithm~\cite{Farhi_2014} (QAOA), is a discrete time algorithm inspired by quantum annealing. Lloyd showed the polynomial equivalence of QAOA to the gate model by demonstrating its relation to quantum cellular automata~\cite{Lloyd_2018}. Morales et al. augmented the claim with an analysis of the dynamical Lie algebra~\cite{Morales_2020}, showing that under mild assumptions, QAOA can generate any unitary. However, from their argument, the polynomial equivalence to the gate model does not follow, since their result is an existence claim. QAOA differs further from the setting we consider here in that it requires decomposing the time evolution generated by $H_X$ and $H_Z$ into a set of quantum gates, or alternatively that a device can switch arbitrarily fast between the two Hamiltonians. Additionally, mapping to quantum cellular automata as proposed by Lloyd requires the ability to manipulate at least one qubit during computation to initialize the computation, breaking the assumption of global control.

In recent work by Zheng et al. it was shown that time evolution under the time-independent transverse-field Ising model is BQP-complete and therefore also universal for quantum computation~\cite{Zheng_2024}. This result is closely related to the claim we are making here, with the key distinction that the authors assume the transverse field to be site-resolved, i.e. they consider the Hamiltonian
\begin{equation}
    H = \sum_m h_m^x X_m + h_m^z Z_m + \sum_{n>m}  J_{mn} Z_m Z_n \ ,
\end{equation}
which is a finer-grained level of control than is currently available on many large-scale quantum simulators.

In this context, Hu et al. showed by Lie-algebraic means that the time-dependent transverse-field Ising model is universal in a control theoretic sense~\cite{Hu_2026}, meaning that any unitary can in principle be generated. However, the analysis of the resource complexity of the compilation was left for future research. In contrast, here we show that universality can be reached with polynomial resources, thus establishing the polynomial equivalence to the gate model of quantum computation. Hu et al. also assume slightly greater control over the Hamiltonian than we do in our construction here. They assume independent control over the global transverse field $H_X = \sum_m X_m$, the global longitudinal field $H_Z =\sum_m Z_m$, here using Hu et al.'s notation, and the two-body interactions of a one-dimensional chain $H_{ZZ} = \sum_m Z_m Z_{m+1}$. Additionally, their construction requires an additional control field that breaks the reflection symmetry of the one-dimensional chain. In contrast, our equivalence proof assumes a fixed, always-on diagonal Hamiltonian $H_Z$ as in Eq.~\eqref{eq:DefHz} in combination with a time-dependent transverse field with a programmable schedule $\Gamma_t$.

Our proof builds on a result by Cesa \& Pichler~\cite{Cesa_2023}, who gave a construction to simulate arbitrary quantum circuits on globally driven Rydberg atom simulators. The method requires the ability to apply driving pulses to two distinct species of atoms. The atoms are arranged in wires, where atoms of the two species are lined up alternatingly. There is one wire for each logical qubit. Carefully designed pulse sequences can move logical information back and forth on the wires, while impurities within or between wires are used to realize logical single- and two-qubit gates, respectively. Additionally, the Cesa-Pichler method relies on the Rydberg blockade between the atoms to transfer information. It assumes that neighboring atoms interact in an infinitely strong way when both are in the excited state.

The first key assumption, the ability to drive two distinct groups of atoms, does not hold for many analog quantum simulators. The second key assumption, infinitely strong diagonal interactions, also does not hold for many physical systems such as the global TFIM. In our proof, we address both these issues. We show that the independent drives of the qubit species can be simulated by the global TFIM. Furthermore, we show that the infinitely strong Rydberg blockade can also be replaced by a polynomially large but finite Ising interaction $J$. We introduce the Cesa-Pichler method in more detail in the following and provide an overview of the most relevant related works in Table~\ref{tab:ResultsSummary}.

%% file: 04_methods.tex
\section{Methods}
\label{sec:Methods}
\begin{figure*}
    \centering
    \includegraphics[scale=.425]{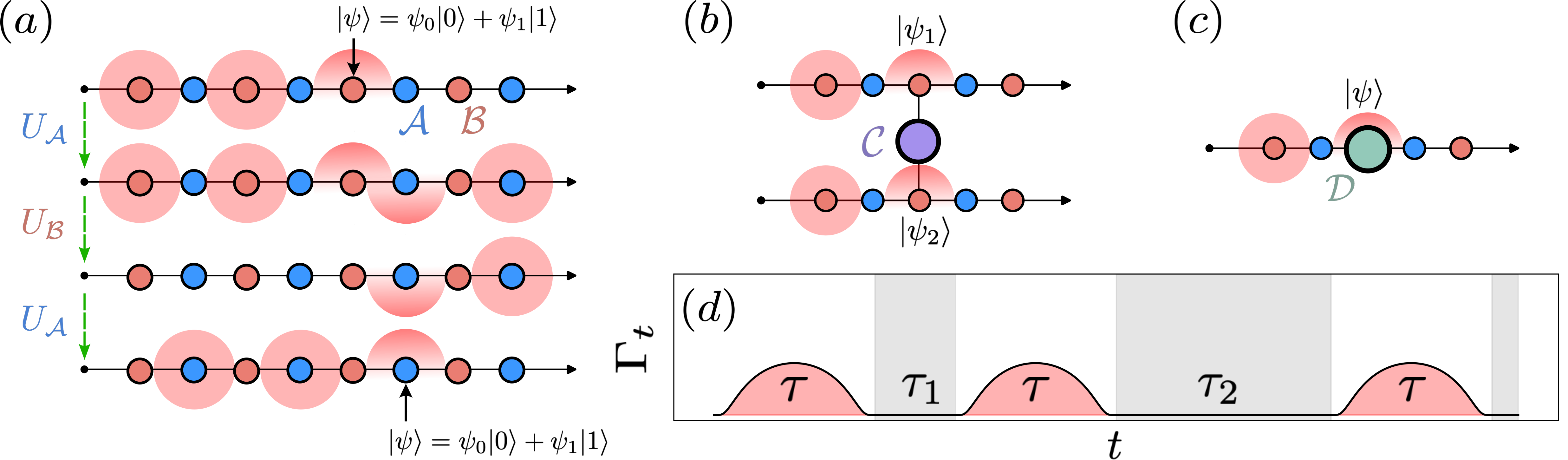}
    \caption{(a): Basic mechanism of the Cesa-Pichler method~\cite{Cesa_2023} to move quantum information along the qubit wires. $U_{\mathcal{A}}$ and $U_{\mathcal{B}}$ represent $X$-gates on the respective qubit species $\mathcal{A}$ (blue) and $\mathcal{B}$ (red), conditional on being blockaded by its neighboring qubits; (b): arrangement for the application of entangling gates, where a $2\pi$-pulse is applied to a qubit of species $\mathcal{C}$ (purple); (c):  qubit arrangement for the application of single-qubit gates in the Cesa-Pichler method, where the gates are executed when the logical qubit is located on a qubit of species $\mathcal{D}$ (green); (d): schedule $\Gamma_t$ consisting of a concatenation of $\Gamma$-pulses over a time $\tau$ (red shaded), in between the $\Gamma$-pulses, the system evolves under $H_Z$ only, i.e. $\Gamma_t = 0$ (gray shaded), resulting in phase gates of time $\tau_i$.}
    \label{fig:BasicScheme}
\end{figure*}
Here, we give a high level overview of the proof, while we refer to the appendix for the technical details. We show the polynomial equivalence by mapping a given quantum circuit from the gate model to a $H_Z$ and $\Gamma_t$ for the global TFIM. To this end, we leverage the method proposed by Cesa and Pichler~\cite{Cesa_2023}, which we will refer to as the CP-method. We will make use of the core ideas proposed in~\cite{Cesa_2023}, but with minor modifications which are more suited for the global TFIM.

\subsection{Universal quantum computation with global control}
Cesa and Pichler propose a method for mapping quantum circuits consisting of $p$ gates acting on $n_q$ qubits to an arrangement of $n_q' = \mathcal{O}(n_q^2)$ Rydberg atoms, together with a sequence of $p' = \mathcal{O}(n_qp)$ global pulses that act on two distinct types of atoms. The two subsets of qubits are denoted by $\mathcal{A}$ and $\mathcal{B}$, respectively. In their proposal, four atoms of each type can be grouped together to form \textit{superatoms}. However, as we will explain below, here we replace the superatoms with two additional groups of qubits $\mathcal{C}$ and $\mathcal{D}$.\\
We first introduce the key realization by Cesa and Pichler, which is that using blockade interactions, it is possible to move logical information along wires of physical qubits in both directions. We then connect wires via qubits of a distinct type, which allows for the implementation of logical entangling gates. Ultimately, we discuss how impurities within a wire can be used to implement logical single-qubit gates.

\subsubsection{Logical qubits on physical wires}
\label{sec:PropagationSequence}
The qubits form a bipartite simple graph $\mathcal{G}$, where qubits of type $\mathcal{A}$ and $\mathcal{C}$ only have neighbors of type $\mathcal{B}$ or $\mathcal{D}$. The edges $E(\mathcal{G})$ of $\mathcal{G}$ correspond to diagonal interactions via the Hamiltonian
\begin{equation}
    \begin{aligned}
        H_J = \frac{J}{4} \sum_{(m,n) \in E(\mathcal{G})} (1-Z_m)(1-Z_n) \ .
    \end{aligned}
    \label{eq:DefHJ}
\end{equation}
In the limit of $J\rightarrow \infty$, these interactions prohibit two adjacent qubits to be in the excited state $\ket{1}$ at the same time, a phenomenon also called the Rydberg blockade. Pulses applied to any of the types $\mathcal{K} \in \{ \mathcal{A}, \mathcal{B}, \mathcal{C}, \mathcal{D} \}$ only act non-trivially if the neighboring qubits are in the ground state. This blockade mechanism is exploited to propagate quantum information along wires of qubits using a particular pulse sequence, as depicted in Figure~\ref{fig:BasicScheme}(a). Each wire in the so-called standard configuration is divided in two phases: a $\mathbb{Z}_2$-ordered phase, where the qubits are in classical states $\ket{...01010101...}$ and a disordered phase where all qubits are in the ground state $\ket{...0000...}$. The logical qubit is encoded in the physical qubit at the boundary of these two phases and can be in any qubit state. In Figure~\ref{fig:BasicScheme}(a), this qubit is annotated with a quantum state $\ket{\psi} = \psi_0 \ket{0} + \psi_1 \ket{1}$. We therefore write the state of the wire in the standard configuration as
\begin{equation}
    \ket{\psi_k} = |\underbrace{..1010}_{k-1-\text{times}} \psi 00...0 \rangle \ ,
    \label{eq:ExampleState}
\end{equation}
where the logical state $\ket{\psi}$ is located at physical qubit $k$. Let us assume for this example that the logical information is located at a $\mathcal{B}$-type qubit, which are depicted in red in Figure~\ref{fig:BasicScheme}(a). By applying a certain sequence of conditional $X$-gates to the $\mathcal{A}$ and $\mathcal{B}$ qubits in an alternating fashion, denoted as $U_{\mathcal{A}}$ and $U_\mathcal{B}$, the logical qubit can be moved to the right, as shown in Figure~\ref{fig:BasicScheme}(a), or to the left, if the alternating sequence were to start with $U_\mathcal{B}$. We will call these sequences the propagation sequence.\\
Consider the gate $U_\mathcal{A}$ to be an $X$-gate applied to all qubits of type $\mathcal{A}$ if none of their respective neighbors are in the excited state. Since the logical state is located at a $\mathcal{B}$-qubit and all $\mathcal{B}$-qubits in the $\mathbb{Z}_2$-ordered phase to the left of the interface are excited, $U_\mathcal{A}$ acts trivially on all $\mathcal{A}$-qubits to the left of the interface. At the same time, all $\mathcal{A}$-qubits in the disordered phase at sites $k+3$ or larger are not blockaded by their neighbors and are free to transition from $\ket{0}$ to $\ket{1}$. Only the $\mathcal{A}$-qubit at location $k+1$ is blockaded if the $\mathcal{B}$-qubit at $k$ is in $\ket{1}$ and free otherwise. Therefore, the full state will transition to
\begin{equation}
    \begin{aligned}
        &U_\mathcal{A} \ket{\psi_k} = U_\mathcal{A} |\underbrace{..1010}_{k-1-\text{times}} \psi 00...0 \rangle \\
        &= \psi_0 |\underbrace{..1010}_{k-1-\text{times}} 01 \ 0101... \rangle + \psi_1 |\underbrace{..1010}_{k-1-\text{times}} 10 \ 0101... \rangle \ .
    \end{aligned} 
\end{equation}
Note that this results in a state that may entangle qubits $k$ and $k+1$, while all other qubits remain in a classical product state. Applying now $U_\mathcal{B}$, a conditional $X$-gate to all $\mathcal{B}$-qubits, will in turn de-excite all $\mathcal{B}$-qubits in the $\mathbb{Z}_2$-ordered phase to the left of the interface, i.e. at locations $k-2$ or smaller, while all $\mathcal{B}$-qubits to the right of the interface are blockaded. Only the $\mathcal{B}$ at position $k$ will de-excite if it originally was excited, or remain in its ground state due to the blockade by the interaction with the $\mathcal{A}$-qubit at $k+1$. This gives the state
\begin{equation}
    \begin{aligned}
        &U_\mathcal{B}U_\mathcal{A} \ket{\psi_k}\\
        &=\psi_0 |\underbrace{..0000}_{k-1-\text{times}} 01 \ 0101... \rangle + \psi_1 |\underbrace{..0000}_{k-1-\text{times}} 00 \ 0101... \rangle \ ,
    \end{aligned}
    \label{eq:NonStandardConfig}
\end{equation}
which means that we have moved the logical information one site to the right and applied an $X$-gate to it. Furthermore, note that the $\mathbb{Z}_2$-ordered phase is now on the right of the interface, so the state in Eq.~\eqref{eq:NonStandardConfig} is not yet in the standard configuration. The standard configuration can be recovered by again applying an $U_\mathcal{A}$-gate, which results in
\begin{equation}
    \begin{aligned}
        &U_\mathcal{A} U_\mathcal{B} U_\mathcal{A} \ket{\psi_k}\\
        &=\psi_0 |\underbrace{..0101}_{k-1-\text{times}} 00 \ 0000... \rangle + \psi_1 |\underbrace{..0101}_{k-1-\text{times}} 01 \ 0000... \rangle \\
        &=\psi_0 |\underbrace{..01010}_{k-\text{times}} \ 0 \ 00... \rangle + \psi_1 |\underbrace{..01010}_{k-\text{times}} \ 1 \ 00... \rangle \\
        &= |\underbrace{..01010}_{k-\text{times}} \psi 00... \rangle \\
        &= \ket{\psi_{k+1}} \ .
    \end{aligned}
\end{equation}
Now, the logical state $\ket{\psi}$ is located at qubit $k+1$, which is of type $\mathcal{A}$. By the same argument presented here, it is now easy to show that the sequence $U_\mathcal{B} U_\mathcal{A} U_\mathcal{B}$ produces the state $\ket{\psi_{k+2}}$. Similarly, applying $U_\mathcal{A} U_\mathcal{B} U_\mathcal{A}$ would move the logical state back to $\ket{\psi_k}$. Thus, by alternatingly applying $U_\mathcal{A}$ and $U_\mathcal{B}$, the logical information can be moved left and right along the wire, while skipping $U_\mathcal{A}$ or $U_\mathcal{B}$, depending on the current location of the logical information, changes the direction of movement.\\

We will introduce one wire for every logical qubit. The wires will have the same ordering of $\mathcal{A}$- and $\mathcal{B}$-qubits, and the logical information will be located at the same site $k$. Then, applying the propagation sequence globally will move all logical information along the wires in parallel. This arrangement is depicted in Figure~\ref{fig:UniversalConfig}. There are two more qubit types in this arrangement which allow for processing of the logical information, as we will now demonstrate. We will first show how to entangle qubits on distinct wires. Note that in the qubit arrangement in Figure~\ref{fig:UniversalConfig} there are impurities not only between the wires but also within. These are the $\mathcal{D}$-qubits (green), which we will use to apply logical single-qubit gates, as explained in the following. The $\mathcal{D}$-qubits always replace a $\mathcal{B}$-qubit in the chain and are, for the purpose of the propagation sequence, considered to be $\mathcal{B}$-qubits, which means that when $U_\mathcal{B}$ is applied, we apply a conditional $X$-gate to the $\mathcal{B}$- and $\mathcal{D}$-qubits at the same time.

\subsubsection{Logical entangling gates between wires}
\label{sec:LogicalEntanglingGates}
In order to process many-qubit states, each logical qubit is mapped to its own wire of physical qubits. Each wire is of the same length, and the logical information moves on each wire in parallel to the left or to the right, according to the propagation sequence. Impurities in the wires, as well as between the wires, allow one to apply logical single-qubit and entangling gates. The placement of the impurities is shown in Figure~\ref{fig:BasicScheme}(b) and (c). The impurities consist of qubits of the types $\mathcal{C}$ and~$\mathcal{D}$.\\
The qubits of type $\mathcal{C}$ are placed between the wires, as depicted in Figure~\ref{fig:BasicScheme}(b). They are not driven during the propagation sequence and remain in their initial ground state $\ket{0}$. They are driven only when entangling gates are applied to the logical qubits on the adjacent wires. When the logical qubit on each wire is adjacent to the $\mathcal{C}$-qubit, we can apply a $2\pi$-pulse to the $\mathcal{C}$-qubit. If one or both of the adjacent qubits are in the excited state, the $\mathcal{C}$-qubit is blockaded and the $2\pi$-pulse acts trivially. However, if both adjacent qubits are in the ground state, the $\mathcal{C}$-qubit can perform the $2\pi$-rotation. This will move the $\mathcal{C}$-qubit back to $\ket{0}$, but the state will pick up a phase $-1$. Thus, the $2\pi$-pulse on the $\mathcal{C}$-qubits acts as a logical controlled phase gate $\mathbb{I} - 2\ket{00}\bra{00}$.\\
When considering this technique in the setting of more than two logical qubits, as depicted in Figure~\ref{fig:UniversalConfig}, the arrangement of qubits is such that at most two wires are connected by a $\mathcal{C}$-qubit at the position of the information carrying interface. Then, the $\mathcal{C}$-qubits in the $\mathbb{Z}_2$-ordered phase are trivially blockaded and do not contribute a phase, while the $\mathcal{C}$-qubits in the disordered phase are trivially free to perform the $2\pi$-rotation and thus only contribute inconsequential global phases.

\subsubsection{Logical single-qubit gates}
\label{sec:Logical1QBGates}
The second type of impurity, in our case qubits of type $\mathcal{D}$, is introduced to the wires, where they replace a $\mathcal{B}$-qubit. If we propagate the logical information along the wire, the $\mathcal{D}$-qubits are driven with a conditional $X$-gate together with the $\mathcal{B}$-qubits. In order to implement logical single-qubit gates, we proceed as follows. Assume the logical information is located at a $\mathcal{D}$-qubit at location $k$ on the wire where we want to apply a gate. The information on the other wires is then located on $\mathcal{B}$-qubits by design, as shown in Figure~\ref{fig:UniversalConfig}. By applying a $2\pi$-pulse to the blue $\mathcal{A}$-qubits, the $\mathcal{A}$-qubits in the $\mathbb{Z}_2$-ordered part of the wire are all blockaded and thus the pulse acts as the identity. The $\mathcal{A}$-qubits in the disordered phase at locations $k+3$ or larger of the wires are all free to rotate and thus will pick up inconsequential global phases. Only the $\mathcal{A}$-qubits at location $k+1$ to the immediate right of the interface, where the logical qubits are located, will be blockaded if the logical qubit is in $\ket{1}$ and the $2\pi$-pulse will act as the identity, or the logical qubit is in $\ket{0}$ and the $\mathcal{A}$-qubit can perform the $2\pi$-rotation, resulting in a phase $-1$. Therefore, the $2\pi$-pulse on the $\mathcal{A}$-qubits acts as $-Z$-gates on all the logical qubits. We can use this logical global phase gate denoted by
\begin{equation}
    Z_{\text{total}} = \bigotimes_\nu (-Z) \ ,
\end{equation}
where the index $\nu$ runs over all logical qubits, to single out an impurity at the interface and apply single-qubit gates.\\
Consider the rotation gates
\begin{equation}
    \begin{aligned}
        R_X(\alpha) = \cos(\alpha /2) \mathbb{I} - i \sin(\alpha/2) X  \ , \\
        R_Y(\alpha) = \cos(\alpha /2) \mathbb{I} - i \sin(\alpha/2) Y \ .
    \end{aligned}
\end{equation}
We can realize single-qubit gates if the interface is located at a $\mathcal{D}$-qubit for the wire we wish to manipulate, while the interface on all other wires is located at a $\mathcal{B}$-qubit. Then we apply $R_X(\alpha)$ to the $\mathcal{D}$-qubits, followed by $Z_{\text{total}}$ and the inverse rotation $R_X(-\alpha)$ again applied to the $\mathcal{D}$-qubits, where $Z_{\text{total}}$ is applied via a $2\pi$-rotation on the $\mathcal{A}$-qubits, as described above. The $\mathcal{A}$-qubits in the $\mathbb{Z}_2$-phase are all blockaded, so the $2\pi$-pulse acts as an identity, while in the disordered phase, each $\mathcal{D}$-qubit has two $\mathcal{A}$-neighbors by construction, thus the phases $-1$ from both qubits' rotations cancel. Only the $\mathcal{A}$-qubit at the immediate right of the interface experiences the $2\pi$-pulse and thus the $\mathcal{D}$-qubit at the interface experiences a $-Z$ gate. As a consequence, the $\mathcal{D}$-qubits in the $\mathbb{Z}_2$-ordered and the disordered phase experience $R_X(-\alpha)R_X(\alpha) = \mathbb{I}$, while the $\mathcal{D}$-qubit at the interface experiences $R_X(-\alpha) Z R_X(\alpha) = ZR_X(2\alpha)$. The $\mathcal{B}$-qubits at the interface also experience the global phase gate $Z_{\text{total}}$. This can be corrected by applying $Z_{\text{total}}$ again after the $R_X$ acting on the $\mathcal{D}$-qubits, which will also correct the superfluous $Z$ acting on the $\mathcal{D}$-qubit at the interface. In summary, the steps
\begin{enumerate}
    \item apply $R_X(\alpha)$ to all $\mathcal{D}$-qubits
    \item apply $Z_{\text{total}}$ (i.e. $2\pi$-rotation on all $\mathcal{A}$-qubits)
    \item apply $R_X(-\alpha)$ to all $\mathcal{D}$-qubits
    \item apply $Z_{\text{total}}$
\end{enumerate}
will act trivially on all qubits, except the $\mathcal{D}$-qubit at the interface, where it acts as a $R_X(2\alpha)$-gate. The same argument holds for $R_Y(\alpha)$, which allows us to apply $R_Y(2\alpha)$. Since any single-qubit gate can be decomposed into sequences of $X$- and $Y$-rotations, arbitrary single-qubit gates can be realized acting only on the $\mathcal{D}$-qubit at the interface. Therefore, together with the logical controlled phase gate, a universal set of quantum gates on the logical qubits can be realized.

\subsubsection{Initialization}
\label{sec:InitializationSequence}
So far, we have assumed that the wire is in the standard configuration, where the logical quantum information is located at the interface of a $\mathbb{Z}_2$-ordered and a disordered phase, where all qubits are in the state $\ket{0}$. However, in the beginning of each computation, all qubits are in the ground state $\ket{0}$. If the first qubit in a wire is of type $\mathcal{D}$, it only has one neighbor of type $\mathcal{A}$ and thus single-qubit gates can be applied targeted to the first qubit. This works by the same procedure as above, since all $\mathcal{D}$-qubits within the wires have two neighbors of type $\mathcal{A}$, so that only the $\mathcal{D}$-qubits at the beginning of the wire experience the phase gates $Z_{\text{total}}$.

The ability to apply single-qubit gates to the first $\mathcal{D}$-qubit at location $k=1$ allows to initialize the wire. First, we excite the qubits at $k=1$ and move the excitations to $k=3$ via the propagation sequence. Then, we excite the qubit at $k=1$ again, and apply again the propagation sequence. This produces the state $\ket{10101000...}$ on each wire. Logical single-qubit gates can be applied once the interface is located at $k=5$. If the interface was at $k=3$, single-qubit gates would not be possible, since the $\mathcal{A}$-qubit at $k=2$ would have two $\mathcal{D}$-neighbors, which hinders the application of $Z_{\text{total}}$ to the $\mathcal{D}$-qubit at the interface only.

\subsubsection{Universal arrangement}
These considerations allow us to arrange the qubits in a universal arrangement, as depicted in Figure~\ref{fig:UniversalConfig}. Here, an example for four logical qubits is shown. For any number of logical qubits $n_q$, we refer to this graph as the universal arrangement $\mathcal{G}$. In the configuration proposed by Cesa and Pichler, to simulate $p$ quantum gates acting on $n_q$ qubits, there are $n_q$ wires each representing one logical qubit. Each wire needs to be $\mathcal{O}(n_q)$ qubits long to be able to create the staggered configuration of $\mathcal{D}$-qubits between the wires to realize logical entangling gates. For each of the $p$ gates, it is required to move the logical information along the wires to the respective impurities. This requires $\mathcal{O}(n_q)$ pulses from the propagation sequence to move to the appropriate location for each of the $p$ logical gates. Hence, the CP-method requires $n_q' = \mathcal{O}(n_q^2)$ physical qubits and $p' = \mathcal{O}(p n_q)$ pulses. The universal arrangement $\mathcal{G}$ corresponds to $n_q$ logical qubits on a one-dimensional chain with nearest-neighbor interactions. Interactions between qubits that are further apart must be compiled using SWAP gates.

\subsubsection{Summary of the CP-method}
In summary, the CP-method allows to simulate a given quantum circuit $U_{\text{target}}$ consisting of $p$ gates acting on $n_q$ qubits exactly by following these steps on a device consisting of $n_q'=\mathcal{O}(n_q^2)$ physical qubits arranged on a graph $\mathcal{G}$ consisting of wires and impurities, as shown in Figure~\ref{fig:UniversalConfig}:
\begin{enumerate}
    \item Given a logical quantum circuit
    \begin{equation}
        U_{\text{target}} = U_p U_{p-1} ... U_1
    \end{equation}
    consisting of $p$ gates acting on $\ket{0}^{\otimes n_q}$. Compile each of the logical gates $U_i$ into $R_X$ and $R_Y$, as well as controlled phase gates $\mathbb{I} - 2\ket{00} \bra{00}$ on an open chain of qubits with nearest-neighbor interactions.
    \item Initialize the device in $\ket{0}^{\otimes n_q'}$, where $n_q' = \mathcal{O}(n_q^2)$. 
    \item For each logical gate obtained from step 1, apply the propagation sequence from Section~\ref{sec:PropagationSequence} to move the interface to the respective impurity on or between wires and apply the logical quantum gate as described in Section~\ref{sec:LogicalEntanglingGates} and Section~\ref{sec:Logical1QBGates}. Prepend the pulses required to run the initialization sequence described in Section~\ref{sec:InitializationSequence}. This gives a sequence of $p' = \mathcal{O}(pn_q)$ gates
    \begin{equation}
        U_{CP} = U_{\mathcal{I}_{p'},p'} U_{\mathcal{I}_{p'-1},p'-1} ... U_{\mathcal{I}_{1},1} 
    \end{equation}acting on subsets of qubits $\mathcal{I}_i \in \{ \mathcal{A}, \mathcal{B}, \mathcal{C},\mathcal{D}, \mathcal{B} \cup \mathcal{D} \}$. 
    \item After applying all the necessary gates, either move the logical information to a dedicated register for further processing (e.g. read-out) by appending the required gates to $U_{CP}$ or proceed using only the qubits where the logical information is located. We denote the procedure as the partial trace $\Tr_{RO}$ over all qubits that do not carry the logical information. Since the CP-method is exact, this will result in
    \begin{equation}
        \begin{aligned}
            &U_{\text{target}} \ket{0}^{\otimes n_q} \bra{0}^{\otimes n_q} U_{\text{target}}^\dagger = \\
            &\Tr_{RO} \left( U_{CP} \ket{0}^{\otimes n_q'} \bra{0}^{\otimes n_q'} U_{CP}^\dagger \right) \ .
        \end{aligned}
        \label{eq:CPMethodRO}
    \end{equation}
\end{enumerate}
Having introduced the CP-method, we will continue to argue how it can be implemented using the global TFIM.
\begin{figure}
    \centering
    \includegraphics[scale=.45]{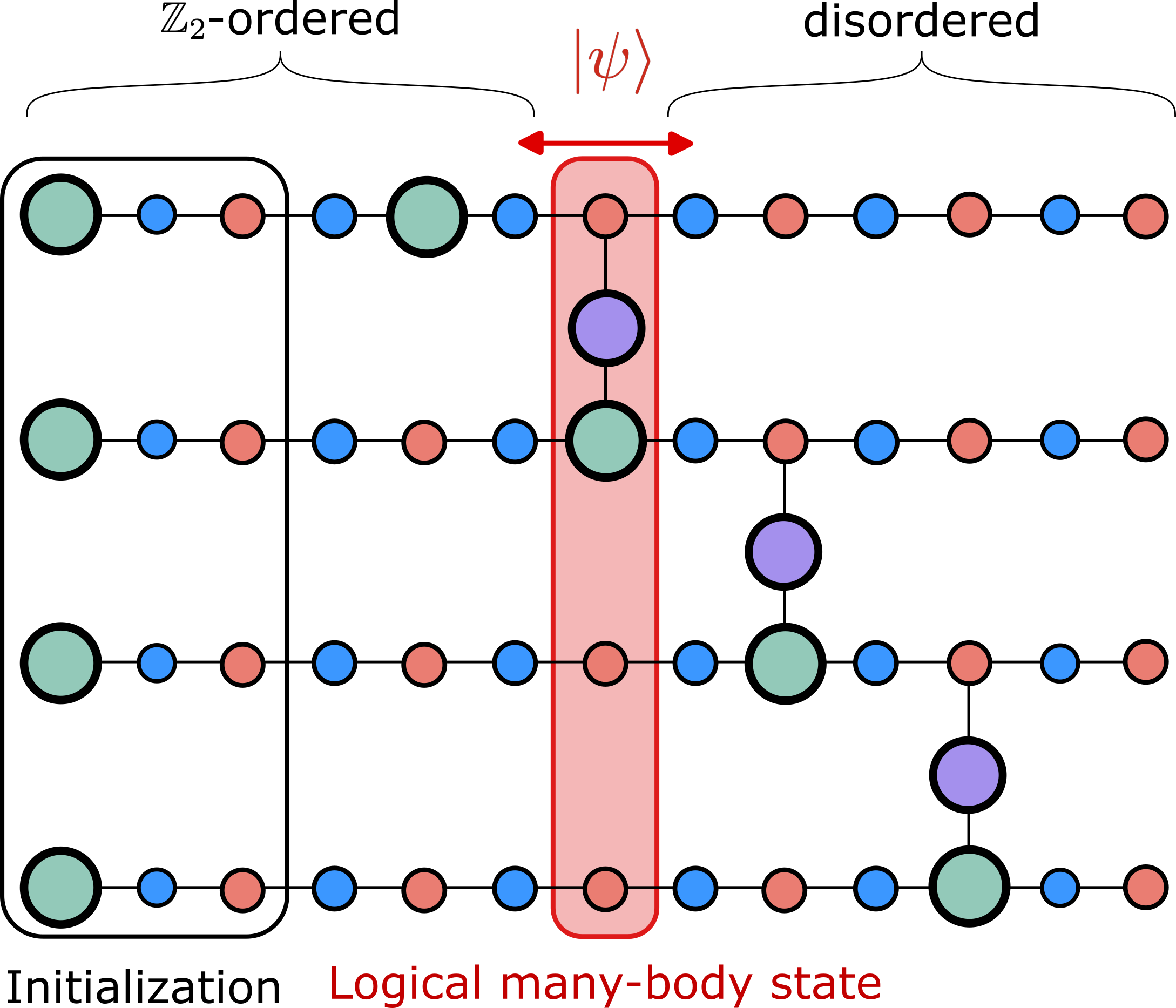}
    \caption{Graph $\mathcal{G}$ of the universal arrangement proposed by Cesa \& Pichler for a system of four logical qubits. The first three qubits (black box) on the left-hand side of each chain serve to initialize the standard configuration, while every chain has an impurity for single-qubit gates and a connection via an impurity to its neighbors for logical two-qubit gates. Then, depending on the order of pulses in the propagation cycle, the logical many-body state (red box) can move left and right, so that logical operations on any qubit and between neighboring pairs can be executed. This topology corresponds to a chain of five logical qubits with nearest-neighbor couplings. Interactions between arbitrary pairs can be compiled efficiently into logical SWAP gates.}
    \label{fig:UniversalConfig}
\end{figure}

\subsection{Polynomial equivalence of the global TFIM and the gate model of quantum computation}
Here we outline the proof of Theorem~\ref{theorem:MainResult}. The full technical detail is given in the appendix. Although we have introduced the CP-method in the previous sections with minor modifications, we will deviate further in two key points:
\begin{enumerate}
    \item The CP-method assumes the ability to target the qubit groups $\mathcal{A}$ and $\mathcal{B}$ and their superatoms with specific pulses, which we have modified to assume four independently addressable qubit groups $\mathcal{A}$, $\mathcal{B}$, $\mathcal{C}$ and $\mathcal{D}$, where arbitrary single-qubit gates can be applied to each group. However, for the global TFIM, we can only assume a fully global drive via the homogeneous transverse fields, as illustrated in Figure~\ref{fig:CPvsTFIM}. We will show how the required group-wise control can be achieved under global control.
    \item Cesa and Pichler consider the limit of infinitely strong blockade interactions $|J|$, while we consider finite fields and couplings. For a rigorous claim to polynomial equivalence, it is required to show that the interaction strength is at most polynomial in the system size $n_q$ and circuit depth $p$.
\end{enumerate}
Showing that the CP-method can be implemented on the global TFIM given these two restrictions is our main analytical contribution. We first show how to reach universality in the limit of infinite blockade interactions $|J|$, and in a second step bound the errors due to finite $|J|$.
\begin{figure}
    \centering
    \includegraphics[scale=.4]{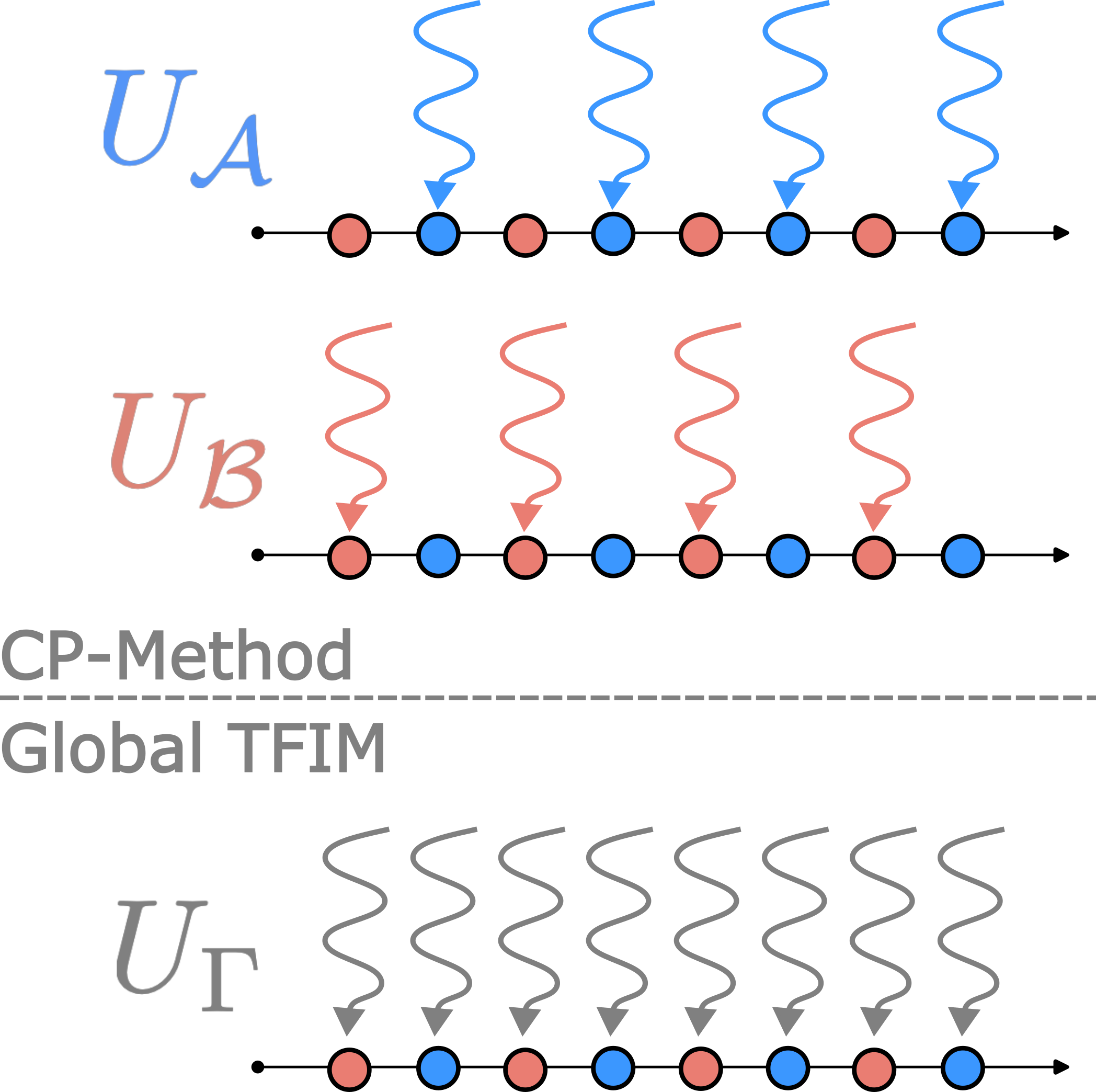}
    \caption{Illustration of the assumptions of semi-global control as per the Cesa-Pichler method (top) and the global drive due to the transverse field generating $\Gamma$-pulses on all qubits simultaneously (bottom).}
    \label{fig:CPvsTFIM}
\end{figure}

\subsubsection{Group-level control from a global transverse field}
The qubit types $\mathcal{K} \in \{ \mathcal{A}, \mathcal{B}, \mathcal{C}, \mathcal{D} \}$ can be distinguished by introducing distinct longitudinal local fields to the Hamiltonian $H_Z$, where the $Z_n$ coefficient is $h_\mathcal{K}$ if $n\in\mathcal{K}$. Thus, the diagonal part of the Hamiltonian $H_Z$ reads
\begin{equation}
    \begin{aligned}
        H_Z =& \sum_{\mathcal{K} \in \{ \mathcal{A}, \mathcal{B}, \mathcal{C}, \mathcal{D} \}} \sum_{n \in \mathcal{K}} h_\mathcal{K} Z_n + H_J \ ,
    \end{aligned}
    \label{eq:RydbergHamiltonian}
\end{equation}
where $H_J$ is defined as in Eq.~\eqref{eq:DefHJ}. It is easy to see that this is a particular choice of $H_Z$ in Eq.~\eqref{eq:DefHz} up to a constant inconsequential offset.\\
We design the transverse field $\Gamma_t$ to drive the qubits. We construct $\Gamma_t$ from a sequence of what we call $\Gamma$-pulses. A $\Gamma$-pulse is given by the unitary
\begin{equation}
    U_{0,\Gamma} = \mathcal{T}\exp \left( -i \int_0^\tau ( H_Z + \Gamma_t H_X ) dt\right) \ ,
\end{equation}
where over some fixed amount of time $\tau$, $\Gamma_t \neq 0$. Between the $\Gamma$-pulses, we let the system evolve under $H_Z$ only, i.e. $\Gamma_t = 0$. By choosing the waiting times $\tau_i$ between the $\Gamma$-pulses, we gain universal single-qubit control of the distinct qubit groups $\mathcal{K}$, as per Lemma~\ref{lemma:UniversalSU2}~\cite{Werner_2026}, i.e. we can apply arbitrary single-qubit gates to all qubits of group $\mathcal{K}$. This is possible if we choose the $h_\mathcal{K}$ to be $\mathbb{Q}$-linearly independent algebraic numbers, i.e. algebraic numbers not related by rational linear combinations. Then, we are able to quasi-independently control the phases each qubit group picks up during the waiting time, up to an arbitrarily small error $\varepsilon_{\text{phase}} > 0$, as per Theorem~\ref{theorem:FillingTime}~\cite{Bourgain_1998, Berti_2003} in Appendix~\ref{sec:AppendixAuxLemmas}. The cost is that the waiting times need to be chosen $\tau_i = \mathcal{O}(\varepsilon_{\text{phase}}^{-3-\nu})$ for any $\nu > 0$, where the choice of $\nu$ changes the prefactor of the asymptotic scaling. In fact, this method of obtaining individual control is the main factor driving up the exponents in the time overhead of this reduction. Note that Lloyd used this technique to single out individual terms in the Hamiltonian to generate the dynamics of the quantum cellular automaton~\cite{Lloyd_2018}. In contrast, we will use it to generate single-qubit gates independently from a global pulse. While $U_{0, \Gamma}$ stimulates transitions of the qubits that are not blockaded by their neighbors, a sequence of $U_{0,\Gamma}$ with controllable phases between them allows to construct arbitrary single-qubit gates on non-blockaded qubits. An illustrative example of a $\Gamma_t$ consisting of $\Gamma$-pulses for times $\tau$ and waiting times $\tau_i$ between them is shown in Figure~\ref{fig:BasicScheme}(d).\\
In the limit of $|J|\rightarrow \infty$, $H_J$ energetically forbids transitions between the subspace where all blockade constraints are respected, i.e. the subspace spanned by computational basis states $\ket{z}$ where no two adjacent qubits are in $\ket{1}$, and the subspace where at least one constraint is violated. The subspace where all blockade constraints are respected is denoted by $\mathcal{P}$, while the projector onto this subspace is given by $P$. In Appendix~\ref{sec:ComputationalSubspaceDefinition} we characterize $P$ further. Here, it suffices to realize that for $|J|\rightarrow \infty$, the unitary generated by $H_t$ reads
\begin{equation}
    \begin{aligned}
        U_{\Gamma} &= \mathcal{T} \exp \left( -i\int_0^\tau P H_t P dt \right) \\
        &= \mathcal{T} \exp \left( -i\int_0^\tau  \left( \sum_n h_n Z_n + \Gamma_t P X_n P \right) dt \right) \ ,
    \end{aligned}
    \label{eq:GammaPulseMainTextDefinition}
\end{equation}
since $PH_JP = 0$ and $[P, Z_n] = 0$. The $\Gamma$-pulse in Eq.~\eqref{eq:GammaPulseMainTextDefinition} is used to stimulate transitions between $\ket{0}$ and $\ket{1}$ for each qubit conditioned on the state of its neighbors. Letting the system evolve under $H_Z$, i.e. $\Gamma_t = 0$ for a time $\tau_i$ realizes
\begin{equation}
    U_Z(\tau_i) = \exp \left( -i \tau_i H_Z\right) \ ,
\end{equation}
which acts on the computational subspace $\mathcal{P}$ as a phase gate $\exp \left( -i \tau_i \sum_n h_n Z_n\right)$. In Appendix~\ref{sec:AppendixProofOfMainResult}, we argue that sequences of $\Gamma$-pulses and phase gates $U_Z$ can realize arbitrary single-qubit gates for each of the qubit groups conditioned on the state of the neighboring qubits. This is shown in Lemma~\ref{lemma:UniversalGateSequence}. Concretely, we can construct a sequence of $\Gamma$-pulses and phase gates that for any gate from the CP-method $U_{\mathcal{I}_i, i}$ act on the computational subspace $\mathcal{P}$ like $U_{\mathcal{I}_i, i}$ up to a controllable error for some appropriate sequence length $N$, i.e.
\begin{equation}
    \begin{aligned}
        &\left\| \left(U_Z(\tau_0) \prod_{i=1}^N U_\Gamma U_Z(\tau_i) - U_{\mathcal{I}_i, i} \right)P \right\| \\
        &= \mathcal{O}(n_q' \varepsilon_{\text{phase}}/ \Gamma^* + n_q'^2 \Gamma^*) \ ,
    \end{aligned}
    \label{eq:PerGateErrorMainText}
\end{equation}
where $\Gamma^* = \max_{t\in [0, \tau]} |\Gamma_t|$ is the maximum of the $\Gamma$-pulse,
which must be chosen carefully to control the error, as we discuss now. In Appendix~\ref{sec:AppendixProofOfMainResult}, we will also argue that $\varepsilon_{\text{phase}}$ can be controlled independently from $\Gamma^*$ and therefore the error can be made arbitrarily small.\\
The fully global drive $U_\Gamma$, in comparison to the microwave pulses on distinct groups, implies interactions between adjacent qubits, which results in unwanted blockade interaction between the qubits. This introduces an error into the time evolution. However, this error can be controlled by choosing the $\Gamma$-pulse such that $\Gamma^*$ is small, i.e. we require precise control, where $\Gamma_t$ only deviates slightly from zero. This way, the amplitude of exciting each qubit is small, $\mathcal{O}(\Gamma^*)$, and the amplitude of exciting adjacent qubits, and thus the error, is even smaller, namely $\mathcal{O}(\Gamma^{*2})$. Let us denote with $\mathcal{I}$ the set of qubits to be driven. Note that for the CP-method, $\mathcal{I} \in \{ \mathcal{A}, \mathcal{B}, \mathcal{C}, \mathcal{D}, \mathcal{B} \cup \mathcal{D} \}$, while we denote its complement by $\mathcal{I}^c$. By appropriately choosing the phases acting on $\mathcal{I}$ and $\mathcal{I}^c$ respectively, a second $\Gamma$-pulse can de-excite the qubits in $\mathcal{I}^c$, while amplifying the drive on the qubits in  $\mathcal{I}$. We show this rigorously in Lemma~\ref{lemma:SecondStep} in Appendix~\ref{sec:AppendixProofOfMainResult}. The mechanism is illustrated in Figure~\ref{fig:GlobalDriveIssue}. Thus, we can compile all the gates required by the CP-method in sequences of $\Gamma$-pulses and appropriately chosen phases. As we argue in Appendix~\ref{sec:Appendix:UniversalSU2} based on Lemma~\ref{lemma:UniversalSU2}~\cite{Werner_2026}, this compilation can be done efficiently.\\
On the other hand, reducing $\Gamma^*$ also implies that we need longer concatenations of $\Gamma$-pulses to realize the required gates, i.e. $N$  grows. However, using Lemma~\ref{lemma:UniversalSU2}, we show that we require $N = \mathcal{O}(\Gamma^{*-1})$ $\Gamma$-pulses, while the error scales as $\mathcal{O}(\Gamma^{*2})$, so overall the error vanishes as $\mathcal{O}(\Gamma^{*})$. Considering that the phase gates also have an error $\varepsilon_{\text{phase}}$, this results in the per-gate error shown in Eq.\eqref{eq:PerGateErrorMainText}.
\begin{figure}
    \centering
    \includegraphics[scale=.8]{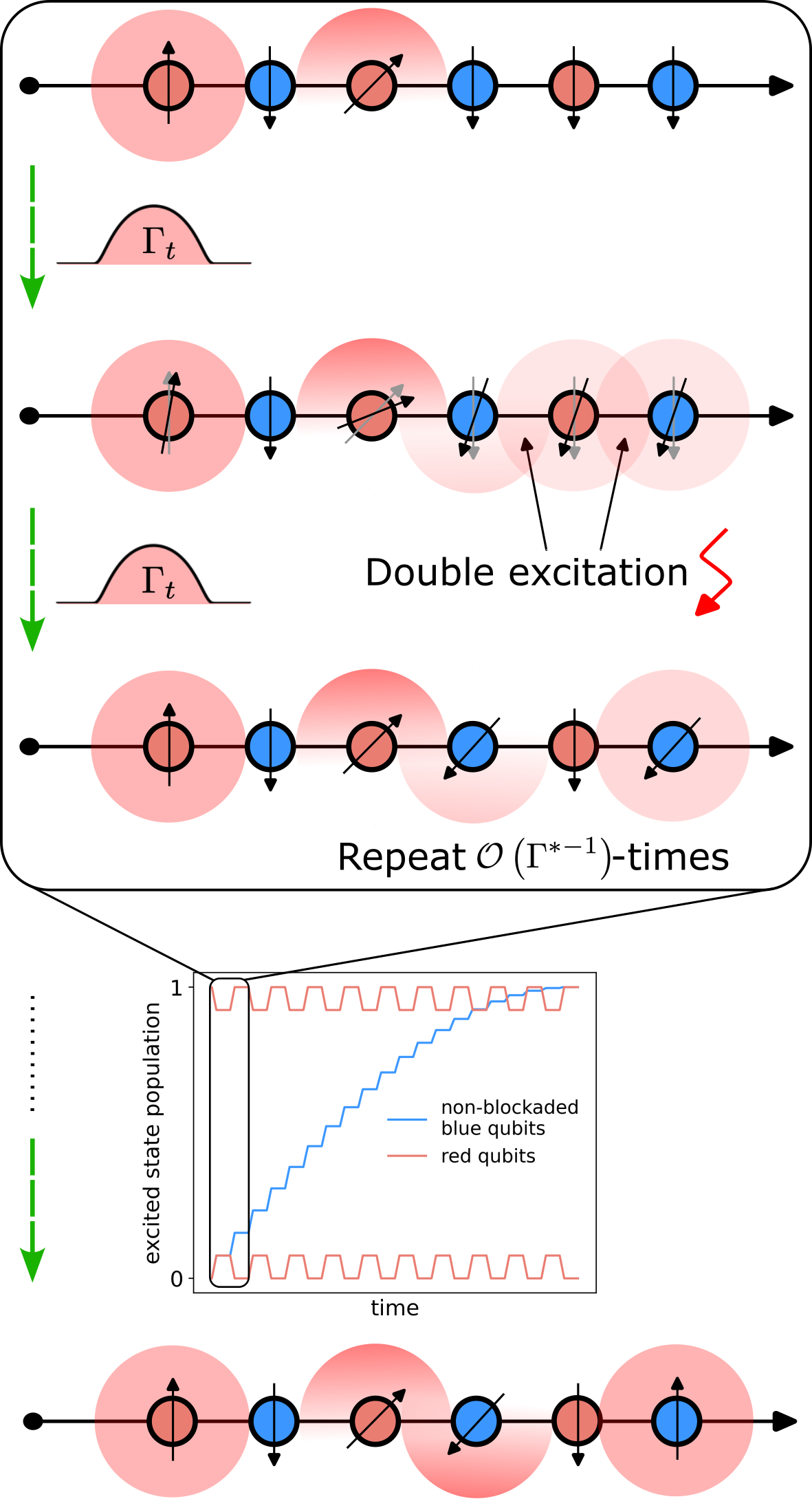}
    \caption{Illustration of the CP-method with a global drive. The black box in the main figure and the inset correspond to each other. In the CP-method it is assumed that distinct groups of atoms can be driven independently. Here, we only have the ability to apply a transverse field to all qubits at the same time. Therefore, adjacent qubits blockade each other due to the global drive, resulting in errors due to these unwanted blockade interactions. We can recover individual control of non-adjacent qubit groups using specifically chosen local $Z$-fields and exploiting constructive and destructive interference. By appropriately chosen phase-gates between the $\Gamma$-pulses, the non-driven qubits can be immediately de-excited, while the drive on the driven qubits is amplified.}
    \label{fig:GlobalDriveIssue}
\end{figure}

\subsubsection{Error due to finite blockade interactions}
The last step is to show that the error resulting from finite $|J|$ can be controlled at a reasonable cost. This can be done using the Schrieffer-Wolff (SW) transformation~\cite{Schrieffer_1966, Bravyi_2011}. The SW transformation is given by a unitary $\exp(S_t)$ with a time-dependent anti-Hermitian generator $S_t$. Choosing $S_t$ appropriately, it is possible to approximately block-diagonalize a Hamiltonian into the computational subspace and its complement, provided that these subspaces are energetically separated. In this coordinate system, the Hilbert space decomposes into the computational subspace $\mathcal{P}$ and its complement $\mathcal{Q}$, where $\mathcal{Q}$ is spanned by the computational basis states that violate the blockade constraint. It is possible to choose $S_t$ such that its norm is bounded by $\mathcal{O}(|J|^{-1})$ (Lemma~\ref{lemma:SBounds}), which then allows one to prove a similar bound on the error between the effective Hamiltonian and the desired one (Lemma~\ref{lemma:HamiltonianErrorBound}). The resulting error in the time evolution is then easily bounded using Lemma~\ref{lemma:TimeEvolutionError}, from which, together with Lemma~\ref{lemma:FullConditionalGateWithError}, follows a bound on the error $\delta_{\text{gate}}$ for the compilation of a given gate $U_{\mathcal{I}_i,i}$ from the CP-method into $\Gamma$-pulses and phase gates
\begin{equation}
    \delta_{\text{gate}} = \mathcal{O}\left( n_q' \varepsilon_{\text{phase}} / \Gamma^{*} + n_q'^2 \Gamma^* + n_q' |J|^{-1} \right) \ ,
\end{equation}
where $\varepsilon_{\text{phase}}$ is the maximal error on the phase gates, which can be controlled by choosing the waiting times $\tau_i = \mathcal{O} \left( \varepsilon_{\text{phase}}^{-3-\nu}\right)$. Since the CP-method requires $p' = \mathcal{O}\left( p n_q\right)$ gates, this results in a total error $\delta_{\text{total}} = \mathcal{O}(p' \delta_{\text{gate}})$. Using that $n_q' = \mathcal{O}\left( n_q^2 \right)$, the error $\delta_{\text{total}}$ can then be guaranteed to remain below an acceptable error $\varepsilon$ by choosing the coupling $|J| = \mathcal{O} \left( p n_q^3 \varepsilon^{-1} \right)$ and the maximal transverse field $\Gamma^* = \mathcal{O} \left( p^{-1} n_q^{-5} \varepsilon \right)$. The required evolution time $T = \mathcal{O} \left( p^{8+2\nu} n_q^{30 + 8\nu} \varepsilon^{-7-2\nu} \right)$ can then be estimated by counting the number of $\Gamma$-pulses and phase gates. Since $n_q', T, |J| = \poly \left( n_q, p, \varepsilon^{-1} \right)$, this concludes the proof of polynomial equivalence.

\subsection{Summary of our method}
Our proposed constructive method proceeds as follows:
\begin{enumerate}
    \item Choose the longitudinal fields $h_\mathcal{K}$ for $\mathcal{K} \in \{ \mathcal{A}, \mathcal{B}, \mathcal{C}, \mathcal{D}\}$ to be algebraic $\mathbb{Q}$-linearly independent numbers, a strong coupling $J$ and the Ising Hamiltonian $H_Z$ with connectivity graph $\mathcal{G}$ (as illustrated in Figure~\ref{fig:UniversalConfig}).
    \item Given a quantum circuit, determine the pulse sequence according to the CP-method.
    \item Compile the computed CP-pulse sequence into global $\Gamma$-pulses and a sequence of waiting times $\tau_i$ to obtain the full schedule $\Gamma_t$. $\Gamma_t$ evolves the system for time $T$ according to Hamiltonian Eq.~\eqref{eq:DefFullHamiltonian}.
    \item Execute $U_T = \mathcal{T} \exp \left( -i\int_0^T H_t dt \right)$ with $H_t = \Gamma_t H_X + H_Z$.
    \item For appropriately chosen $n_q', T, |J| = \poly \left( n_q, p, \varepsilon^{-1} \right)$, this results in
    \begin{equation}
        \begin{aligned}
            &\left\| U_{CP} \ket{0}^{\otimes n_q'} - U_{T} \ket{0}^{\otimes n_q'}  \right\| \leq \frac{\varepsilon}{2} \ .
        \end{aligned}
    \end{equation}
    Together with Eq.~\eqref{eq:CPMethodRO} and the standard identity
    \begin{equation}
        \begin{aligned}
            &\| U \ket{\psi}\bra{\psi}U^\dagger - V \ket{\psi}\bra{\psi}V^\dagger \| \\
            &\leq 2 \| U - V\|
        \end{aligned}
    \end{equation}
    for any state $\ket{\psi}$ and any unitaries $U$ and $V$, this shows that $U_T$ simulates an arbitrary circuit $U_{\text{target}}$ up to an error $\varepsilon > 0$ on a subset of qubits.
\end{enumerate}
Using this procedure, an arbitrary quantum circuit can be simulated by a time-dependent global TFIM. The CP-method operates by applying individual gates to groups of qubits, where certain gates are inhibited by the Rydberg blockade. The purpose of step 1 is to allow for independent control of the groups of qubits. Step 2 derives a gate sequence according to the CP-method. Given the choices made in steps 1 and 2, we can take the gate sequence given by the CP-method and convert it into a sequence of waiting times $\tau_i$ between the $\Gamma$-pulses, which constitutes the schedule $\Gamma_t$. Executing this schedule, i.e. by applying the $\Gamma$-pulses and waiting times $\tau_i$ to an Ising Hamiltonian $H_Z$, the circuit can be simulated.\\
A central challenge of using the global drive instead of the bipartite drive from the CP-method, is that neighboring qubits get excited, resulting in erroneous blockade interactions. The solution is to operate in the limit of a weak field, i.e. $\Gamma^* \ll 1$. Then, it is possible to choose the phases between two consecutive $\Gamma$-pulses such that the qubits that are not meant to be driven are immediately de-excited, while the driven qubits experience a non-diagonal rotation. This is illustrated in Figure~\ref{fig:GlobalDriveIssue}. Since the qubits are now driven only weakly, it requires many more $\Gamma$-pulses to generate any arbitrary gate. However, as we show, the error scales with $\mathcal{O}(\Gamma^{*2})$, while the number of required pulses is $\mathcal{O}\left( \Gamma^{*-1} \right)$, resulting in a net suppression of the error at the cost of more evolution time.\\
It is worth highlighting that while we assume identical $\Gamma_t$ for each qubit type, this requirement is not strict. In fact, even if the global transverse field acts qubit-type dependent, i.e. each type $\mathcal{K}$ is driven by $\Gamma_{\mathcal{K},t}$ with $\max_{t\in[0, \tau]}|\Gamma_{\mathcal{K}, t}| = \Theta(\Gamma^*)$, the reduction still applies. These differences can arise in systems where the distinct qubit types experience the global transverse field slightly differently, e.g. due to crosstalk in superconducting flux qubits. Therefore, there is inherent robustness to control inconsistencies.

%% file: 05_numerical_validation.tex
\section{Numerical validation of key step}
\label{sec:NumericalValidation}
As a simple validation of one of the central steps in our proof, we simulate the key step of the mapping numerically. We consider the movement of logical information along the wire for a single step, as depicted in Figure~\ref{fig:BasicScheme}(a). We simulate a wire of length $L$ initialized in the state $\ket{0101000...}$ and test the propagation sequence to move the logical information, i.e. the phase interface, one site to the right. We apply a single propagation cycle compiled into many weak $\Gamma$-pulses and measure the fidelity $\mathcal{F}$ with the target state $\ket{1010100...}$. We run the sequence for various values of the blockade coupling $J$, for decreasing $\Gamma^{*}$ and for several system sizes $L$. Since in the simple wire there are only two types of qubits, we use the local fields $h_{\mathcal{K}} = \alpha, \alpha^2$, where $\alpha$ is given as the solution to $\alpha^5 - \alpha -1 = 0$. With these choices, we note that a single propagation cycle requires $\mathcal{O}(10^4)$~or even $\mathcal{O}( 10^5)$~$\Gamma$-pulses, which highlights the issues for practical implementation of our method.

We simplify the simulation by omitting the approximation of the phase gates using Theorem~\ref{theorem:FillingTime}, as this step would prolong the numerical simulations significantly and is also based on a well-established mathematical result. Therefore, we compile a sequence of $\Gamma$-pulses, as described in Appendix~\ref{sec:Appendix:UniversalSU2}, except that the phase gates are applied to the respective qubit type exactly, i.e. $\varepsilon_{\text{phase}} = 0$. Then, the central lemma (Lemma~\ref{lemma:FullConditionalGateWithError}) bounds the error due to the compilation into $\Gamma$-pulses by $\mathcal{O}\left( L^2 \Gamma^* + L |J|^{-1} \right)$. The error in Lemma~\ref{lemma:FullConditionalGateWithError} is given by the operator distance $\|U_{target} \ket{\psi_0} - U_T \ket{\psi_0} \|$ for a $\ket{\psi_0} \in \mathcal{P}$, which translates into a decay of the infidelity $1-\mathcal{F} = \mathcal{O}\left( L^4 \Gamma^{*2} \right)$, since for close unitaries, the infidelity scales with the square of the distance. The predicted decay of the infidelity as a function of $\Gamma^*$ can be seen in Figure~\ref{fig:NumericalResult}(a) for several values of $J$. As a guide to the eye, we show as a black dashed line the worst-case scaling from our bound. The numerical results match the predicted scaling closely.

In Figure~\ref{fig:NumericalResult}(b) we show the dependence of the infidelity on the chain length $L$, again for various values of $J$. The Lemma~\ref{lemma:FullConditionalGateWithError} predicts a quadratic dependence on $L$ as the worst case for the distance in operator norm, which translates into a worst case infidelity of $\mathcal{O}(L^4)$, which is shown as a black dashed line for comparison. Although we can only test small system sizes, we observe a milder dependence on $L$ than predicted. In fact, we observe a significantly sub-quartic scaling in simulations with large $J$. Therefore, one of the key steps in our proof, Lemma~\ref{lemma:FullConditionalGateWithError}, is supported by the numerical evidence presented here. All simulations were implemented using QuTiP~\cite{qutip5}.

\begin{figure}
    \centering
    \begin{tikzpicture}
        \node[anchor=north west, inner sep=0] (fig) at (0,0)
            {\includegraphics[width=0.9875\linewidth]{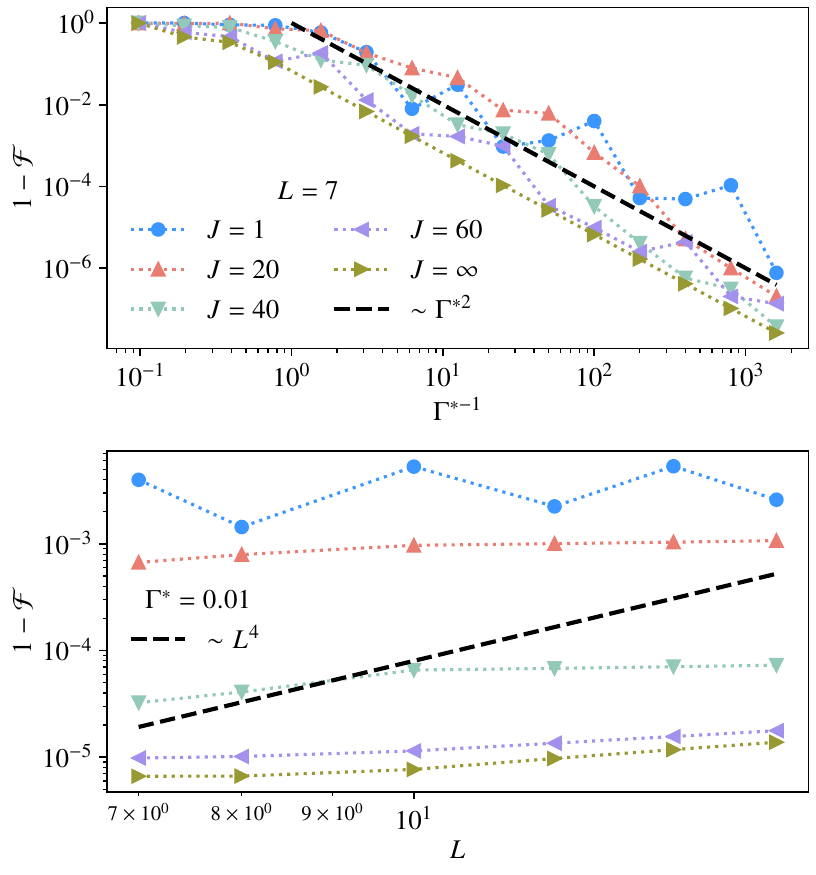}};
        \node[font=\large] at ($(fig.north west) + (fig.north west)!0.035!(fig.north east)$) {(a)};
        \node[font=\large] at ($(fig.north west)!0.5!(fig.south west) + (fig.north west)!0.035!(fig.north east)$) {(b)};
    \end{tikzpicture}
    \caption{Infidelity $1-\mathcal{F}$ of a single propagation cycle as shown in Figure~\ref{fig:BasicScheme}(a) compiled in global $\Gamma$-pulses as a function of the inverse maximal transverse field $\Gamma^{*-1}$~(a) and as a function of the system size~(b). The data in~(a) is generated with a fixed $L=7$, while in~(b) we fix the maximal transverse field at $\Gamma^* = 0.01$. The colors and markers indicate the same $J$ in both figures. $J=\infty$ corresponds to a simulation without blockade Hamiltonian $H_J$, but instead projected onto the computational subspace $\mathcal{P}$. We can see a clear quadratic decay in~(a) consistent with the theoretical prediction, while in~(b) the observed scaling is much slower than the predicted worst case. The dashed black line indicates the predicted worst-case scaling $\sim \Gamma^{*2}$ and $\sim L^4$ in (a) and (b), respectively, as a guide to the eye.}
    \label{fig:NumericalResult}
\end{figure}

%% file: 06_conclusion.tex
\section{Conclusion \& outlook}
\label{sec:Conclusion}
We have proposed a constructive method to simulate arbitrary quantum circuits on the globally driven transverse-field Ising model. The required time, physical qubits and coupling strength scale polynomially with the qubit and gate count of the circuit to be simulated, as well as the acceptable error. This establishes the polynomial equivalence between the globally driven transverse-field Ising model and the gate model of quantum computation. From this it follows directly that the Ising model with a global time-dependent transverse field is universal in a complexity-theoretic sense. We expect our result to be of interest for various communities, spanning experimental quantum computing and simulation on various platforms, classical and quantum complexity theory, quantum control theory and many-body physics. Furthermore, our results are relevant for quantum hardware design, where recent works attempt to minimize the number of control lines in order to boost scalability of the design~\cite{Planckian_2025, Planckian_2026a, Planckian_2026b}. Our result shows that the number of time-dependent control fields can be reduced to one, compared to methods requiring multiple control fields.\\
The exponents of our mapping are impractically large for current quantum hardware. However, it seems highly likely that more sophisticated proof techniques can yield tighter bounds. Our result can also serve as a conditional hardness theorem, since it implies that the globally driven transverse-field Ising model cannot be simulated efficiently by a classical computer, under the widely-held complexity-theoretic assumption that quantum advantage exists.\\
Our work motivates several directions for further research. First, since the techniques we use here to establish the bounds rely heavily on the submultiplicativity of norms and triangle inequalities, which are notoriously loose in many cases, it seems plausible that tighter bounds can be found, reducing the exponents of our polynomial equivalence claim. Furthermore, we employ the Schrieffer-Wolff transformation globally, which does not exploit the local structure of the Hamiltonian we consider here. It is likely that a local Schrieffer-Wolff transformation~\cite{Bravyi_2011} can tighten the bounds.\\
Second, the method proposed by Cesa and Pichler has been explored in other works on universal quantum computation with global control~\cite{Planckian_2025, Planckian_2026a, Planckian_2026b}, some of which reduce the number of physical qubits to $n_q' = \mathcal{O}(n_q)$, which is enabled by an additional type of qubit or alternative strategies to initialize the system. This suggests that the exponents in our method could be reduced by exploring other layouts and initialization schemes. Furthermore, the main contribution to our time overhead stems from phase wrapping of the qubit rotation under their local longitudinal fields, which we use to gain independent control of the four groups of qubits. In the CP-method, there are only two groups and their superatoms, which are characterized by their distinct Rabi frequency. Potentially, one can address the distinct qubit groups in our method by similar arguments on the Rabi frequency, at the cost of making more restrictive assumptions on the properties of the $\Gamma$-pulses used to drive the qubits. This could drastically decrease the exponent of the time required to obtain a sufficiently small $\varepsilon_{\text{phase}}$ and thus lead to a significant reduction in time overhead. Note that here we only make very loose assumptions about the driving pulses, such that our result is quite general and applies to a large class of devices with the ability to realize non-monotonic schedules.\\
Additionally, as has already been discussed by Cesa and Pichler, the CP-method can be augmented by error mitigation techniques such as mid-circuit measurements. In particular, since all the qubits that are not carrying the logical quantum many-body state are supposed to be in a known classical product state at certain steps of the computation, it is possible to actively set them to their target state after measuring them. It will be instructive to investigate how to optimally combine the proposed method with error mitigation and correction techniques.\\
Furthermore, it will be interesting to introduce further restrictions to analyze other related models of computation. Here we rely on the ability to perform non-monotonic schedules $\Gamma_t$, which is a feature not necessarily available on all quantum annealing hardware, or in all modes of operation of existing devices. It will be illuminating to examine constructions that rely only on monotonic schedules, thus performing quantum annealing in its original sense.

%% file: 07_acknowledgment.tex
\section{Acknowledgments}
I thank Artur García-Sáez, Bruno Julia-Díaz, Marta P. Estarellas and Arnau Riera for many helpful discussions. The author gratefully acknowledges RES resources provided by Barcelona Supercomputing Center in MareNostrum 5 to INNO-2026-1-0004. 

%% file: 08_appendix_a.tex
\section{Definitions and notation}
Before we begin with the proof, let us first introduce the relevant concepts and notation. As a general remark, here we consider systems composed of $n_q$ qubits. Hence, we consider operators acting on the Hilbert space $\mathcal{H} = (\mathbb{C}^2)^{\otimes n_q}$.

\subsection{$\mathcal{O}$-, $\Theta$- and $\Omega$-notation}
We will make use of $\mathcal{O}$-, $\Theta$- and $\Omega$-notation to describe asymptotic behavior in the conventional sense. However, attention must be paid as to which limit is considered. We make use of two cases. Consider two functions $f(x)$ and $g(x)$. We say that $f(x) = \mathcal{O}(g(x))$ for large $x$, or as $x \rightarrow \infty$, iff there exist a $x_0 > 0$ and a constant $C > 0$ such that $\forall x > x_0: |f(x)| \leq C |g(x)|$. Similarly, we say that $f(x) = \mathcal{O}(g(x))$ for small $x$, or as $x \rightarrow 0$, iff there exist a $x_0 > 0$ and a constant $C>0$ such that $\forall |x| < |x_0|: |f(x)| \leq C |g(x)|$. \\
Analogously, we define $\Theta$ in the same limits. We say that $f(x) = \Theta(g(x))$ for large $x$ iff $f(x) = \mathcal{O}(g(x))$ and $g(x) = \mathcal{O}(f(x))$ for large $x$, and $f(x) = \Theta(g(x))$ for small $x$ iff $f(x) = \mathcal{O}(g(x))$ and $g(x) = \mathcal{O}(f(x))$ for small $x$. We use the same terminology to denote asymptotic lower bounds. By the same logic as before, we say $f(x) = \Omega(g(x))$ for large $x$ (small $x$) iff there exist a $x_0 > 0$ and a constant $C>0$ such that $\forall x > x_0$ ($\forall |x| < x_0$) we find $|f(x)| \geq C|g(x)|$.\\
Sometimes, it is convenient to write $A(x) = \mathcal{O}(f(x))$, or $A(x) = \Theta(f(x))$, where $A(x)$ is a matrix-valued function. In these cases, the asymptotic limits are considered with respect to the operator norm, i.e. $A(x) = \mathcal{O}(f(x))$ for large / small $x$ iff $\|A(x) \| = \mathcal{O}(f(x))$ for large / small $x$ and analogously for $A(x) = \Theta(f(x))$. In particular, we will often equate $A = B + \mathcal{O}(f)$ with $\| A - B \| = \mathcal{O}(f)$.

\subsection{Computational subspace}
\label{sec:ComputationalSubspaceDefinition}
The CP-method was conceived with Rydberg atom systems in mind. Here, atoms / qubits can be placed closely together such that the Rydberg blockade energetically prohibits two adjacent qubits to be in the excited state at the same time. We will make use of the same mechanism. For a given connectivity graph $\mathcal{G}$ of qubits, these interactions are represented by the Hamiltonian
\begin{equation}
    H_J = \frac{J}{4}\sum_{(m,n) \in E(\mathcal{G})} (1-Z_m)(1-Z_n)
\end{equation}
for large $|J|$. In the setting of Rydberg atoms, $J$ is a positive number depending on the distance of the atoms. However, the error bounds we consider here depend on $|J|$, so in principle negative $J$ are feasible as well. In the limit of $|J| \rightarrow \infty$, $H_J$ energetically forbids transitions between the subspace where all blockade constraints are respected, which we shall denote by $\mathcal{P}$, and the orthogonal complement $\mathcal{Q}$, where at least two neighboring qubits are in the excited state. Hence, if a state is initialized in $\mathcal{P}$, the Hamiltonian $H_J$ constrains the dynamics to $\mathcal{P}$. These constraints are represented by the projector
\begin{equation}
    P = \prod_{(m,n) \in E(\mathcal{G})} \left( 1 - \frac{1}{4} (1-Z_m)(1-Z_n) \right) \ .
\end{equation}
Note that $P$ is diagonal in the computational basis and, therefore, commutes with diagonal operators. The kernel of $P$ is spanned by the computational basis states with at least two adjacent qubits in the state $\ket{1}$. Adjacency here is defined by the graph $\mathcal{G}$. $P$ projects onto the computational subspace $\mathcal{P}$, i.e. the subspace spanned by all computational basis states where no two neighboring qubits are in the excited state. Analogously, we define the projector $Q=\mathbb{I} - P$ onto the subspace $\mathcal{Q}$.\\
Furthermore, note that $P$ is the product of projectors $P_{mn} = 1- \frac{1}{4}(1-Z_m)(1-Z_n)$, which project onto the subspace where qubits $m$ and $n$ are not simultaneously excited. In other words, $P_{mn}$ enforces the Rydberg blockade on the edge $(m,n)$ in the connectivity graph. Similarly, we can define the projectors
\begin{equation}
    P_m = \prod_{n\in \mathcal{N}(m)} P_{mn} \ ,
\end{equation}
where $\mathcal{N}(m)$ is the set of neighbors of $m$ in $\mathcal{G}$. $P_m$ can then be interpreted as enforcing the Rydberg blockade in the neighborhood of qubit $m$. Since $P_{mn}^2 = P_{mn}$ and $P_{mn} = P_{nm}$, it follows that
\begin{equation}
    P = \prod_m P_m \ .
\end{equation}
In our proof below we will first prove the main result in the limit of $|J|\rightarrow \infty$, so that our arguments can make use of the projector $P$. In a second step, we will analyze and bound the error due to finite $|J|$, using the dynamics generated by $H_J$. In the limit of infinite $|J|$, single-qubit gates are applied conditionally, depending on the state of neighboring qubits. We will now define these conditional gates properly.

\subsection{Conditional gates and generators}
\label{sec:ConditionalGatesDefinition}
Let $V \in \text{SU(2)}$ be a single-qubit quantum gate with a potentially time-dependent generator $iH_{V,t}$ such that
\begin{equation}
    V = \mathcal{T} \exp\left(-i\int_0^\tau H_{V,t} dt \right) \ .
\end{equation}
In this work, we will often consider unitaries generated by Hamiltonians projected onto the computational subspace $\mathcal{P}$ by the projector $P$. This leads to conditional single-qubit gates, conditioned on the state of their neighboring qubits in the graph. We will denote these conditional gates by the superscript $V_n^{(c)}$, while the subscript indicates that the gate acts on qubit $n$ conditioned on the state of its neighbors, i.e.
\begin{equation}
    V_n^{(c)} := \mathcal{T} \exp \left( -i \int_0^\tau P H_{V_n,t} P dt\right) \ .
\end{equation}
From the definition, it is clear that $[V^{(c)}, P] = 0$, since $[P, PH_{V,t}P] = 0$. While the CP-method deals with conditional gates $V^{(c)}$, we will derive some results for the unconditional version $V$ and then explicitly show that their validity holds in the conditional case as well. Note one special case when $[H_{V,t}, P]=0$. Then
\begin{equation}
    \begin{aligned}
        V^{(c)} &= \mathcal{T} \exp \left( -i \int_0^\tau P H_{V,t} dt\right) \\
        &= \mathcal{T} \exp \left( -i \int_0^\tau H_{V,t} P dt\right) \\
        &= PV = VP \ .
    \end{aligned}
\end{equation}
From this it follows easily for two conditional gates $V^{(c)}$ and $W^{(c)}$, where $[H_{V,t}, P] = 0$, that $V^{(c)} W^{(c)} = V W^{(c)}$ and $W^{(c)} V^{(c)} = W^{(c)} V$. This is especially true if $V^{(c)}$ is a phase gate. Furthermore, it will be convenient to write the single-qubit gates as $2\times2$-matrices
\begin{equation}
    V^{(c)} = \begin{pmatrix}
        v_{00} & v_{01} \\
        v_{10} & v_{11}
    \end{pmatrix}^{(c)} \ ,
\end{equation}
where we apply the superscript $(c)$ to indicate that the gate is applied conditionally.\\
It is worth highlighting one additional detail about the projected generators. The projector $P$ is a global operator that acts on all qubits. However, restricted to $\mathcal{P}$, the projector acts effectively locally, in the sense that only the state of the neighbors of qubit $n$ is relevant. Consider the Hamiltonian $H_t = \sum_n H_{n,t}$, where $H_n$ acts on qubit $n$. Then $H_t$ generates the unitary
\begin{equation}
    U = \mathcal{T}\exp \left( -i\int_0^\tau \sum_n H_{n,t} dt \right) \ .
\end{equation}
Then considering the series expansion of $U$, it follows that
\begin{equation}
    \begin{aligned}
        PU^{(c)}P &= P \mathcal{T}\exp \left( -i\int_0^\tau PH_tP dt \right) P \\
        &= P - i \int_0^\tau \sum_n P H_{n,t} P dt+ (-i)^2 \int_0^\tau \int_0^t \sum_{m,n} P H_{m,t} P H_{n,t} P dt' dt + ... \\
        &= P + ... + (-i)^k \int_{0\leq t' \leq t'' \leq ... \leq t^{(k)}\leq \tau} \sum_{n_1, ..., n_k} P H_{n_k, t^{(k)}} P H_{n_{k-1}, t^{(k-1)}} P ... P H_{n_1,t'} P dt' ... dt^{(k)}
    \end{aligned}
\end{equation}
Since $P_m$ commutes with $H_{n,t}$ if $m$ and $n$ are non-adjacent, the different edge projectors commute also with the respective Hamiltonian terms. In total, it suffices to project $H_n$ by the $P_{mn}$ for incident edges, while all other edge projectors can be moved to the boundary of the factor $H_{n_1} H_{n_2} ... H_{n_k}$. Therefore, we can replace $PH_{n,t}P = P_n H_{n,t} P_n$ and we find
\begin{equation}
    \begin{aligned}
        &PU^{(c)} P = P + ... \\
        &+ (-i)^k \int_{0\leq t' \leq t'' \leq ... \leq t^{(k)}\leq \tau} P \left( \sum_{n_1, ..., n_k} P_{n_k} H_{n_k, t^{(k)}} P_{n_k} P_{n_{k-1}} H_{n_{k-1}, t^{(k-1)}} P_{n_{k-1}} ... P_{n_1} H_{n_1,t'} P_{n_1} dt' ... dt^{(k)} \right) P \\
        &= P \mathcal{T} \exp\left(-i \int_0^\tau \sum_n P_n H_{n,t} P_n dt \right) P \ .
    \end{aligned}
\end{equation}
Therefore, projected onto $\mathcal{P}$, the global projector $P$ acts effectively locally only considering the neighborhood of a qubit $n$ in $\mathcal{G}$. Specifically, if $PU^{(c)}P$ only drives a subset $\mathcal{I}$ of qubits, then it acts on qubits adjacent to $\mathcal{I}$ via the diagonal projectors. At the same time, it acts trivially on qubits that are neither in $\mathcal{I}$, nor adjacent to it. Furthermore, if $\mathcal{I}$ is a subset of non-adjacent qubits, $U^{(c)}$ factorizes such that
\begin{equation}
    U^{(c)} = \prod_{n\in\mathcal{I}} \mathcal{T} \exp \left( -i\int_0^\tau PH_{n,t} P dt\right) \ ,
\end{equation}
since the Hamiltonians $H_{n,t}$ only act non-trivially on non-intersecting subsets of qubits.

\subsection{$\Gamma$-pulses and phase gates}
\label{sec:GammaPulseDefinition}
We will show below a decomposition of conditional quantum gates into a sequence of $\Gamma$-pulses, where the transverse field is turned on. When no pulse is applied, the system evolves under the diagonal Hamiltonian $H_Z$. The global $\Gamma$-pulses are generated by turning on the transverse field $\Gamma_t = \Gamma^* g_t$, where $\left| \int_0^\tau g_t e^{-i2h_\mathcal{K}t}dt \right| = \Theta(1)$ for all $\mathcal{K} \in \{ \mathcal{A}, \mathcal{B}, \mathcal{C}, \mathcal{D} \}$. This definition of $\Gamma_t$ is mainly for convenience, since it allows us to control the pulse strength via the scaling parameter $\Gamma^*$. We also require $ |g_{t_2} - g_{t_1} | \leq K |t_2 - t_1|$ for a Lipschitz constant $K=\mathcal{O}(1)$. Although not strictly necessary, it is furthermore convenient to assume $g_0 = g_\tau = 0$. Note that these requirements are fairly broad and that a large class of functions satisfies these conditions, so our results are valid for a large number of devices. Note that since our method will operate in the limit of small $\Gamma^*$, conditions on the Lipschitz constant like $|\Gamma_{t_2} - \Gamma_{t_1}| / |t_2 - t_1| \leq K$ are naturally met. While one could also allow for programmable $\tau$, here we will assume $\tau = \mathcal{O}(1)$ to be a constant of the device. This avoids the introduction of additional assumptions on device capabilities.\\
Consider the full time-dependent Hamiltonian $H_t = H_Z + \Gamma_t H_X$ as in Eq.~\eqref{eq:DefFullHamiltonian}, which over the time $T$ generates the time evolution
\begin{equation}
    U_T = \mathcal{T}\exp\left( -i \int_0^T H_t dt \right) \ .
\end{equation}
The full schedule will be a concatenation of $\Gamma$-pulses
\begin{equation}
    U_\Gamma = \mathcal{T} \exp \left(-i \int_0^\tau (H_Z + \Gamma_t H_X)dt \right) \ ,
\end{equation}
where $\Gamma_t$ changes as a function of time as described here, and waiting times, where the transverse field is turned off and the system evolves under the diagonal Hamiltonian $H_Z$ only, i.e.
\begin{equation}
    U_Z(\tau_i) = \exp \left(-i\tau_i H_Z \right) = \exp \left( -i \tau_i \left(\sum_m h_m Z_m + \frac{J}{4}\sum_{(m,n) \in E(\mathcal{G})} (1-Z_m) (1-Z_n)\right) \right) \ ,
\end{equation}
as depicted in Figure~\ref{fig:BasicScheme}(d). The waiting times also correspond to globally acting phase gates. We assume that all $\Gamma$-pulses are identical. Then, the total evolution $U_T$ decomposes into a sequence of $\Gamma$-pulses and phase gates
\begin{equation}
    U_T = U_Z(\tau_0) \prod_{i=1}^N U_\Gamma U_Z(\tau_i) \ .
\end{equation}
In the following, we will argue that for a sufficiently large $N$, $U_T$ can simulate any quantum  circuit.

%% file: 09_appendix_b.tex
\section{Technical lemmata}
\label{sec:AppendixAuxLemmas}
We will require the following results from the literature for the formal proof. Lemma~\ref{lemma:UniversalSU2} shows how to generate universal single-qubit gates using almost arbitrary pulses to stimulate transitions between the energy levels and control of the phases between the pulses. Results from ergodic theory (Theorem~\ref{theorem:FillingTime}) are then used to show how we can obtain quasi-independent control of the phases in distinct qubit types. Lemma~\ref{lemma:ExponentialSeriesBound} and the Gershgorin Circle Theorem (Theorem~\ref{theorem:Gershgorin}) will be used to establish bounds on operator norms, while Lemma~\ref{lemma:TimeEvolutionError} will allow us to bound the error of a time evolution under a perturbation.

\subsection{Universal SU(2)-gates from pulses and phases}
\label{sec:Appendix:UniversalSU2}
We will require universal single-qubit unitaries, which can be constructed from $\Gamma$-pulses and phase gates. To this end, we use Proposition 1 from~\cite{Werner_2026}. Here, we restate it as a Lemma and a proof for completeness, while slightly simplifying the assumptions.\\
Consider a single-qubit Hamiltonian $H_t$ and the respective unitary generated by $iH_t$. It is well-known that any SU(2) element can be decomposed as
\begin{equation}
    \mathcal{T} \exp \left( -i \int_0^\tau \Gamma_t H_t dt\right) = \begin{pmatrix}
        e^{-i\phi_0} & 0 \\
        0 & e^{i\phi_0}
    \end{pmatrix}\begin{pmatrix}
        \cos \lambda & \sin \lambda \\
        -\sin \lambda & \cos \lambda
    \end{pmatrix} \begin{pmatrix}
        e^{-i\theta_0} & 0 \\
        0 & e^{i\theta_0}
    \end{pmatrix}
\end{equation}
for angles $\phi_0$, $\theta_0$ and $\lambda$. Note that it is possible to assume $\lambda \in [0, \pi/2]$ for proper choices of $\phi_0$ and $\theta_0$. If we had sufficient control over all three angles, this would already define a universal SU(2)-gate. However, we will require a method to generate universal SU(2)-gates even if $\lambda$ is restricted. Introducing a phase gate before and after the $\Gamma$-pulse allows us to define the gate
\begin{equation}
    U(\phi, \theta) = \begin{pmatrix}
        e^{i(\phi-\phi_0)} & 0 \\
        0 & e^{-i(\phi-\phi_0)}
    \end{pmatrix}\begin{pmatrix}
        \cos \lambda & \sin \lambda \\
        -\sin \lambda & \cos \lambda
    \end{pmatrix} \begin{pmatrix}
        e^{i(\theta - \theta_0)} & 0 \\
        0 & e^{-i(\theta - \theta_0)}
    \end{pmatrix}\ .
    \label{eq:WaitPulseWaitGate}
\end{equation}
In~\cite{Werner_2026}, it is shown for an arbitrary fixed $\lambda$, but since for the result here we require the limit of $\sin^2 \lambda \ll 1$, it simplifies the proof to consider $0 < \lambda < \pi/4$. The phases $\phi$ and $\theta$ will be the parameters by which we control the gates using the following Lemma.\\
Since the phases in the gate Eq.~\eqref{eq:WaitPulseWaitGate} are freely tunable parameters, we will several times absorb phase offsets into the tunable parameter to simplify notation. E.g. we would replace $\phi - \phi_0$ simply with $\phi$ and have it understood that a choice of the phase $\phi$ would require replacement with the appropriate offset value. In this spirit, let us for the proof consider the gate
\begin{equation}
    U(\phi, \theta) = \begin{pmatrix}
        e^{i\phi} & 0 \\
        0 & e^{-i\phi}
    \end{pmatrix}\begin{pmatrix}
        \cos \lambda & \sin \lambda \\
        -\sin \lambda & \cos \lambda
    \end{pmatrix} \begin{pmatrix}
        e^{i\theta} & 0 \\
        0 & e^{-i\theta}
    \end{pmatrix}\ .
    \label{eq:WaitPulseWaitSimplified}
\end{equation}
\begin{lemma}
    \label{lemma:UniversalSU2} (Proposition 1 in Ref.~\cite{Werner_2026})
    Let $U(\phi, \theta)$ as in Eq.~\eqref{eq:WaitPulseWaitSimplified} with $0 < \lambda \leq \pi/ 4$. For any $V \in \text{SU}(2)$, there is a $N_0 \in \mathbb{N}$ with $N_0 = 2^{\lceil \log_2 \frac{\pi}{2 \lambda} \rceil}$, such that for any $N\geq N_0$ there is a sequence of angles $(\phi_n, \theta_n)$ with
    \begin{equation}
        \prod_{n=1}^N U(\phi_n, \theta_n) = V \ .
        \label{eq:UniversalGateSequence}
    \end{equation}
\end{lemma}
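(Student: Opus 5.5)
The plan is to reduce everything to a single real parameter, the ``middle Euler angle'', and to show by induction that products of $m$ gates reach every middle angle in $[0,\min(m\lambda,\pi/2)]$. First I rewrite the building block: the rotation matrix $\begin{pmatrix}\cos\lambda & \sin\lambda\\ -\sin\lambda & \cos\lambda\end{pmatrix}$ equals $e^{i\lambda Y}$. Hence $U(\phi,\theta)=e^{i\phi Z}e^{i\lambda Y}e^{i\theta Z}$. In a product $\prod_n U(\phi_n,\theta_n)$, consecutive diagonal factors merge, so the product has the form
\begin{equation}
e^{i\phi_N Z}e^{i\lambda Y}e^{i\psi_{N-1}Z}e^{i\lambda Y}\cdots e^{i\psi_1 Z}e^{i\lambda Y}e^{i\theta_1 Z},
\end{equation}
with freely tunable outer phases $\phi_N,\theta_1$ and inner phases $\psi_k$.

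The key observation I will use is the following. Every $W\in\mathrm{SU}(2)$ can be written as $e^{i\alpha Z}e^{i\mu Y}e^{i\beta Z}$ with $\mu\in[0,\pi/2]$, and $\mu$ is determined by $|\langle 0|W|0\rangle|=\cos\mu$. Moreover, two elements with the same $\mu$ differ only by diagonal phases on the left and right. So, since the outer phases $\phi_N,\theta_1$ are free, it suffices to show that the set of middle angles reachable with the inner phases is all of $[0,\pi/2]$ once $N\ge N_0$.

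The induction step is as follows. Suppose that with $m\ge1$ pulses every middle angle $a'\in[0,a_m]$ is reachable, where $a_m=\min(m\lambda,\pi/2)$; the base case $m=1$ gives only $a'=\lambda$, and I handle it with the same formula below. Append one more pulse, i.e. form $e^{ia'Y}e^{i\psi Z}e^{i\lambda Y}$ (the outer phases of the $m$-pulse block are absorbed into $\psi$ and the free outer phases). Its $(0,0)$ entry is
\begin{equation}
\cos a'\cos\lambda\, e^{i\psi}-\sin a'\sin\lambda\, e^{-i\psi},
\end{equation}
whose modulus varies continuously in $\psi$ between $\cos(a'+\lambda)$ at $\psi=0$ and $|\cos(a'-\lambda)|$ at $\psi=\pi/2$. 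By the intermediate value theorem, every middle angle in $[|a'-\lambda|,\min(a'+\lambda,\pi/2)]$ is therefore reachable (for $a'+\lambda>\pi/2$ the range still passes through $\cos=0$). Taking the union over $a'\in[0,a_m]$, and noting that $a'=\lambda\le a_m$ yields $0$, gives the full interval $[0,\min(a_m+\lambda,\pi/2)]$. For $m=1$ this gives $[0,2\lambda]$, with the endpoint $0$ obtained from $\psi=\pi/2$, where $e^{i\lambda Y}(iZ)e^{i\lambda Y}=iZ$. After $N$ pulses the reachable set is $[0,\min(N\lambda,\pi/2)]$, which equals $[0,\pi/2]$ whenever $N\ge \pi/(2\lambda)$. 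In particular this holds for all $N\ge N_0=2^{\lceil\log_2\frac{\pi}{2\lambda}\rceil}$, since $N_0\ge \pi/(2\lambda)$.

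The step I expect to need the most care is the bookkeeping in the reduction to the middle angle. Concretely, I must check that $e^{i\alpha Z}e^{i\mu Y}e^{i\beta Z}$ with $\mu\in[0,\pi/2]$ covers SU(2), including the sign conventions needed to keep $\mu\ge0$ (a sign flip of $\mu$ is compensated by $\pi/2$ shifts of the phases). I must also check that absorbing the intermediate outer phases of the $m$-pulse block into $\psi$ loses no generality. The factor-of-two doubling structure suggested by $N_0$ being a power of two, pairing pulses to double the reachable angle, is an alternative route. However, the one-pulse-at-a-time induction above proves the slightly stronger statement directly and covers all $N\ge N_0$, not only powers of two.
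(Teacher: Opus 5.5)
Your argument is correct in substance but takes a different route from the paper. The paper multiplies two copies of the pulse block and computes the product explicitly. This shows that the accessible angle doubles ($0\le|\sin\lambda'|\le\sin 2\lambda$, tuned by the inner phase $\phi'$), and the paper then iterates the doubling. As a result $N_0$ is a power of two by construction, and values $N>N_0$ that are not powers of two are covered only by the brief remark that the $\kappa^{(k)}$ may be chosen small. Your one-pulse-at-a-time induction on the middle Euler angle instead shows that after $N\ge 2$ pulses the reachable set is the whole interval $[0,\min(N\lambda,\pi/2)]$. Hence universality holds for every $N\ge\lceil\pi/(2\lambda)\rceil$. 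This threshold is at most $N_0$, and in fact optimal, since one pulse changes the middle angle by at most $\lambda$. Your version handles arbitrary $N$ rigorously. That matters for the paper's application, where qubit groups $\mathcal{K}$ with different $\lambda_{\mathcal{K}}$ must share a single pulse count $N$. What the paper's doubling buys is structure: only $\mathcal{O}(\log N)$ distinct angles need to be determined, which the subsequent efficient-compilation discussion relies on. Your construction instead fixes $N-1$ inner phases by backtracking; each has a closed form, so this is still efficient.

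One intermediate claim is false and should be repaired. Write $A=\cos a'\cos\lambda$ and $B=\sin a'\sin\lambda$. The $(0,0)$ entry is then $(A-B)\cos\psi+i(A+B)\sin\psi$, and its squared modulus $\cos^2(a'+\lambda)\cos^2\psi+\cos^2(a'-\lambda)\sin^2\psi$ ranges over $[\cos^2(a'+\lambda),\cos^2(a'-\lambda)]$. For $a'+\lambda>\pi/2$ this range does not pass through $0$. From such an $a'$ the largest reachable middle angle is $\pi-a'-\lambda<\pi/2$; for example, $a'=\pi/2$ yields only $\pi/2-\lambda$. So your per-$a'$ interval $[|a'-\lambda|,\min(a'+\lambda,\pi/2)]$ is wrong whenever $a'>\pi/2-\lambda$. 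The union statement nevertheless holds. Restrict to $a'\in[\lambda,\min(a_m,\pi/2-\lambda)]$, which is nonempty because $\lambda\le a_m$ and $\lambda\le\pi/4$. For these values $a'+\lambda\le\pi/2$, so the per-$a'$ reachable sets really are $[a'-\lambda,a'+\lambda]$, and they sweep out $[0,\min(a_m+\lambda,\pi/2)]$. If you replace your parenthetical remark with this restriction, the induction and the final conclusion go through unchanged.
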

\begin{proof}
    The proof is rather straightforward, using trigonometric identities and the standard decomposition of any element in $V \in \text{SU}(2)$ as
    \begin{equation}
        V = \exp(i\phi_V Z) \exp(i\lambda_V Y) \exp(i \theta_V Z) \ .
    \end{equation}
    The phases are simply given by the first and last phase gates in the sequence Eq.~\eqref{eq:UniversalGateSequence}. What is missing is to show that any $Y$-rotation can be decomposed into a sequence of phase gates and unitaries with fixed transition probability $\sin^2 \lambda$. This can be shown with an iterative argument.\\
    Consider the concatenation of two gates Eq.~\eqref{eq:WaitPulseWaitSimplified}
    \begin{equation}
        U(\phi_1, \theta_1) U(\phi_2, \theta_2) = \begin{pmatrix}
        e^{i\phi_1} & 0 \\
        0 & e^{-i\phi_1}
    \end{pmatrix}\begin{pmatrix}
        \cos \lambda & \sin \lambda \\
        -\sin \lambda & \cos \lambda
    \end{pmatrix} \begin{pmatrix}
        e^{i\phi'} & 0 \\
        0 & e^{-i\phi'}
    \end{pmatrix} \begin{pmatrix}
        \cos \lambda & \sin \lambda \\
        -\sin \lambda & \cos \lambda
    \end{pmatrix} \begin{pmatrix}
        e^{i\theta_2} & 0 \\
        0 & e^{-i\theta_2}
    \end{pmatrix} \ ,
    \end{equation}
    where we have combined the two consecutive phase gates to a single one with $\phi' := \theta_1 + \phi_2$. Evaluating the matrix product of the inner three matrices, we get
    \begin{equation}
    \begin{aligned}
        U(\phi_1, \theta_1) U(\phi_2, \theta_2) &= \begin{pmatrix}
        e^{i\phi_1} & 0 \\
        0 & e^{-i\phi_1}
    \end{pmatrix}
    \begin{pmatrix}
        e^{i\phi'} \cos^2 \lambda - e^{-i\phi'} \sin^2 \lambda & \cos \lambda \sin \lambda (e^{i\phi'} + e^{-i\phi'}) \\
        -\cos \lambda \sin \lambda (e^{i\phi'} + e^{-i\phi'}) & e^{-i\phi'} \cos^2 \lambda - e^{i\phi'} \sin^2 \lambda
    \end{pmatrix}
    \begin{pmatrix}
        e^{i\theta_2} & 0 \\
        0 & e^{-i\theta_2}
    \end{pmatrix} \\
    &= \begin{pmatrix}
        e^{i\phi_1} & 0 \\
        0 & e^{-i\phi_1}
    \end{pmatrix}
    \begin{pmatrix}
        \cos(2\lambda) \cos(\phi') + i \sin \phi' & \sin (2\lambda) \cos \phi' \\
        -\sin (2\lambda) \cos \phi' & \cos(2\lambda) \cos(\phi') - i \sin\phi'
    \end{pmatrix}
    \begin{pmatrix}
        e^{i\theta_2} & 0 \\
        0 & e^{-i\theta_2}
    \end{pmatrix} \ ,
    \end{aligned}
    \end{equation}
    where we used the double-angle identities $\sin2\lambda = 2 \cos \lambda \sin \lambda$ and $2\cos^2 \lambda - 1 = 1 - 2\sin^2\lambda = \cos 2 \lambda$. Using the fact that any element of $U_{\text{real}} \in \text{SU}(2)$ with real off-diagonal elements can be decomposed as
    \begin{equation}
        U_{\text{real}} = \begin{pmatrix}
            e^{i\alpha} & 0 \\ 0 & e^{-i\alpha}
        \end{pmatrix}
        \begin{pmatrix}
            \cos \lambda' & \sin \lambda' \\
            -\sin \lambda' & \cos \lambda'
        \end{pmatrix}
        \begin{pmatrix}
            e^{i\alpha} & 0 \\ 0 & e^{-i\alpha}
        \end{pmatrix}
    \end{equation}
    this defines a new gate
    \begin{equation}
        \begin{aligned}
            U'(\phi_1, \theta_2, \lambda') :=& U(\phi_1, \theta_1)U(\phi_2, \theta_2) \\
            =& \begin{pmatrix}
                e^{i(\phi_1 + \alpha)} & 0 \\
                0 & e^{-i(\phi_1 + \alpha)}
            \end{pmatrix}
            \begin{pmatrix}
                \cos \lambda' & \sin \lambda' \\
                -\sin \lambda' & \cos \lambda'
            \end{pmatrix}
            \begin{pmatrix}
                e^{i(\theta_2 + \alpha)} & 0 \\
                0 & e^{-i(\theta_2 + \alpha)}
            \end{pmatrix} \ ,
        \end{aligned}
        \label{eq:Uprime}
    \end{equation}
    where $\lambda' = \lambda'(\theta_1, \phi_2)$ is now a tunable parameter via the phase $\phi' = \theta_1 + \phi_2$ such that $0 \leq |\sin \lambda'| \leq |\sin(2\lambda)|$. Note that if $\lambda = \pi/4$, we can choose $\lambda' = \pi / 2$ and therefore any $V\in \text{SU}(2)$ can be decomposed as Eq.~\eqref{eq:Uprime} for appropriate choices of $\phi_1$, $\theta_2$ and $\lambda'$, while for $0 < \lambda < \pi/4$ we have effectively amplified the accessible angle of the $Y$-rotation.\\
    Since by assumption $\lambda \leq \pi/4$, we can distinguish the case $\lambda = \pi/4$, in which case $U'$ is universal, and the case $\lambda < \pi/4$, in which case we set $\lambda' = \min(\pi/4, 2\lambda)$. Subsequently, we concatenate two gates of $U'$, choosing the phases between the two $U'$ such that we obtain the gate $U''$ with $0 \leq |\sin \lambda''| \leq |\sin(2\lambda')| $. After $n$ iterations of this argument, we achieve $U^{(n)} = U^{(n-1)} U^{(n-1)}$ with
    \begin{equation}
        \sin \lambda^{(n)} = \sin \left( \lambda \prod_{k=1}^n \kappa^{(k)}\right) \ ,
    \end{equation}
    with $0\leq \kappa^{(k)} \leq 2$ the accessible ranges for all $k$, which are chosen such that $0\leq \lambda \prod_{k=1}^n\kappa^{(k)} \leq \pi/4$. The effective angle doubles at every iteration and it is clear that the upper bound can be saturated only after at least $n = \lceil \log_2\left(\frac{\pi}{4\lambda} \right) \rceil$. Choosing $\lambda^{(n)} = \pi/4$, the next iteration yields a freely tunable $\lambda^{(n+1)} \in [0, \pi/2]$ and thus, universality is reached, together with the freely tunable phase gates before and after the sequence. Note that the sequence length $N$ doubles at every iteration, and therefore universality is reached after $N_0 = 2^{\lceil \log_2\left(\frac{\pi}{2\lambda} \right) \rceil}$ gates. Since the $\kappa^{(k)}$ can be chosen to be small, the universality also holds for $N>N_0$. Thus, the claim of the lemma follows.
\end{proof}
Note that for $0< \lambda \ll \pi /4$, $N_0 = \Theta \left(\lambda^{-1} \right) = \Theta \left(1 / \sin\lambda \right)$. Lemma~\ref{lemma:UniversalSU2} tells us that we can decompose any single-qubit gate into a series of pulses with phase gates between them. Therefore, in order to address the distinct qubit groups independently, we require independent phase control, at least up to a small error.

\subsection{Quasi-independent phase control}
Lloyd proposed to gain quasi-independent control over different Hamiltonian terms by choosing four $\mathbb{Q}$-linearly independent coefficients, such that each Hamiltonian could effectively be singled out up to an error $\varepsilon_{\text{phase}}$ by evolving the system for a time $\mathcal{O}\left( \varepsilon_{\text{phase}}^{-4}\right)$~\cite{Lloyd_2018}. Here, we will make this notion more concrete by invoking results from ergodic theory. This will also yield an improved exponent. First, we will require a basic number theoretical result derived from the Schmidt subspace theorem.
\begin{theorem}(Theorem 7.3.2 from Chapter 7 in Ref.~\cite{Bombieri_Gubler_2006})
    Let $\Omega_1, ..., \Omega_n$ be algebraic numbers. Then for every $\nu > 0$ the inequality
    \begin{equation}
        0 < |\Omega_1 k_1 + ... + \Omega_n k_n| \leq  \| k \|_\infty^{1-n-\nu}
        \label{eq:SchmidtInequality}
    \end{equation}
    has only finitely many solutions $(k_1, ..., k_n) \in \mathbb{Z}^n$.
    \label{theorem:Schmidt}
\end{theorem}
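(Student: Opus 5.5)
The plan is to derive the statement from the Schmidt subspace theorem by induction on $n$. We use the subspace theorem in the following form. If $L_1,\dots,L_n$ are $\overline{\mathbb{Q}}$-linearly independent linear forms in $n$ variables with algebraic coefficients and $\nu>0$, then the integer solutions $x\in\mathbb{Z}^n$ of
\begin{equation}
\prod_{i=1}^n |L_i(x)| \leq \|x\|_\infty^{-\nu}
\end{equation}
lie in finitely many proper linear subspaces of $\mathbb{Q}^n$.

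\textbf{Trivial cases.} If all $\Omega_i=0$, the left inequality $0<|\cdot|$ is never satisfied, so there are no solutions. Otherwise, after relabelling, assume $\Omega_1\neq 0$. For $n=1$ the inequality reads $0<|\Omega_1||k_1|\leq |k_1|^{-\nu}$. This forces $|k_1|^{1+\nu}\leq |\Omega_1|^{-1}$, so $k_1$ is bounded and there are finitely many solutions.

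\textbf{Inductive step.} Set $L_1(x)=\Omega_1x_1+\dots+\Omega_nx_n$ and $L_i(x)=x_i$ for $i=2,\dots,n$. These forms are linearly independent because $\Omega_1\neq0$. For any solution $k$ we have $|L_i(k)|\leq\|k\|_\infty$ for $i\geq 2$, hence
\begin{equation}
\prod_i|L_i(k)|\leq \|k\|_\infty^{1-n-\nu}\,\|k\|_\infty^{n-1}=\|k\|_\infty^{-\nu}.
\end{equation}
By the subspace theorem, all solutions lie in finitely many proper subspaces $W_1,\dots,W_r$. It therefore suffices to show that each $W_j$ contains only finitely many solutions.

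\textbf{Reduction on one subspace.} Fix $W=W_j$ of dimension $d<n$. Choose a basis $b_1,\dots,b_d\in\mathbb{Z}^n$ of the lattice $W\cap\mathbb{Z}^n$ and write $k=\sum_l y_l b_l$ with $y\in\mathbb{Z}^d$. Then $L_1(k)=\sum_l \Omega'_l y_l$ with algebraic coefficients $\Omega'_l=L_1(b_l)$. By linear independence of the $b_l$ and equivalence of norms on $W$, there are constants $c,C>0$ depending only on $W$ with $c\|y\|_\infty\leq\|k\|_\infty\leq C\|y\|_\infty$.

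If all $\Omega'_l=0$, then $L_1$ vanishes on $W$ and the strict inequality $0<|L_1(k)|$ excludes every point of $W$. Otherwise the exponent $1-n-\nu$ is negative, so a solution $k\in W$ satisfies
\begin{equation}
0<|\Omega'\cdot y|\leq c^{1-n-\nu}\|y\|_\infty^{1-n-\nu}.
\end{equation}
Since $1-n<1-d$, the right-hand side is at most $\|y\|_\infty^{1-d-\nu}$ once $\|y\|_\infty$ exceeds a constant depending on $W$. By the induction hypothesis in dimension $d$ (same $\nu$), only finitely many $y$ satisfy the latter inequality, and only finitely many $y$ have bounded norm. Hence $W$ contains finitely many solutions, and the claim follows after taking the union over the finitely many $W_j$.

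\textbf{Main obstacle.} The substantive input is the subspace theorem itself, which we take from Ref.~\cite{Bombieri_Gubler_2006} rather than reprove. The delicate points in the deduction are twofold. First, the auxiliary forms $x_2,\dots,x_n$ must be chosen so that the product bound picks up exactly the factor $\|k\|_\infty^{n-1}$. Second, the induction must be closed despite the constants $c,C$ introduced by the change of lattice basis. The strict gap between the exponents $1-n$ and $1-d$ absorbs those constants. The restricted coefficients $\Omega'$ may be $\mathbb{Q}$-linearly dependent, but this causes no difficulty because the statement never assumes independence.
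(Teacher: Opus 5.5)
The paper gives no proof of this statement; it quotes it from Bombieri--Gubler as a consequence of Schmidt's subspace theorem. Your argument is a correct, standard derivation of exactly that consequence: you apply the subspace theorem to $\Omega\cdot x, x_2,\dots,x_n$ and then restrict to each exceptional rational subspace, where the gap between $1-n$ and $1-d$ absorbs the lattice constants, and the induction closes because no $\mathbb{Q}$-linear independence is ever assumed. Two routine points are worth stating explicitly. First, the usual strict-inequality form of the subspace theorem yields your non-strict form after replacing $\nu$ by $\nu/2$, with the finitely many $k$ satisfying $\|k\|_\infty=1$ being harmless. Second, for complex $\Omega_i$ you should either cite the version with coefficients in $\overline{\mathbb{Q}}\subset\mathbb{C}$, as in Bombieri--Gubler, or replace $L_1$ by whichever of $\mathrm{Re}\,L_1$, $\mathrm{Im}\,L_1$ has nonzero $x_1$-coefficient, which only decreases $|L_1(k)|$.
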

Let us denote the vector of algebraic numbers with $\Omega:=(\Omega_1, ..., \Omega_n)$. We will use Theorem~\ref{theorem:Schmidt} to check for membership of a vector $\Omega$ to the so-called Diophantine set~\cite{Bourgain_1998, Dumas_2021}, which is defined for $\nu > 0$ and $\gamma > 0$ as
\begin{equation}
    D_n(\nu, \gamma) := \left\{ \Omega \in \mathbb{R}^n : |k \cdot \Omega| \geq \gamma \| k \|^{1-n-\nu} \text{ for any } k \in \mathbb{Z}^n \setminus \{ 0 \} \right\} \ .
\end{equation}
For a given vector $\Omega \in \mathbb{R}^n$ we can define a trajectory on the torus $\mathbb{R}^n / \mathbb{Z}^n$ as
\begin{equation}
    \omega_t(A) := A + \Omega t \mod \mathbb{Z}^n
\end{equation}
for an initial point $A \in \mathbb{R}^n$. Vectors of algebraic numbers in $D_n(\nu, \gamma)$ have a very useful property, in that their filling times can be bounded. The filling time is defined as the shortest time that is required for a linear flow to $\varepsilon_{\text{phase}}$-approximate any point on a torus, i.e.
\begin{equation}
    \tau_{\text{fill}}(\varepsilon_{\text{phase}}) = \inf \{ t \in \mathbb{R}_+ :  \forall x \in \mathbb{R}^n / \mathbb{Z}^n , d(x, \omega_t) \leq \varepsilon_{\text{phase}}\} \ ,
\end{equation}
where $d(\cdot, \cdot)$ denotes the distance. The question of estimating filling times is extensively studied~\cite{Bourgain_1998, Dumas_2021}, and we will use the following theorem, which is optimal with respect to the exponent.
\begin{theorem} (Theorem D in~\cite{Bourgain_1998} or Remark 4.1 in~\cite{Berti_2003})
    For all $n\in \mathbb{N} \setminus \{ 0 \}$, there is a constant $C(n, \nu)$ such that for all $\nu>0$ and all $\Omega \in D_n(\nu, \gamma)$ we have
    \begin{equation}
        \tau_{fill}(\varepsilon_{\text{phase}}) \leq C(n, \nu) \varepsilon_{\text{phase}}^{1-n-\nu} \ .
        \label{eq:EffectiveBound}
    \end{equation}
    \label{theorem:FillingTime}
\end{theorem}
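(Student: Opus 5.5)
The plan is to prove the bound by a Fourier-analytic equidistribution argument. The idea is to test the orbit against a smooth bump localized at the target point and to average it against a smooth time window. Fix $\nu>0$ and $\Omega\in D_n(\nu,\gamma)$. Given $\varepsilon_{\text{phase}}=:\varepsilon$ and a target point $x\in\mathbb{R}^n/\mathbb{Z}^n$, it suffices to find $T\le C(n,\nu)\varepsilon^{1-n-\nu}$ such that the segment $\{\omega_t(A):t\in[0,T]\}$ meets the $\varepsilon$-ball around $x$, uniformly in $x$. Since $\omega_t(A)-x=\omega_t(A-x)$, we may take $x=0$ and let the initial point vary instead.

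\textbf{Step 1 (test functions).} Choose a smooth nonnegative bump $\psi$ on $\mathbb{R}^n$ supported in the unit ball with $\int\psi=1$, and periodize $\varphi_\varepsilon(y)=\varepsilon^{-n}\psi(y/\varepsilon)$. Then $\hat\varphi_\varepsilon(0)=1$, and for every $s$,
\begin{equation}
|\hat\varphi_\varepsilon(k)|\le C_s(1+\varepsilon\|k\|)^{-s}.
\end{equation}
Similarly, choose a smooth nonnegative time window $w$ supported in $[0,1]$ with $\int w=1$, and set $w_T(t)=T^{-1}w(t/T)$, so that $|\hat w_T(\xi)|\le C_s(1+T|\xi|)^{-s}$.

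\textbf{Step 2 (expand the average).} Expanding $\varphi_\varepsilon$ in its Fourier series gives
\begin{equation}
I:=\int w_T(t)\,\varphi_\varepsilon(\omega_t(A))\,dt=\sum_{k\in\mathbb{Z}^n}\hat\varphi_\varepsilon(k)\,e^{2\pi i k\cdot A}\,\hat w_T(-k\cdot\Omega).
\end{equation}
The $k=0$ term equals $1$. If we show that the remaining sum is at most $1/2$ in absolute value, then $I>0$. Positivity of $I$ forces $\varphi_\varepsilon(\omega_t(A))>0$ for some $t\in[0,T]$, which means $d(0,\omega_t(A))<\varepsilon$, as required.

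\textbf{Step 3 (split the frequencies).} Fix a small $\delta>0$ and set $K=\varepsilon^{-1-\delta}$.

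For the high frequencies $\|k\|>K$, drop the time factor using $|\hat w_T|\le1$. Summing $C_s(\varepsilon\|k\|)^{-s}$ over $\|k\|>K$ with $s$ large gives a bound of order $\varepsilon^{-n}(\varepsilon K)^{-(s-n)}=\varepsilon^{\delta(s-n)-n}$. This is $\le 1/4$ once $s$ is chosen large in terms of $\delta$ and $n$.

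For the low frequencies $0<\|k\|\le K$, use the Diophantine condition:
\begin{equation}
|k\cdot\Omega|\ge\gamma K^{1-n-\nu}=\gamma\varepsilon^{(1+\delta)(1-n-\nu)}.
\end{equation}
Choose $T=\varepsilon^{(1+\delta)(1-n-\nu)-\delta}$. Then $T|k\cdot\Omega|\ge\gamma\varepsilon^{-\delta}$, so each time factor satisfies $|\hat w_T(k\cdot\Omega)|\le C_s(\gamma\varepsilon^{-\delta})^{-s}$. The number of such $k$ is at most $K^n=\varepsilon^{-n(1+\delta)}$. Hence the whole low-frequency contribution is $\le 1/4$ for small $\varepsilon$, after taking $s$ large.

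\textbf{Step 4 (collect exponents and constants).} The choice of $T$ gives $T=\varepsilon^{1-n-\nu-O(\delta)}$. Because $\nu>0$ is arbitrary, the $O(\delta)$ loss is absorbed by renaming $\nu$. Choosing $\delta=\delta(\nu)$ fixes $s=s(n,\nu)$, so all constants depend only on $n$, $\nu$ and $\gamma$. The range of large $\varepsilon$, where "small $\varepsilon$" was used, is covered by enlarging $C(n,\nu)$.

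\textbf{Main obstacle.} The delicate point is avoiding the crude estimate that uses a sharp time cutoff. A sharp cutoff only gives time factors of size $1/(T|k\cdot\Omega|)$, and summing these over $\|k\|\lesssim\varepsilon^{-1}$ yields the non-optimal exponent $1-2n-\nu$. The smooth time window is therefore essential. It makes every frequency below the cutoff $K$ negligible as soon as $T$ exceeds the inverse of the smallest small divisor, and that inverse is exactly $K^{n-1+\nu}$. I would therefore first verify carefully that the rapid decay of $\hat w_T$ beats the $K^n$ count of frequencies. I would also check that the $\varepsilon^{-\delta}$ slack, together with the loss in passing from $\delta$ to $\nu$, stays within the statement's "for all $\nu>0$" form. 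Applied with $n=4$ to the vector $(h_{\mathcal{A}},h_{\mathcal{B}},h_{\mathcal{C}},h_{\mathcal{D}})$, which lies in some $D_4(\nu,\gamma)$ by Theorem~\ref{theorem:Schmidt}, this yields the waiting times $\tau_i=\mathcal{O}(\varepsilon_{\text{phase}}^{-3-\nu})$ used in the main text.
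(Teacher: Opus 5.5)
The paper does not prove this statement; it cites it from Bourgain et al.\ and Berti et al. Your argument therefore has to stand on its own, and as written it stops short of the stated exponent.

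\textbf{What your argument proves.} Steps 3--4 establish $\tau_{fill}(\varepsilon)\le C(n,\nu,\gamma,\delta)\,\varepsilon^{1-n-\nu-\delta(n+\nu)}$ for every $\delta>0$. The step ``the $O(\delta)$ loss is absorbed by renaming $\nu$'' is not legitimate. The same $\nu$ appears in the hypothesis $\Omega\in D_n(\nu,\gamma)$ and in the conclusion. We do have $D_n(\nu,\gamma)\subseteq D_n(\nu',\gamma)$ for $\nu'\ge\nu$, but membership in $D_n(\nu,\gamma)$ does not imply membership in any $D_n(\nu',\gamma')$ with $\nu'<\nu$, since vectors of exact Diophantine exponent $n-1+\nu$ exist. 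For such $\Omega$ you only get $\varepsilon^{1-n-\nu-\delta}$. This weaker form would still suffice for the paper's use: the algebraic $h_{\mathcal{K}}$ lie in $D_4(\nu,\gamma_\nu)$ for every $\nu>0$, and only $\tau_i=\mathcal{O}(\varepsilon_{\text{phase}}^{-3-\nu})$ for arbitrary $\nu$ is ever used. It is not the theorem as stated, though.

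\textbf{Where the loss comes from.} Two crude estimates build it in. First, you bound all $\sim K^n$ low frequencies by the single worst divisor $\gamma K^{1-n-\nu}$. Second, you discard the time decay above $K$. Because $\sum_k|\hat\varphi_\varepsilon(k)|\sim\varepsilon^{-n}$, that second step forces $\varepsilon K$ to grow like a power of $\varepsilon^{-1}$.

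\textbf{The missing ingredient.} The small divisors are well separated.
\begin{itemize}
\item For distinct $k,k'$ with $\|k\|,\|k'\|\le K$, applying the Diophantine condition to $k-k'$ gives $|k\cdot\Omega-k'\cdot\Omega|\ge\beta_K:=\gamma(2K)^{1-n-\nu}$. Also $|k\cdot\Omega|\ge\beta_K$ for $k\ne 0$.
\item Ordering these values, the $i$-th one on each side of $0$ has modulus at least $i\beta_K$. Hence $\sum_{0<\|k\|\le K}|\hat w_T(k\cdot\Omega)|\le 2C_s\sum_{i\ge1}(1+iT\beta_K)^{-s}\le C_s'(T\beta_K)^{-1}$, instead of $K^n$ times the worst term.
\item Split $k\neq 0$ into the ball $\|k\|\le 1/\varepsilon$ and shells $2^{j-1}/\varepsilon<\|k\|\le 2^j/\varepsilon$ for $j\ge1$. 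On these, $|\hat\varphi_\varepsilon(k)|\le C_{s'}2^{-js'}$.
\item Choose $T=M\gamma^{-1}\varepsilon^{1-n-\nu}$. Then $T\beta_{2^j/\varepsilon}=M\,2^{-(j+1)(n-1+\nu)}$.
\item The nonzero frequencies contribute at most $CM^{-1}\sum_{j\ge0}2^{-js'+(j+1)(n-1+\nu)}=\mathcal{O}(M^{-1})$ once $s'>n-1+\nu$. Taking $M=M(n,\nu)$ large makes this $\le 1/2$.
\end{itemize}
This yields $\tau_{fill}(\varepsilon)\le C(n,\nu)\gamma^{-1}\varepsilon^{1-n-\nu}$, with no $\delta$-loss and no smallness condition on $\varepsilon$. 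The dependence on $\gamma$ is unavoidable; the statement as printed suppresses it.

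\textbf{A separate slip in Step 3.} The correct value is $\gamma K^{1-n-\nu}=\gamma\varepsilon^{(1+\delta)(n-1+\nu)}$, not $\gamma\varepsilon^{(1+\delta)(1-n-\nu)}$. Only with the corrected exponent does your choice of $T$ give $T|k\cdot\Omega|\ge\gamma\varepsilon^{-\delta}$.
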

Note that Theorem~\ref{theorem:FillingTime} only depends on the algebraic properties of $\Omega$, but is otherwise independent of the initial point $A$. It is clear that any vector $\Omega$ satisfying the conditions in Theorem~\ref{theorem:Schmidt} non-trivially is also contained in $D_n(\nu, \gamma)$ for some $\gamma$ and any $\nu > 0$. This follows, since according to Theorem~\ref{theorem:Schmidt}, there are only finitely many $k \in \mathbb{Z}^n$ that satisfy the inequality Eq.~\eqref{eq:SchmidtInequality}, so $k\cdot \Omega$ can be evaluated on all of them and $\gamma$ can be picked accordingly. Note that Theorem~\ref{theorem:Schmidt} and the definition of $D_n(\nu, \gamma)$ use different norms, which can be absorbed into the prefactor $\gamma$ according to the equivalence of norms in finite dimension.\\
We can use this result by choosing the local longitudinal fields $h_{\mathcal{K}}$ such that they fulfill the conditions of Theorem~\ref{theorem:Schmidt} non-trivially. This can, for example, be achieved by choosing $h_{\mathcal{K}} = \alpha, \alpha^2, \alpha^3, \alpha^4$, where $\alpha$ is the root of the irreducible polynomial $\alpha^5 - \alpha - 1 = 0$. Then, the vector of the $h_{\mathcal{K}}$ is in $D_4(\nu, \gamma)$ and we can apply Theorem~\ref{theorem:FillingTime}. It then follows that any point on the torus $\mathbb{R}^4 / \mathbb{Z}^4$ can be approximated up to an error $\varepsilon_{\text{phase}}$ by evolving for a time $\mathcal{O}\left( \varepsilon_{\text{phase}}^{-3-\nu} \right)$, for any $\nu >0$. Technically, we are interested in the torus $\mathbb{R}^4 / (2\pi \mathbb{Z}^4)$, which we can account for via a trivial re-scaling.\\
Assume a given set of phases $\phi_{\mathcal{K}}$ for all $\mathcal{K} \in \{ \mathcal{A}, \mathcal{B}, \mathcal{C}, \mathcal{D} \}$. Since the global phase gate acts on all qubit groups $\mathcal{K}$ for the same amount of time $\tau_i$, we need to solve the set of equations
\begin{equation}
    |h_\mathcal{K} \tilde{\tau}_i - \phi_\mathcal{K} / (2\pi)| \leq \varepsilon_{\text{phase}} / (2\pi) \mod 1
\end{equation}
for all $\mathcal{K}$. According to Theorem~\ref{theorem:FillingTime}, there is a $\tilde{\tau}_i$ which satisfies all four inequalities. Then, choosing $\tau_i = 2\pi \cdot \tilde{\tau}_i$ gives the required $\tau_i$ such that
\begin{equation}
    |h_{\mathcal{K}} \tau_i - \phi_\mathcal{K} | \leq \varepsilon_{\text{phase}} \mod 2\pi \ .
\end{equation}
Furthermore, Eq.~\eqref{eq:EffectiveBound} in Theorem~\ref{theorem:FillingTime} states that then $\tau_i = \mathcal{O}\left( \varepsilon_{\text{phase}}^{-3-\nu}\right)$.
Therefore, we gain quasi-independent control of the phases acting on each group of qubits at the cost of time polynomial in the inverse error $\varepsilon_{\text{phase}}^{-1}$. We can then use this phase control to realize any set of single-qubit gates acting on the groups simultaneously by concatenating sufficiently many pulses and phase gates. Furthermore, this sequence can be compiled efficiently, as we show in the following.

\subsection{Efficient compilation of single-qubit gates}
The proof of Lemma~\ref{lemma:UniversalSU2} directly implies an efficient classical compilation algorithm for arbitrary $V\in\text{SU(2)}$, assuming that $\lambda$ is sufficiently large, as we argue in the following. Recall that if $\sin \lambda = 1/\sqrt{2}$, i.e. $\lambda = \pi/4$, then a single concatenation step and a proper choice of phases renders $U(\phi_1, \theta_1)U(\phi_2, \theta_2) = V$. In the case of $0 < \lambda < \pi/4$, this implies that one could precompile a sequence of $N_0/2$ gates $U(\phi, \theta)$ to prepare a gate sequence with $\lambda = \pi/4$ and then simply concatenate two of these sequences with appropriate choices of phases before, after and between the two sequences to generate any gate $V$. Note that this procedure implies that in total only $\mathcal{O} \left( \lceil \log_2 \frac{\pi}{2\lambda} \rceil \right)$ angles need to be determined for any gate. Of all the angles, three angles are specific to $V$, while all others are generic. Furthermore, since only $\mathcal{O} \left( \lceil \log_2 \frac{\pi}{2\lambda} \rceil \right)$ angles need to be determined, we could potentially compile any $V$ even if $\lambda = \Theta(2^{-n_q})$. However, for efficient physical execution of the gate, we require that $\lambda$ does not vanish super-polynomially fast as a function of the system size to be simulated, i.e. $\lambda \geq \Omega(n_q^{-c})$ for some constant $c>0$.\\
In the context we consider here, we will need to compile gates for four types of qubits. The global $\Gamma$-pulse will act on all of them and stimulate a qubit transition for each type $\mathcal{K}$, i.e. $r_\mathcal{K} = \sin^2 \lambda_\mathcal{K}$. Since all qubits will experience the same number of $\Gamma$-pulses, we can consider sequences of length $N = \max_\mathcal{K} 2^{\lceil  \log_2 \frac{\pi}{2\lambda_{\mathcal{K}}} \rceil}$. For any collection of $V_\mathcal{K} \in \text{SU(2)}$, we can then determine the required angles for each $\mathcal{K}$, resulting in a set of $\mathcal{O}(\log N)$ angles. This can be done efficiently on a classical computer. For longitudinal fields $h_\mathcal{K}$ fulfilling Theorem~\ref{theorem:Schmidt}, it is then possible to find a global $\tau_i$ that results in phases that are $\varepsilon_{\text{phase}}$-close to the desired phases. It is easy to see that the solution $\tau_i$ can be found simply by brute-force search, discretizing $\tau_i$ into $\mathcal{O}\left( \varepsilon_{\text{phase}}^{-1} \right)$ steps and searching up to $\tau_i = \mathcal{O}\left( \varepsilon_{\text{phase}}^{-3-\nu} \right)$. Therefore, the compilation time is also polynomial in $\varepsilon_{\text{phase}}$. The compilation method presented here is sufficient for the polynomial equivalence we are proving in the following. However, it is very straightforward and more efficient solvers may exist. For example, one could generate an $\varepsilon_0$-grid of single-qubit gates via the method presented here for some small $\varepsilon_0 = \mathcal{O}(1)$. This can then be combined with the inverse-free Solovay-Kitaev algorithm~\cite{Bouland_2021} to reduce the error to any desired level $\varepsilon$ at the cost of multiplying the sequence length by a factor $\log^c (1/\varepsilon)$ for some constant $c>0$. We leave the detailed discussion for future research.

\subsection{Bound on exponential series}
We will make use of the following standard result of analysis several times. We present the proof here for completeness.
\begin{lemma}
    \label{lemma:ExponentialSeriesBound}
    Let $x \in \mathbb{R}$ with $x \geq 0$. Then
    \begin{equation}
        \sum_{k=2}^\infty \frac{x^k}{k!} \leq x^2 e^x \ .
    \end{equation}
\end{lemma}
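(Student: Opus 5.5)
The plan is to compare the tail of the exponential series term by term with $x^2 e^x$, by factoring $x^2$ out of every term. For $k \geq 2$ write $k = j+2$ with $j \geq 0$, so that
\begin{equation}
    \sum_{k=2}^\infty \frac{x^k}{k!} = x^2 \sum_{j=0}^\infty \frac{x^j}{(j+2)!} \ .
\end{equation}
All terms are nonnegative because $x \geq 0$, and the series converges absolutely for every real $x$. Hence reindexing and comparing terms individually is legitimate.

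The key step is the elementary inequality $(j+2)! \geq j!$ for all $j \geq 0$, which holds since $(j+2)! = (j+2)(j+1)\, j!$ and $(j+2)(j+1) \geq 2 \geq 1$. It gives
\begin{equation}
    \frac{x^j}{(j+2)!} \leq \frac{x^j}{j!} \quad \text{for all } j \geq 0 \ .
\end{equation}
Summing over $j$ yields $\sum_{j\geq 0} x^j/(j+2)! \leq e^x$. Multiplying by $x^2 \geq 0$ then gives the claim.

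There is no real obstacle here. The only point requiring care is that the termwise comparison uses $x \geq 0$: for negative $x$ the terms alternate in sign and the comparison would fail. That is exactly why the hypothesis is stated this way. Later uses of the lemma should apply it with $x$ equal to a norm, e.g.\ $x = \|A\|$, to bound $\|e^{A} - \mathbb{I} - A\|$. In that setting nonnegativity holds automatically.

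If desired, the argument also gives the slightly sharper bound $\tfrac{1}{2}x^2 e^x$, using $(j+2)! \geq 2\, j!$. The stated, weaker version suffices for the paper's purposes.
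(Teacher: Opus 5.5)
Your proof is correct and follows the paper's argument: reindex to factor out $x^2$, use $x \ge 0$ to compare $x^j/(j+2)! \le x^j/j!$ termwise, and sum to $e^x$. Your sharper remark $(j+2)! \ge 2\,j!$ is also valid, and it is the version the paper actually writes down when it cites the lemma, with the factor $\tfrac12$ in Eq.~\eqref{eq:FirstOrderError}.
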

\begin{proof}
    It is easy to see that the series converges point-wise, since it is the Taylor series of $e^x$ minus the first two terms. So we can algebraically manipulate the left-hand side such that
    \begin{equation}
        \begin{aligned}
            \sum_{k=2}^\infty \frac{x^k}{k!} = x^2 \sum_{k=2}^\infty \frac{x^{k-2}}{k!} = x^2 \sum_{k=0}^\infty \frac{x^k}{(k+2)!} \ .
        \end{aligned}
    \end{equation}
    Since $x\geq 0$, for each term we have $\frac{x^k}{(k+2)!} \leq \frac{x^k}{k!}$ and it follows that
    \begin{equation}
        x^2 \sum_{k=0}^\infty \frac{x^k}{(k+2)!} \leq x^2 \sum_{k=0}^\infty \frac{x^k}{k!} = x^2 e^x \ . 
    \end{equation}
\end{proof}
Note that for small $x$, Lemma~\ref{lemma:ExponentialSeriesBound} implies $\sum_{k=2}^\infty \frac{x^k}{k!} = \mathcal{O}(x^2)$.

\subsection{Perturbed time evolution}
\label{sec:AppendixTimeEvolutionError}
We will require a bound on the error of a time evolution under a perturbed Hamiltonian. To this end, we will use the following standard result.
\begin{lemma}
    Consider a time-dependent Hamiltonian $H_t' = H_t + V_t$, where $H_t$ is considered the true Hamiltonian and $V_t$ is a time-dependent error, both defined on a time interval $t \in [0, \tau]$ almost everywhere (a.e.). The respective time-evolution operators are given as $U_\tau = \mathcal{T}\exp \left( -i \int_0^\tau H_t dt\right)$ and $U_\tau' = \mathcal{T}\exp \left( -i \int_0^\tau H_t' dt\right)$. Then,
    \begin{equation}
        \begin{aligned}
            U_\tau' &= U_\tau - i \int_0^\tau U_{\tau, t} V_t U_{t, 0} dt \\
            &+ (-i)^2 \int_0^\tau \int_0^t U_{\tau, t}V_t U_{t, t'} V_{t'} U_{t', 0} dt' dt \\
            &+ ... \ ,
        \end{aligned}
        \label{eq:TimeEvolutionWithError}
    \end{equation}
    where $U_{t_1, t_2} := U_{t_1}U_{t_2}^\dagger$. Furthermore, for small $\tau \cdot \esssup_{t\in[0,\tau]} \|V_t \|$, we find
    \begin{equation}
        \| U_\tau' - U_\tau \| = \mathcal{O}\left( \tau \esssup_{t \in [0,\tau]} \|V_t \| \right) \ .
    \end{equation}
    \label{lemma:TimeEvolutionError}
\end{lemma}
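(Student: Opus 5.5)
The plan is to use the Dyson series, or interaction picture, with respect to the unperturbed evolution $U_t$, and then bound the remainder with Lemma~\ref{lemma:ExponentialSeriesBound}.

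\textbf{Series expansion.} Define the interaction-picture operator $W_t := U_t^\dagger U_t'$. Differentiating, and using $i\partial_t U_t = H_t U_t$ and $i\partial_t U_t' = (H_t+V_t)U_t'$ almost everywhere, gives
\[
i\partial_t W_t = \tilde V_t W_t, \qquad \tilde V_t := U_t^\dagger V_t U_t .
\]
This holds a.e., so $W_t$ is the absolutely continuous solution of the integral equation $W_t = \mathbb{I} - i\int_0^t \tilde V_{t'} W_{t'}\,dt'$. Iterating this equation produces the time-ordered Dyson series
\[
W_\tau = \mathbb{I} + \sum_{k\ge 1} (-i)^k \int_{0\le t_1\le\dots\le t_k\le\tau} \tilde V_{t_k}\cdots \tilde V_{t_1}\, dt_1\cdots dt_k .
\]
Multiplying on the left by $U_\tau$ and inserting $U_{t}U_{t}^\dagger$ between factors, each product becomes $U_{\tau,t_k}V_{t_k}U_{t_k,t_{k-1}}\cdots V_{t_1}U_{t_1,0}$ with $U_{t_1,t_2}=U_{t_1}U_{t_2}^\dagger$. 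This is exactly Eq.~\eqref{eq:TimeEvolutionWithError}.

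\textbf{Convergence.} Unitary factors have norm one, so the $k$-th term is bounded by $\frac{1}{k!}(\tau\, \esssup_t\|V_t\|)^k$, which follows from the volume of the ordered simplex. The series therefore converges absolutely in operator norm. This also justifies the iteration, because the remainder after $k$ steps vanishes.

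\textbf{Error bound.} Set $x := \tau\,\esssup_{t\in[0,\tau]}\|V_t\|$. Then
\[
\|U_\tau' - U_\tau\| \le \sum_{k\ge1}\frac{x^k}{k!} = x + \sum_{k\ge2}\frac{x^k}{k!} \le x + x^2e^x
\]
by Lemma~\ref{lemma:ExponentialSeriesBound}. For small $x$ this is $\mathcal{O}(x)$; one could equally write it as $e^x-1$. In fact, the Duhamel identity $U'_\tau-U_\tau = -i\int_0^\tau U_{\tau,t}V_tU'_{t}\,dt$ gives the nonperturbative bound $\le x$ directly.

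\textbf{Main obstacle.} The only delicate point is regularity. $H_t$ and $V_t$ are only defined almost everywhere, so the differential equation must be read in the Carathéodory sense. I would assume the operators are essentially bounded and measurable, which makes the time-ordered exponentials absolutely continuous solutions of the integral equations, and then carry out all manipulations at the level of those integral equations.
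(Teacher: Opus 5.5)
Your proposal is correct and follows the paper's proof essentially step for step: the interaction-picture operator $U_t^\dagger U_t'$ and its Dyson series, then left multiplication by $U_\tau$ with $U_tU_t^\dagger$ inserted, then the termwise bound $x^k/k!$, and finally Lemma~\ref{lemma:ExponentialSeriesBound}. Your Duhamel remark is a small sharpening the paper does not use, since it gives $\|U_\tau'-U_\tau\|\le \tau\,\esssup_{t\in[0,\tau]}\|V_t\|$ directly, without any smallness assumption.
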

\begin{proof}
    Consider the operator $V_t$ in the time-dependent rotating frame $U_t = \mathcal{T} \exp \left( -i\int_0^t H_{t'} dt' \right)$, which is the solution to the Schr\"odinger equation
    \begin{equation}
        H_t U_t = i \partial_t U_t \ \text{ a.e.}
    \end{equation}
    The Schr\"odinger equation using the full Hamiltonian
    \begin{equation}
        (H_t + V_t) U_t' = i \partial_t U_t' \ \text{ a.e.}
    \end{equation}
    transforms into the rotating frame as
    \begin{equation}
        \tilde{V}_t \tilde{U}_t = i \partial_t \tilde{U}_t \ \text{ a.e.,}
        \label{eq:TransformedDysonSeries}
    \end{equation}
    where $\tilde{V}_t := U_t^\dagger V_t U_t$ and $\tilde{U}_t := U_t^\dagger U_t' U_0$. This can easily be verified by evaluating the time derivative of $\tilde{U}_t$ on the right-hand side almost everywhere. Integrating Eq.~\eqref{eq:TransformedDysonSeries} results in
    \begin{equation}
        \tilde{U}_\tau = \mathbb{I} - i \int_0^\tau \tilde{V}_t dt + (-i)^2 \int_0^\tau \int_0^t \tilde{V}_t \tilde{V}_{t'} dt' dt + ... \ .
    \end{equation}
    Then, with $U_0 = \mathbb{I}$ and plugging the definitions into the Dyson series
    \begin{equation}
        U_\tau^\dagger U_\tau' = \mathbb{I} - i \int_0^\tau U_t^\dagger V_t U_t dt + (-i)^2 \int_0^\tau \int_0^t U_t^\dagger V_t U_t U_{t'}^\dagger V_{t'} U_{t'} dt' dt + ...
    \end{equation}
    and multiplying from the left with $U_\tau$ yields
    \begin{equation}
        U_\tau' = U_\tau - i \int_0^\tau U_{\tau, t} V_t U_{t, 0} dt + (-i)^2 \int_0^\tau \int_0^t U_{\tau, t}V_t U_{t, t'} V_{t'} U_{t', 0} dt' dt + ... \ ,
    \end{equation}
    where $U_{t_2, t_1} := U_{t_2} U_{t_1}^\dagger$, thus proving Eq.~\eqref{eq:TimeEvolutionWithError}. Using Eq.~\eqref{eq:TimeEvolutionWithError}, we can bound the error of the time evolution as
    \begin{equation}
        \begin{aligned}
            \left\| U_\tau' - U_\tau \right\| &= \left\| - i \int_0^\tau U_{\tau, t} V_t U_{t, 0} dt + (-i)^2 \int_0^\tau \int_0^t U_{\tau, t}V_t U_{t, t'} V_{t'} U_{t', 0} dt' dt + ... \right\| \\
            &\leq \left\| \int_0^\tau U_{\tau, t} V_t U_{t, 0} dt \right\| + \left\| \int_0^\tau \int_0^t U_{\tau, t}V_t U_{t, t'} V_{t'} U_{t', 0} dt' dt \right\| + ... \\
            &\leq \int_0^\tau \| U_{\tau, t} V_t U_{t, 0} \| dt + \int_0^\tau \int_0^t \| U_{\tau, t}V_t U_{t, t'} V_{t'} U_{t', 0} \| dt' dt + ... \\
            &\leq \int_0^\tau \| V_t  \| dt + \int_0^\tau \int_0^t \| V_t \| \| V_{t'} \| dt' dt + ... \\
            &\leq \sum_{n=1}^\infty \frac{\tau^n}{n!} \esssup_{t\in[0,\tau]} \|V_t \|^n \ ,
        \end{aligned}
    \end{equation}
    where we used the triangle inequality on the second and third lines and the submultiplicativity of the operator norm and that $\|U\| = 1$ for all unitaries $U$ on the fourth line. Thus, as $\tau \cdot \esssup_{t\in[0,\tau]} \|V_t \| \rightarrow 0$ and using Lemma~\ref{lemma:ExponentialSeriesBound}, in the leading order, we have $\left\| U_\tau' - U_\tau \right\| \leq \tau \esssup_{t\in[0,\tau]} \|V_t \| + \mathcal{O}\left( \tau^2 \esssup_{t\in[0,\tau]} \|V_t \|^2 \right) = \mathcal{O}\left( \tau \esssup_{t \in [0,\tau]} \|V_t \| \right)$.
\end{proof}

\subsection{Gershgorin circle theorem}
We will also use the Gershgorin circle theorem to establish bounds on operator norms. For the proof, we refer to the literature.
\begin{theorem} (Gershgorin Circle Theorem~\cite{Gershgorin31})
    \label{theorem:Gershgorin}
    Let $M$ be a complex square matrix with elements $m_{ij}$. All eigenvalues $\lambda$ of $M$ lie in the union of the disks $D(m_{ii}, R_i)$ centered on $m_{ii}$ and with radii
    \begin{equation}
        R_i = \sum_{j\neq i} |m_{ij}| \ .
    \end{equation}
\end{theorem}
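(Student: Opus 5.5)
The plan is the classical maximal-component argument, which uses nothing beyond the eigenvalue equation. Let $M$ be an $n\times n$ complex matrix with entries $m_{ij}$, let $\lambda$ be an eigenvalue of $M$, and fix a corresponding eigenvector $v \neq 0$ with components $v_1,\dots,v_n$. Because $v\neq 0$, there is an index $i$ with $|v_i| = \max_j |v_j| > 0$. I will show that $\lambda$ lies in the disk $D(m_{ii},R_i)$ for this particular $i$. That suffices, since it places every eigenvalue in the union of the Gershgorin disks.

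First I take the $i$-th row of the eigenvalue equation $Mv = \lambda v$, namely $\sum_j m_{ij} v_j = \lambda v_i$. I isolate the diagonal term to get
\begin{equation}
    (\lambda - m_{ii})\, v_i = \sum_{j\neq i} m_{ij} v_j \ .
\end{equation}
Next I take absolute values and apply the triangle inequality. Using $|v_j| \leq |v_i|$ for all $j$, this gives
\begin{equation}
    |\lambda - m_{ii}|\,|v_i| \leq \sum_{j\neq i} |m_{ij}|\,|v_j| \leq |v_i| \sum_{j\neq i} |m_{ij}| = |v_i|\, R_i \ .
\end{equation}
Finally, dividing by $|v_i|>0$ yields $|\lambda - m_{ii}| \leq R_i$, that is, $\lambda \in D(m_{ii},R_i)$.

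There is no genuine obstacle here. The only step needing care is the choice of the index $i$: it must be a component of \emph{maximal} modulus. That choice is what makes every ratio $|v_j|/|v_i|$ at most one, and it also guarantees we never divide by zero. For the way the theorem is used later in the paper, bounding operator norms, I would add one remark. For a Hermitian (or normal) matrix the operator norm equals the spectral radius, so the result gives $\|M\| \leq \max_i \left(|m_{ii}| + R_i\right)$.
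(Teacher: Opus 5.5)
Your proof is correct and complete. Choosing an eigenvector component of maximal modulus is the standard proof of Gershgorin's theorem; the paper gives no proof of its own and simply refers to the literature, where this is the usual argument. Your closing remark also matches how the paper applies the result in Lemma~\ref{lemma:SBounds}: there $S_t$ is anti-Hermitian (hence normal) with zero diagonal, so $\|S_t\|$ equals its spectral radius and is bounded by the maximal absolute row sum $\max_z \sum_{z'} |\langle z|S_t|z'\rangle|$.
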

With these tools in place, we can proceed to prove the main result formally.

%% file: 10_appendix_c.tex
\section{Proof of Theorem~\ref{theorem:MainResult}}
\label{sec:AppendixProofOfMainResult}
After having set the stage, we can prove the main result Theorem~\ref{theorem:MainResult}. As we have discussed in the main text, in order to prove polynomial equivalence, one has to show that the gate model simulates the global TFIM and that the global TFIM simulates the gate model. The gate model can simulate the global TFIM trivially, so to prove Theorem~\ref{theorem:MainResult} what is missing is to show the other direction. In Theorem~\ref{theorem:MainResultFormal}, we provide a constructive simulation method, including the scaling of the required resources. Theorem~\ref{theorem:MainResult} then follows directly from Theorem~\ref{theorem:MainResultFormal}, as $n_q', T, |J| = \poly \left(n_q, p, \varepsilon^{-1} \right)$, thus the global TFIM can simulate the gate model with polynomial overhead.\\
We will use the fact that any quantum circuit $U_{target}$ consisting of $p$ gates acting on $n_q$ qubits initialized in the state $\ket{0}$ can be mapped to a $n_q'$-qubit Hamiltonian with diagonal interactions and a sequence of $p'$ quantum gates conditionally applied to four groups of qubits by the CP-method. After the CP-method, the logical qubits live on wires, where they can be moved back and forth by an appropriate pulse sequence. We denote with $\Tr_{RO}$ the partial trace that only leaves the $n_q$ physical qubits where the logical quantum state lives at the end of the computation. This could be either at whichever location the computation terminates or at a dedicated read-out set of qubits. If executed correctly, the CP-method leaves all other qubits that do not carry logical quantum information in a classical product state. We will show that the $p'$ gates from the CP-method can be mapped to truly global pulses of the transverse field of the global TFIM, up to a controllable error.
\begin{theorem}
    Given any quantum circuit $U_{\text{target}}$ with $p$ gates acting on $n_q$ qubits. Then there is a Hamiltonian $H_t$ as in Eq.~\eqref{eq:DefFullHamiltonian} and a Lipschitz continuous schedule $\Gamma_t: [0, T] \rightarrow [0, \Gamma^*]$ with Lipschitz constant $\mathcal{O}(\Gamma^*)$, constructed from a concatenation of $\Gamma$-pulses and wait-times $\tau_i$ as defined in Section~\ref{sec:GammaPulseDefinition}, that generates a unitary
    \begin{equation}
        U_{T} = \mathcal{T} \exp\left( -i \int_0^T H_t dt \right)
    \end{equation}
    acting on $n_q'$ qubits such that
    \begin{equation}
        \| \Tr_{RO}(U_{T}\ket{0}^{\otimes n_q'}\bra{0}^{\otimes n_q'}U_{T}^\dagger) - U_{\text{target}} \ket{0}^{\otimes n_q}\bra{0}^{\otimes n_q}U_{\text{target}}^\dagger \| \leq \varepsilon \ .
    \end{equation}
    Furthermore, $n_q' = \mathcal{O}(n_q^2)$, $|J| = \mathcal{O}(p n_q^3 \varepsilon^{-1})$ and $T = \mathcal{O} \left( p^{8+2\nu} n_q^{30 + 8\nu} \varepsilon^{-7-2\nu} \right)$ for any $\nu >0$.
    \label{theorem:MainResultFormal}
\end{theorem}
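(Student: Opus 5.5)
The plan is to prove Theorem~\ref{theorem:MainResultFormal} in three layers. First, the CP-method reduces the target circuit to $p'=\mathcal{O}(pn_q)$ conditional group gates $U_{\mathcal{I}_i,i}$ on $n_q'=\mathcal{O}(n_q^2)$ qubits. Second, each such gate is realized in the $|J|\to\infty$ (projected) model by a string of $\Gamma$-pulses and phase gates. Third, the projected dynamics are compared with the true finite-$|J|$ dynamics. The errors are then summed by the triangle inequality: unitaries are norm preserving, so per-gate errors add linearly. Finally, the standard estimate $\|U\ket{\psi}\bra{\psi}U^\dagger - V\ket{\psi}\bra{\psi}V^\dagger\|\le 2\|U-V\|$, together with contractivity of $\Tr_{RO}$ and Eq.~\eqref{eq:CPMethodRO}, converts the state-vector bound into the stated density-matrix bound.

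\textbf{Step 1: single pulse in the projected model.} I expand $U_\Gamma$ in the interaction picture with respect to $\sum_n h_n Z_n$. By Lemma~\ref{lemma:TimeEvolutionError} with $V_t=\Gamma_t PH_XP$, the first-order term acting on qubit $n$ is $-i\Gamma^*\int_0^\tau g_t e^{-2ih_{\mathcal{K}}t}dt\,\sigma^\pm_n$, projected by $P_n$. By the assumption on $g_t$ this has size $\Theta(\Gamma^*)$. The higher-order terms are bounded using Lemma~\ref{lemma:ExponentialSeriesBound}.
\begin{itemize}
\item If no neighbour is excited, the single-qubit pieces act independently. Their conditional action is then an $SU(2)$ element with $\sin\lambda_\mathcal{K}=\Theta(\Gamma^*)$.
\item Simultaneous excitation of two adjacent qubits, where the projector $P$ couples them, appears only at second order. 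Summed over edges it contributes $\mathcal{O}(n_q'\Gamma^{*2})$ per pulse.
\end{itemize}

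\textbf{Step 2: conditional gate on a chosen group.} Lemma~\ref{lemma:SecondStep} supplies the interference construction. Pairs of pulses, with phases chosen so that qubits in $\mathcal{I}^c$ are returned to their initial state to first order while qubits in $\mathcal{I}$ are amplified, yield an effective conditional rotation on $\mathcal{I}$ only. Feeding these into Lemma~\ref{lemma:UniversalSU2} gives any target conditional gate with $N=\mathcal{O}(1/\Gamma^*)$ pulses. Each pulse carries a cross-talk error of $\mathcal{O}(n_q'\Gamma^{*2})$. The bookkeeping I would do is: a local error per qubit of $\mathcal{O}(\Gamma^{*2})$ (or $\mathcal{O}(n_q'\Gamma^{*2})$ globally) times $N$ pulses, plus an extra factor $n_q'$ from summing over sites. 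This gives the $\mathcal{O}(n_q'^2\Gamma^*)$ term in Lemma~\ref{lemma:UniversalGateSequence}.

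The phases between pulses are produced by waiting times from Theorem~\ref{theorem:FillingTime}, with $\tau_i=\mathcal{O}(\varepsilon_{\text{phase}}^{-3-\nu})$. Each phase is off by at most $\varepsilon_{\text{phase}}$ on each of the $n_q'$ qubits, contributing $\mathcal{O}(n_q'\varepsilon_{\text{phase}})$ per phase gate. Over $N$ phase gates this is $\mathcal{O}(n_q'\varepsilon_{\text{phase}}/\Gamma^*)$, which reproduces Eq.~\eqref{eq:PerGateErrorMainText}.

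\textbf{Step 3: finite $|J|$ (main obstacle).} I apply a Schrieffer--Wolff transformation $e^{S_t}$ with $\|S_t\|=\mathcal{O}(n_q'\Gamma^*/|J|)$ (Lemma~\ref{lemma:SBounds}). The effective Hamiltonian then equals $PH_tP\oplus QH_tQ$ up to $\mathcal{O}(\cdot/|J|)$ (Lemma~\ref{lemma:HamiltonianErrorBound}), including the $i\dot S_t$ term. That term is controlled by the Lipschitz bound $\mathcal{O}(\Gamma^*)$ on $\Gamma_t$; this is exactly why the Lipschitz assumption and the conditions $g_0=g_\tau=0$ are needed, since they make the frame change trivial between pulses. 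Lemma~\ref{lemma:TimeEvolutionError} then gives the finite-$J$ error.

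The delicate part is keeping this error from growing with the long waiting times. During waits $\Gamma_t=0$, so $H_Z$ is exactly diagonal and $S=0$: no error accrues there. Errors accrue only during the $N$ pulses of duration $\tau=\mathcal{O}(1)$, each contributing $\mathcal{O}(n_q'\Gamma^*/|J|)$. Over $N=\mathcal{O}(1/\Gamma^*)$ pulses this totals $\mathcal{O}(n_q'/|J|)$ per gate, consistent with Lemma~\ref{lemma:FullConditionalGateWithError}.

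\textbf{Step 4: choice of parameters.} The total error is
\[
\delta_{\text{total}}=\mathcal{O}\bigl(p'(n_q'\varepsilon_{\text{phase}}/\Gamma^*+n_q'^2\Gamma^*+n_q'/|J|)\bigr).
\]
I would set each of the three terms to $\varepsilon/3$. The last term gives $|J|=\mathcal{O}(pn_q^3\varepsilon^{-1})$. The middle term gives $\Gamma^*=\mathcal{O}(p^{-1}n_q^{-5}\varepsilon)$. The first term then forces $\varepsilon_{\text{phase}}=\mathcal{O}(\varepsilon\Gamma^*/(pn_q^3))$. The evolution time is
\[
T=\mathcal{O}\bigl(p'\cdot N\cdot \varepsilon_{\text{phase}}^{-3-\nu}\bigr),
\]
up to the constant number of sub-sequences per CP gate (for example the four steps of a single-qubit gate). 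Substituting the parameter choices should reproduce the stated exponents $p^{8+2\nu}n_q^{30+8\nu}\varepsilon^{-7-2\nu}$ after renaming $\nu$, which I would check as the final arithmetic.
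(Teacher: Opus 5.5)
Your plan follows the paper's proof essentially step for step. It uses the CP reduction, the $\pm iZ$ interference pairs of Lemma~\ref{lemma:SecondStep}, Lemma~\ref{lemma:UniversalSU2} with $N=\Theta(\Gamma^{*-1})$, and the Schrieffer--Wolff estimate applied only during pulses (since $S_t=0$ whenever $\Gamma_t=0$). It then sums the per-gate errors linearly and makes the same parameter choices. The substitution you defer gives $T=\mathcal{O}(p^{8+2\nu}n_q^{30+8\nu}\varepsilon^{-7-2\nu})$ exactly, with no renaming of $\nu$.

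\textbf{The gap.} The one genuine gap is the sentence ``Feeding these into Lemma~\ref{lemma:UniversalSU2} gives any target conditional gate.'' The waiting-time phase gates are not conditional. On a blockaded $n\in\mathcal{I}$ the pulses act trivially, but the waits still imprint $\exp(-i\phi_{V_n}Z_n)$ with $\phi_{V_n}=\sum_i h_n\tau_i$, a value fixed by how you compiled $V_n$. A blockaded qubit sits in $\ket{0}$, so this is a phase on exactly those branches in which a neighbour is excited. At the interface it is an unintended logical $Z$-rotation during propagation. For the $2\pi$-pulse on $\mathcal{C}$ it produces a controlled phase different from $\mathbb{I}-2\ket{00}\bra{00}$.

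So Lemma~\ref{lemma:UniversalSU2} alone yields the paper's $V_n^+$, not $V_n^{(c)}$, and the per-gate bound you feed into Step~4 is not established for the CP gates. The paper closes this by implementing $e^{-i\phi_{\text{corr}}Z}X^+e^{-i\phi_{\text{corr}}Z}X^+V^+$. On unblocked qubits, $Xe^{-i\phi Z}X=e^{i\phi Z}$ cancels the corrections. On blocked qubits the phases simply add to $2\phi_{\text{corr}}+2\phi_X+\phi_V$, which is chosen to vanish. This triples the pulse count and leaves every scaling unchanged.

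\textbf{Two smaller points.}

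First, your cross-talk bookkeeping double counts. Suppose a pulse costs $\mathcal{O}(n_q'\Gamma^{*2})$ globally. This is in fact true, via a standard Duhamel commutator estimate, because only same-site and adjacent-pair commutators survive on the bounded-degree graph. Then $N=\mathcal{O}(\Gamma^{*-1})$ pulses give $\mathcal{O}(n_q'\Gamma^*)$, and the ``extra factor $n_q'$ from summing over sites'' is spurious. The paper's $n_q'^2\Gamma^*$ has a different origin: the crude bound $\tfrac12(n_q'\tau\Gamma^*)^2e^{n_q'\tau\Gamma^*}$ on the whole second-order Dyson remainder. Either way the result is a valid upper bound, and your sharper count would even lower the $n_q$ exponent.

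Second, $\Tr_{RO}$ is contractive in trace norm, not in the operator norm used in the statement. For a difference of two pure states the trace norm is exactly twice the operator norm, so this costs only a factor $2$.
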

\begin{proof}
    Using the CP-method, the circuit can be mapped to $n_q' = \mathcal{O}(n_q^2)$ qubits separated into four types $\mathcal{K} \in \{ \mathcal{A}, \mathcal{B}, \mathcal{C}, \mathcal{D} \}$ arranged on a ZZ-coupling graph $\mathcal{G}$ and a sequence of $p' = \mathcal{O} \left( n_q p \right)$ gates $\{ U_{\mathcal{I}_i, i}\}_{i=1}^{p'}$ acting globally on all qubits of type $\mathcal{I}_i \in \{ \mathcal{A}, \mathcal{B}, \mathcal{C}, \mathcal{D}, \mathcal{B} \cup \mathcal{D} \}$ under the condition that the neighboring qubits are in the ground state, as defined in Section~\ref{sec:ConditionalGatesDefinition}. In the limit of $|J|\rightarrow \infty$, when the gates act exactly conditionally, the CP-method allows for exact simulation of the circuit. Therefore, for the proof, we need to show two things:
    \begin{enumerate}
        \item that by controlling the schedule $\Gamma_t$, as defined in Section~\ref{sec:GammaPulseDefinition}, we can realize the conditional gates $U_{\mathcal{I}_i, i}$ necessary for the CP-method with a controllable error;
        \item that the error can also be controlled when $|J| = \poly\left(n_q, p, \varepsilon^{-1}\right)$.
    \end{enumerate}
    We arrange the qubits in the universal arrangement $\mathcal{G}$, as depicted in Figure~\ref{fig:UniversalConfig}, and couple them to $H_Z$ as in Eq.~\eqref{eq:RydbergHamiltonian}, which is a special case of $H_Z$ in Eq.~\eqref{eq:DefHz}. We choose $h_{\mathcal{K}} = \mathcal{O}(1)$ to be algebraic $\mathbb{Q}$-linearly independent numbers such that Theorem~\ref{theorem:Schmidt} is satisfied non-trivially, and a $\Gamma_t: [0, \tau] \rightarrow \mathbb{R}_{\geq 0}$ with $\left| \int_0^\tau \Gamma_t e^{i2h_\mathcal{K}t} dt \right| = \Theta(\Gamma^*)$ $\forall \mathcal{K}$, where $\Gamma^* := \max_{[0, \tau]} |\Gamma_t|$ can be controlled. We will first show universality in the limit $|J| \rightarrow \infty$. In this limit, the $J$-couplings act effectively as projectors $P$ onto the computational subspace, as defined in Section~\ref{sec:ComputationalSubspaceDefinition}. In a second step, we consider the error introduced by finite $|J|$.

    \subsection{Basic properties of the generated unitary}
    In the limit of infinite $|J|$, the effective Hamiltonian on $\mathcal{P}$ reads
    \begin{equation}
        H_{P, t} = PH_tP = PH_ZP + \Gamma_t P H_X P \ ,
    \end{equation}
    where $H_X = \sum_n X_n$, $H_Z = \sum_n h_n Z_n + \frac{J}{4} \sum_{(m,n) \in E(\mathcal{G})} ( 1 - Z_m )( 1 - Z_n )$ and $P$ is the projector onto the computational subspace. Since the interacting terms of $H_Z$ are zero in the computational subspace, as no neighboring qubits can be excited, we get
    \begin{equation}
        \begin{aligned}
            PH_ZP &= \sum_n h_n PZ_nP + \frac{J}{4} \sum_{(m,n) \in E(\mathcal{G})} P(1-Z_m)(1-Z_n)P \\
            &= \sum_n h_n PZ_nP \ .
        \end{aligned}
    \end{equation}
    Consider the unitary generated by a $\Gamma$-pulse
    \begin{equation}
        U_\Gamma = \mathcal{T} \exp \left( -i \int_0^\tau H_{P,t} dt \right) \ .
    \end{equation}
    Consider the basis change to the rotating frame $U_{Z,t} = \exp \left( itPH_ZP\right)$. Note that $[P, H_Z] = 0$ and therefore $[P, U_{Z,t}] = 0$. In the rotating frame, $U_\Gamma$ transforms to
    \begin{equation}
        \begin{aligned}
            \tilde{U}_\Gamma &= \mathcal{T} \exp \left( -i \int_0^\tau \Gamma_t U_{Z,t} P H_X P U_{Z,t}^\dagger dt  \right) \\
            &= \mathcal{T} \exp \left( -i \int_0^\tau \Gamma_t P U_{Z,t} H_X U_{Z,t}^\dagger P dt  \right) \\
            &= \mathbb{I} -i \int_0^\tau \Gamma_t P U_{Z,t} H_X U_{Z,t}^\dagger P dt + (-i)^2 \int_0^\tau \int_0^t \Gamma_t \Gamma_{t'} P U_{Z,t} H_X U_{Z,t}^\dagger P U_{Z,t'} H_X U_{Z,t'}^\dagger P dt' dt + ... \ ,
        \end{aligned}
        \label{eq:DysonSeries}
    \end{equation}
    where in the last line we used the Dyson series. We will use $\tilde{A}_t$ to denote the operators $A_t$ transformed into the rotating frame. The generator transforms to
    \begin{equation}
        \begin{aligned}
            \tilde{H}_t :=& \Gamma_t P U_{Z, t} H_X U_{Z,t}^\dagger P \\
            =& \Gamma_t \sum_n P \exp\left( i t h_n Z_n \right) X_n  \exp\left( -i t h_n Z_n \right) P \\
            =& \Gamma_t  \sum_n P \begin{pmatrix}
                0 & e^{it2h_n} \\
                e^{-it2h_n} & 0
            \end{pmatrix}_n P \ ,
        \end{aligned}
        \label{eq:DysonGenerator}
    \end{equation}
    where the subscript $n$ indicates that the $2\times2$-matrix acts on qubit $n$. Using the triangle inequalities and submultiplicativity of operator norms, the norm of $\tilde{H}_{t}$ can be bounded by
    \begin{equation}
        \begin{aligned}
            \| \tilde{H}_t \| &= |\Gamma_t| \left\| \sum_n P \exp\left( i t h_n Z_n \right) X_n  \exp\left( -i t h_n Z_n \right) P \right\| \\
            &\leq |\Gamma_t|  \sum_n \left\| P \exp\left( i t h_n Z_n \right) X_n  \exp\left( -i t h_n Z_n \right) P \right\| \\
            &\leq |\Gamma_t|  \sum_n \| P\|^2 \|X_n \| \| \exp\left( i t h_n Z_n \right) \| \| \exp\left(- i t h_n Z_n \right) \| \\
            &= |\Gamma_t| n_q' \leq \Gamma^* n_q'\ ,
        \end{aligned}
        \label{eq:GeneratorBound}
    \end{equation}
    where in the last line we used that $\|X_n\| = \|\exp\left( i t h_n Z_n \right)\| = \| P \| = 1$.\\
    We will operate in the limit of weak drive $\Gamma^* \ll 1$, such that the global drive via the transverse field can be considered to be an individual drive of each qubit. More formally, we will later choose $\Gamma^*$ such that we can approximate $\tilde{U}_\Gamma$ by its Dyson series up to first order. We need to show that the truncation error is controlled. To this end, let us denote the terms of the Dyson series Eq.~\eqref{eq:DysonSeries} as
    \begin{equation}
        \tilde{U}_{k,t} := (-i)^k \int_{0 \leq t'\leq t'' ... \leq t^{(k)} \leq t} \tilde{H}_{t^{(k)}} ... \tilde{H}_{t'} \ dt' ... dt^{(k)} \ ,
    \end{equation}
    such that Eq.~\eqref{eq:DysonSeries} can be expressed as
    \begin{equation}
        \tilde{U}_\Gamma = \mathbb{I} + \sum_{k=1}^\infty \tilde{U}_{k,\tau} \ .
    \end{equation}
    Using the submultiplicativity of operator norms, triangle inequalities and Eq.~\eqref{eq:GeneratorBound}, we can establish the bound
    \begin{equation}
        \begin{aligned}
            \| \tilde{U}_{k,t} \| &= \left\| \int_{0 \leq t'\leq t'' ... \leq t^{(k)} \leq t} \tilde{H}_{t^{(k)}} ... \tilde{H}_{t'} \ dt' ... dt^{(k)} \right\| \\
            &\leq \int_{0 \leq t'\leq t'' ... \leq t^{(k)} \leq t} \left\| \tilde{H}_{t^{(k)}} ... \tilde{H}_{t'} \right\| dt' ... dt^{(k)} \\
            &\leq \int_{0 \leq t'\leq t'' ... \leq t^{(k)} \leq t} \| \tilde{H}_{t^{(k)}} \| ... \| \tilde{H}_{t'} \| dt' ... dt^{(k)} \\
            &\leq (n_q')^k \int_{0 \leq t'\leq t'' ... \leq t^{(k)} \leq t} \left|\Gamma_{t^{(k)}}\right| ... \left|\Gamma_{t'}\right| dt' ... dt^{(k)} \\
            &\leq \frac{(n_q' \tau \Gamma^*)^k}{k!} \ .
        \end{aligned}
        \label{eq:DysonTermsBound}
    \end{equation}
    From this it follows straightforwardly that the Dyson series Eq.~\eqref{eq:DysonSeries} converges absolutely. Let us denote by $\Delta_\Gamma$ the error of the first-order Dyson expansion
    \begin{equation}
        \Delta_\Gamma := \tilde{U}_\Gamma - \mathbb{I} - \tilde{U}_{1,\tau} \ .
    \end{equation}
    We can bound the norm of the error using Lemma~\ref{lemma:ExponentialSeriesBound} by
    \begin{equation}
        \begin{aligned}
            \left\| \Delta_\Gamma \right\| &= \left\| \sum_{k=2}^\infty \tilde{U}_{k, \tau} \right\| \leq \sum_{k=2}^\infty \left\| \tilde{U}_{k, \tau} \right\| \leq \sum_{k=2}^\infty \frac{(n_q' \tau \Gamma^*)^k}{k!} \leq \frac{(n_q' \tau \Gamma^*)^2}{2} e^{n_q' \tau \Gamma^*} \ ,
        \end{aligned}
        \label{eq:FirstOrderError}
    \end{equation}
    and thus $\|\Delta_\Gamma \| = \mathcal{O} \left( n_q'^{2} \Gamma^{*2} \right)$ for small $n_q' \Gamma^*$. By choosing the maximum transverse field $\Gamma^*$ small enough, we can up to a small error $\delta$ approximate the effect of the $\Gamma$-pulse $\tilde{U}_\Gamma$ by the Dyson expansion up to first order.

    \subsection{Independent drives on subsets of qubits}
    For the implementation of the CP method, it is required to drive independent sets of non-adjacent qubits. Consider $\mathcal{I} \in \{ \mathcal{A}, \mathcal{B}, \mathcal{C}, \mathcal{D}, \mathcal{B} \cup \mathcal{D} \}$ the set of non-adjacent qubits required to be driven by a given gate $U_{\mathcal{I}_i, i}$ given by the CP-method, while $\mathcal{I}^c$ denotes its complement. Note that $|\mathcal{I}| + |\mathcal{I}^c| = n_q'$. The $\Gamma$-pulse acting only on the subset of qubits in $\mathcal{I}$ is given by the unitary
    \begin{equation}
        \begin{aligned}
            \tilde{U}_{\mathcal{I}} &= \mathcal{T}\exp\left(-i \int_0^\tau \sum_{n \in \mathcal{I}} \Gamma_t P \tilde{X}_{n,t} P dt\right) \\
            &= \mathbb{I} -i \int_0^\tau \sum_{n\in\mathcal{I}} \Gamma_t P \tilde{X}_{n,t} P dt + \int_0^\tau \int_0^t \sum_{n \in \mathcal{I}} \sum_{m \in \mathcal{I}} \Gamma_t \Gamma_{t'} P \tilde{X}_{n,t} P \tilde{X}_{m,t'} P dt' dt +  \ ...   \ .
        \end{aligned}
    \end{equation}
    By the argument presented above for the terms of the Dyson expansion of $\tilde{U}_\Gamma$, it follows analogously that
    \begin{equation}
        \| \tilde{U}_{\mathcal{I}, k} \| \leq \frac{(|\mathcal{I}| \tau\Gamma^*)^k}{k!} \ .
    \end{equation}
    Respectively, we define $\tilde{U}_{\mathcal{I}^c}$ with the analogous bound on the terms of the Dyson expansion. Then we can also define the errors $\Delta_{\mathcal{I}}$ and $\Delta_{\mathcal{I}^c}$ of the first-order approximation and bound their norms as
    \begin{equation}
        \begin{aligned}
            \| \Delta_\mathcal{I} \| &\leq \frac{(|\mathcal{I}| \tau \Gamma^*)^2}{2} e^{|\mathcal{I}| \tau \Gamma^*} = \mathcal{O} \left( \Gamma^{*2} |\mathcal{I}|^2 \right) \ , \\
            \| \Delta_{\mathcal{I}^c} \| &\leq \frac{(|\mathcal{I}^c| \tau\Gamma^*)^2}{2} e^{|\mathcal{I}^c| \tau \Gamma^*} = \mathcal{O} \left( \Gamma^{*2} |\mathcal{I}^c|^2 \right) \ .
        \end{aligned}
        \label{eq:DysonErrorSubsets}
    \end{equation}
    Compare this with the globally driven Ising model, where we can realize the unitary $\tilde{U}_{\Gamma}$ as in Eq.~\eqref{eq:DysonSeries}. We will proceed by proving two lemmata. First, we show that the following holds for small $\Gamma^*$:
    \begin{lemma}
        \label{lemma:FirstStep}
        Let $\tilde{U}_\Gamma$, $\tilde{U}_\mathcal{I}$ and $\tilde{U}_{\mathcal{I}^c}$ as above. Then, for small $n_q'\Gamma^*$, it holds that
        \begin{equation}
            \| \tilde{U}_{\mathcal{I}} \tilde{U}_{\mathcal{I}^c } - \tilde{U}_\Gamma \| = \mathcal{O}\left( (n_q' \Gamma^*)^2 \right) \ ,
            \label{eq:ErrorClaim1}
        \end{equation}
        and
        \begin{equation}
            \| \tilde{U}_{\mathcal{I}^c} \tilde{U}_{\mathcal{I} } - \tilde{U}_\Gamma \| = \mathcal{O}\left( (n_q' \Gamma^*)^2 \right) \ .
            \label{eq:ErrorClaim2}
        \end{equation}
    \end{lemma}
    We will present the proof of Lemma~\ref{lemma:FirstStep} below, but first we outline for what we will use it. If we let the system evolve for a time $\tau_i$ under the Hamiltonian $H_Z$, i.e. at $\Gamma_t = 0$, the resulting unitary is
    \begin{equation}
        U_Z(\tau_i) = \exp \left( -i\sum_{n} h_n \tau_i Z_n\right) = \prod_n \exp \left( -i h_n \tau_i Z_n\right) \ .
    \end{equation}
    As we require them in the following, we additionally define the phase gates acting exclusively on $\mathcal{I}$ and $\mathcal{I}^c$, respectively, as
    \begin{equation}
        \begin{aligned}
            U_{Z,\mathcal{I}}(\tau_i) &:= \exp \left(-i\sum_{n\in \mathcal{I}} h_n \tau_i Z_n \right) \ , \\
            U_{Z,\mathcal{I}^c}(\tau_i) &:= \exp \left(-i\sum_{n\in \mathcal{I}^c} h_n \tau_i Z_n \right) \ ,
        \end{aligned}
    \end{equation}
    Note that $U_Z(\tau_i) = U_{Z, \mathcal{I}}(\tau_i) U_{Z, \mathcal{I}^c}(\tau_i) = U_{Z, \mathcal{I}^c}(\tau_i) U_{Z, \mathcal{I}}(\tau_i)$. Using these phase gates and Lemma~\ref{lemma:FirstStep}, in a second step, we show that with appropriate phase gates $U_Z(\tau_i)$ chosen between and after two consecutive $\Gamma$-pulses, we can isolate the effect of the $\Gamma$-pulses on $\mathcal{I}$.\\
    We will use the $\tau_i$ to realize a certain set of phases $\phi_\mathcal{K}$. Therefore, to simplify notation, in the following we will replace the arguments $\tau_i$ with the phases $\phi_{\mathcal{I}}$, where $\phi_\mathcal{I} = \{ \phi_\mathcal{K} : \mathcal{K} \subseteq \mathcal{I} \}$ denotes the set of phases acting on qubit groups $\mathcal{K} \subseteq \mathcal{I}$. For example, we will consider the phase gates $U_{Z,\mathcal{I}}(\phi_\mathcal{I})$ as functions of the $\phi_{\mathcal{K}}$ for $\mathcal{K} \subseteq \mathcal{I}$. Theorem~\ref{theorem:FillingTime} and the appropriate choice of the $h_\mathcal{K}$ guarantee that there exist a $\tau_i$, such that we can consider the phases $\phi_\mathcal{K}$ as independent variables, up to an arbitrarily small error $\varepsilon_{\text{phase}} >0$. Consider the following lemma.
    \begin{lemma}
        \label{lemma:SecondStep}
        Let $\tilde{U}_\Gamma$, $\tilde{U}_\mathcal{I}$, $\tilde{U}_{\mathcal{I}^c}$, $U_Z$, $U_{Z,\mathcal{I}}$ and $U_{Z, \mathcal{I}^c}$ be as above and given two sets of phases $\phi_{\mathcal{I}, 1}$ and $\phi_{\mathcal{I}, 2}$. Then it holds for small $n_q' \Gamma^*$, $\varepsilon_{\text{phase}}>0$ and $\nu>0$ that there are $\tau_1, \tau_2 = \mathcal{O}\left( \varepsilon_{\text{phase}}^{-3-\nu}\right)$ such that
        \begin{equation}
            \left\|\tilde{U}_{\mathcal{I}} U_{Z,\mathcal{I}}(\phi_{\mathcal{I}, 1}) \tilde{U}_{\mathcal{I}} U_{Z,\mathcal{I}}(\phi_{\mathcal{I}, 2}) - \tilde{U}_\Gamma U_Z(\tau_1) \tilde{U}_\Gamma U_Z(\tau_2) \right\| = \mathcal{O}\left( n_q'^2 \Gamma^{*2} + n_q' \varepsilon_{\text{phase}} \right) \ .
            \label{eq:IndependentDriveClaim}
        \end{equation}
    \end{lemma}
    Lemma~\ref{lemma:SecondStep} shows that we can construct a sequence of global pulses with controllable $\phi_{\mathcal{I}, i}$, which only acts non-trivially on the desired subset $\mathcal{I}$ up to an error controlled by $\Gamma^*$ and $\varepsilon_{\text{phase}}$. We begin by proving the first step.

    \subsubsection{Proof of Lemma~\ref{lemma:FirstStep}}
    \begin{proof}
        Eq.~\eqref{eq:ErrorClaim1} follows directly from the Dyson series of $\tilde{U}_\mathcal{I}$, $\tilde{U}_{\mathcal{I}^c}$ and $\tilde{U}_\Gamma$ and that for the first-order terms we find by definition
        \begin{equation}
            \tilde{U}_{\mathcal{I}, 1} + \tilde{U}_{\mathcal{I}^c, 1} = \tilde{U}_{\Gamma, 1} \ .
        \end{equation}
        We can evaluate
        \begin{equation}
            \begin{aligned}
                &\tilde{U}_{\mathcal{I}} \tilde{U}_{\mathcal{I}^c} = \left( \mathbb{I} + \tilde{U}_{\mathcal{I}, 1} + \Delta_\mathcal{I} \right) \left( \mathbb{I} + \tilde{U}_{\mathcal{I}^c, 1} + \Delta_{\mathcal{I}^c} \right) \\
                =&\mathbb{I} + \tilde{U}_{\mathcal{I}, 1} + \tilde{U}_{\mathcal{I}^c, 1} + \Delta_{\mathcal{I}} + \Delta_{\mathcal{I}^c} + \Delta_{\mathcal{I}}\tilde{U}_{\mathcal{I}^c, 1} + \tilde{U}_{\mathcal{I}, 1} \Delta_{\mathcal{I}^c} + \Delta_{\mathcal{I}} \Delta_{\mathcal{I}^c} \\
                =&\mathbb{I} + \tilde{U}_{\Gamma,1} + \Delta_{\mathcal{I}} + \Delta_{\mathcal{I}^c} + \Delta_{\mathcal{I}}\tilde{U}_{\mathcal{I}^c, 1} + \tilde{U}_{\mathcal{I}, 1} \Delta_{\mathcal{I}^c} + \Delta_{\mathcal{I}} \Delta_{\mathcal{I}^c} \ ,
            \end{aligned}
        \end{equation}
        Recalling the bounds Eq.~\eqref{eq:FirstOrderError} and Eq.~\eqref{eq:DysonErrorSubsets}, we find for the norm of the difference for small $n_q'\Gamma^*$ using the triangle inequality and the submultiplicativity of operator norms
        \begin{equation}
            \begin{aligned}
                \left\| \tilde{U}_{\mathcal{I}} \tilde{U}_{\mathcal{I}^c} - \tilde{U}_\Gamma\right\| &= \left\| \Delta_{\mathcal{I}} + \Delta_{\mathcal{I}^c} + \Delta_{\mathcal{I}}\tilde{U}_{\mathcal{I}^c, 1} + \tilde{U}_{\mathcal{I}, 1} \Delta_{\mathcal{I}^c} + \Delta_{\mathcal{I}} \Delta_{\mathcal{I}^c} - \Delta_\Gamma \right\| \\
                &\leq \left\| \Delta_{\mathcal{I}} \right\| + \left\|\Delta_{\mathcal{I}^c} \right\| + \left\| \Delta_{\mathcal{I}}\tilde{U}_{\mathcal{I}^c, 1} \right\| + \left\|\tilde{U}_{\mathcal{I}, 1} \Delta_{\mathcal{I}^c} \right\| + \left\|\Delta_{\mathcal{I}} \Delta_{\mathcal{I}^c} \right\|  + \| \Delta_\Gamma \| \\
                &\leq \left\| \Delta_{\mathcal{I}} \right\| + \left\|\Delta_{\mathcal{I}^c} \right\| + \left\| \Delta_{\mathcal{I}} \right\| \left\| \tilde{U}_{\mathcal{I}^c, 1} \right\| + \left\|\tilde{U}_{\mathcal{I}, 1} \right\| \left\| \Delta_{\mathcal{I}^c} \right\| + \left\|\Delta_{\mathcal{I}} \right\| \left\| \Delta_{\mathcal{I}^c} \right\| + \| \Delta_\Gamma \|\\
                &=\mathcal{O}\left( (n_q' \tau \Gamma^*)^2 \right).
            \end{aligned}
        \end{equation}
        Analogously, Eq.~\eqref{eq:ErrorClaim2} can be derived, the only difference being the order of some of the operators, which does not affect the bounds. This concludes the proof.
    \end{proof}
    Lemma~\ref{lemma:FirstStep} is a tool that now allows us to prove the next key step. For this step, we will additionally use the quasi-independent phase control of the qubit groups, which is guaranteed by Theorem~\ref{theorem:FillingTime}.

    \subsubsection{Proof of Lemma~\ref{lemma:SecondStep}}
    \begin{proof}
        We continue to prove the claim Eq.~\eqref{eq:IndependentDriveClaim}.
        According to Eq.~\eqref{eq:ErrorClaim1} and Eq.~\eqref{eq:ErrorClaim2} in Lemma~\ref{lemma:FirstStep}, we can state that
        \begin{equation}
            \begin{aligned}
                \tilde{U}_\Gamma = \tilde{U}_{\mathcal{I}} \tilde{U}_{\mathcal{I}^c} + \Delta_{\mathcal{I}\mathcal{I}^c} = \tilde{U}_{\mathcal{I}^c} \tilde{U}_{\mathcal{I}} + \Delta_{\mathcal{I}^c\mathcal{I}} \ ,
            \end{aligned}
            \label{eq:SecondStep01}
        \end{equation}
        where $\Delta_{\mathcal{I}\mathcal{I}^c}$, $\Delta_{\mathcal{I}^c\mathcal{I}}$ are the errors, for which $\| \Delta_{\mathcal{I}\mathcal{I}^c} \|, \| \Delta_{\mathcal{I}^c\mathcal{I}} \| = \mathcal{O}\left( (n_q' \Gamma^*)^2 \right) $. Using Eq.~\eqref{eq:SecondStep01}, we can write
        \begin{equation}
            \begin{aligned}
                \tilde{U}_\Gamma U_Z(\tau_1) \tilde{U}_\Gamma U_Z(\tau_2) &= \left( \tilde{U}_{\mathcal{I}}\tilde{U}_{\mathcal{I}^c} + \Delta_{\mathcal{I}\mathcal{I}^c} \right) U_Z(\tau_1) \left( \tilde{U}_{\mathcal{I}^c}\tilde{U}_{\mathcal{I}} + \Delta_{\mathcal{I}^c\mathcal{I}} \right) U_Z(\tau_2) \ . 
            \end{aligned}
            \label{eq:SecondStep02}
        \end{equation}
        We can further decompose $U_Z$. Recall that $U_Z(\tau_i) = U_{Z,\mathcal{I}}(\tau_i)U_{Z,\mathcal{I}^c}(\tau_i) = U_{Z,\mathcal{I}^c}(\tau_i) U_{Z,\mathcal{I}}(\tau_i)$. Note that since $\tilde{U}_{\mathcal{I}} \ (\tilde{U}_{\mathcal{I}^c})$ directly drives qubits in $\mathcal{I}$ ($\mathcal{I}^c$) and only acts on the qubits in $\mathcal{I}^c \ (\mathcal{I})$ via the diagonal projector $P$, it commutes with $U_{Z,\mathcal{I}^c}$ ($U_{Z,\mathcal{I}}$). Furthermore, according to Theorem~\ref{theorem:FillingTime}, for any arbitrarily small error $\varepsilon_{\text{phase}} > 0$ and $\nu > 0$ there are $\tau_i = \mathcal{O}\left( \varepsilon_{\text{phase}}^{-3-\nu} \right)$, $i=1,2$, such that for all $\mathcal{K}$ and for any set of phases $\phi_{\mathcal{K},i}$ we find that
        \begin{equation}
            |h_\mathcal{K} \tau_i - \phi_{\mathcal{K}, i}| \leq \varepsilon_{\text{phase}} \mod 2\pi \ .
        \end{equation}
        Since by definition the set $\mathcal{I}$ contains all qubits of one or two types, we can choose phases $\phi_{\mathcal{K},i}$ such that for any $\mathcal{K} \subseteq \mathcal{I}^c$
        \begin{equation}
            \begin{aligned}
                |h_\mathcal{K}\tau_1 - \pi/2 | &\leq \varepsilon_{\text{phase}} \mod 2\pi \ , \\
                |h_\mathcal{K}\tau_2 - 3\pi/2 | &\leq \varepsilon_{\text{phase}} \mod 2\pi \ ,
            \end{aligned}
        \end{equation}
        and for given $\phi_{\mathcal{K},i}$ when $\mathcal{K} \subseteq \mathcal{I}$ that
        \begin{equation}
            \begin{aligned}
                |h_{\mathcal{K}}\tau_i - \phi_{\mathcal{K},i} | &\leq \varepsilon_{\text{phase}} \mod 2\pi \ .
            \end{aligned}
        \end{equation}
        Therefore, we find that
        \begin{equation}
            \begin{aligned}
                U_{Z,\mathcal{I}^c}(\tau_i) &= \exp \left( \pm i \frac{\pi}{2} \sum_{n \in \mathcal{I}^c} Z_n \right) \exp \left( i \sum_{n \in \mathcal{I}^c} \varepsilon_{n,i} Z_n \right) \\
                &= \left( \prod_{n \in \mathcal{I}^c} \pm iZ_n \right) \exp \left( i \sum_{n \in \mathcal{I}^c} \varepsilon_{n,i} Z_n \right) \ ,
            \end{aligned}
        \end{equation}
        and
        \begin{equation}
            \begin{aligned}
                U_{Z, \mathcal{I}}(\tau_i) &= \exp \left( -i \sum_{n \in \mathcal{I}} \phi_{n,i} Z_n \right) \exp \left( -i \sum_{n \in \mathcal{I}} \varepsilon_{n,i} Z_n \right) \\
                &= U_{Z, \mathcal{I}}(\phi_{\mathcal{I}, i}) \exp \left( -i \sum_{n \in \mathcal{I}} \varepsilon_{n,i} Z_n \right) \ ,
            \end{aligned}
        \end{equation}
        where $\varepsilon_{n,i}$ are the respective errors with $|\varepsilon_{n,i}| \leq \varepsilon_{\text{phase}}$ and the signs $\pm$ depend on $i=1$ or $i=2$. Hence, we can write for small $\varepsilon_{\text{phase}}$ that
        \begin{equation}
            \begin{aligned}
                \left\| U_{Z,\mathcal{I}^c}(\tau_i) - \prod_{n\in\mathcal{I}^c} \pm iZ_n \right\| &< C_{\mathcal{I}^c}|\mathcal{I}^c| \varepsilon_{\text{phase}} \ , \\
                \left\| U_{Z,\mathcal{I}}(\tau_i) - U_{Z, \mathcal{I}}(\phi_{\mathcal{I}, i}) \right\| &< C_{\mathcal{I}} |\mathcal{I}| \varepsilon_{\text{phase}} \ ,
            \end{aligned}
            \label{eq:PhaseError}
        \end{equation}
        for constants $C_{\mathcal{I}}$, $C_{\mathcal{I}^c}$, which follows easily from the Taylor expansion of the errors $\exp(-i\varepsilon_{n,i} Z_n)$. Equivalently, we can state for small $\varepsilon_{\text{phase}}$
        \begin{equation}
            \begin{aligned}
                U_Z(\tau_i) &= U_{Z, \mathcal{I}}(\phi_{\mathcal{I}, i}) \prod_{n \in \mathcal{I}^c} \pm iZ_n + \Delta_{Z,i} \ ,
            \end{aligned}
            \label{eq:SecondStep03}
        \end{equation}
        where $\Delta_{Z,i}$ is the respective error term with $\|\Delta_{Z,i} \| \leq C_Z n_q' \varepsilon_{\text{phase}}$ for some constant $C_Z$.\\
        Using Eq.~\eqref{eq:SecondStep01} and Eq.~\eqref{eq:SecondStep03} in Eq.~\eqref{eq:SecondStep02}, we find
        \begin{equation}
            \begin{aligned}
                &\tilde{U}_\Gamma U_Z(\tau_1) \tilde{U}_\Gamma U_Z(\tau_2) \\
                = &\left( \tilde{U}_{\mathcal{I}}\tilde{U}_{\mathcal{I}^c} + \Delta_{\mathcal{I}\mathcal{I}^c} \right) \left( U_{Z, \mathcal{I}}(\phi_{\mathcal{I}, 1}) \prod_{n \in \mathcal{I}^c} +iZ_n + \Delta_{Z,1} \right) \left( \tilde{U}_{\mathcal{I}^c}\tilde{U}_{\mathcal{I}} + \Delta_{\mathcal{I}^c\mathcal{I}} \  \right) \left( U_{Z, \mathcal{I}}(\phi_{\mathcal{I}, 2}) \prod_{n \in \mathcal{I}^c} -iZ_n + \Delta_{Z,2} \right) \\
                = &\tilde{U}_{\mathcal{I}}\tilde{U}_{\mathcal{I}^c}  \left( U_{Z, \mathcal{I}}(\phi_{\mathcal{I}, 1}) \prod_{n \in \mathcal{I}^c} iZ_n \right) \tilde{U}_{\mathcal{I}^c}\tilde{U}_{\mathcal{I}} \left( U_{Z, \mathcal{I}}(\phi_{\mathcal{I}, 2}) \prod_{n \in \mathcal{I}^c} -iZ_n  \right) + \Delta_{\Gamma Z} \ ,
            \end{aligned}
            \label{eq:SecondStep04}
        \end{equation}
        where the operator $\Delta_{\Gamma Z}$ absorbs all the error terms that occur when multiplying the terms in the second line of Eq.~\eqref{eq:SecondStep04}. The error terms in Eq.~\eqref{eq:SecondStep04} are either multiplied by the unitary terms, leaving the norms unchanged, or by other error terms, which for small errors leaves the leading order unchanged. Using the triangle inequality and the submultiplicativity of the operator norms, we can thus bound the norm of $\Delta_{\Gamma Z}$ as
        \begin{equation}
            \begin{aligned}
                \| \Delta_{\Gamma Z} \| &\leq \|\Delta_{\mathcal{I}\mathcal{I}^c} \| + \|\Delta_{\mathcal{I}^c\mathcal{I}} \| + \| \Delta_{Z,1} \| + \| \Delta_{Z,2} \| + \ ... \ + \|\Delta_{\mathcal{I}\mathcal{I}^c} \| \|\Delta_{\mathcal{I}^c\mathcal{I}} \| \| \Delta_{Z,1} \| \| \Delta_{Z,2} \| \\
                &= \mathcal{O} \left( n_q' \varepsilon_{\text{phase}} + (n_q' \Gamma^*)^{2} \right)
            \end{aligned}
            \label{eq:DeltaGammaZ}
        \end{equation}
        in the limit of small $\varepsilon_{\text{phase}}$ and $n_q' \Gamma^*$. Using that $U_{Z,\mathcal{I}^c}$ ($U_{Z,\mathcal{I}}$) commutes with $\tilde{U}_{\mathcal{I}}$ ($\tilde{U}_{\mathcal{I}^c}$), we further write
        \begin{equation}
            \begin{aligned}
            &\tilde{U}_\Gamma U_Z(\tau_1) \tilde{U}_\Gamma U_Z(\tau_2) \\
            = & \tilde{U}_{\mathcal{I}} U_{Z, \mathcal{I}}(\phi_{\mathcal{I}, 1}) \tilde{U}_{\mathcal{I}^c} \left( \prod_{n \in \mathcal{I}^c} iZ_n \right) \tilde{U}_{\mathcal{I}^c} \left( \prod_{n \in \mathcal{I}^c} -iZ_n \right) \tilde{U}_{\mathcal{I}} U_{Z, \mathcal{I}}(\phi_{\mathcal{I}, 2}) + \Delta_{\Gamma Z} \ ,
            \end{aligned}
            \label{eq:MultiQubitWithZ}
        \end{equation}
        so that we can focus on the operators that act on $\mathcal{I}^c$. Recall the expansion to first order
        \begin{equation}
            \tilde{U}_{\mathcal{I}^c} = \mathbb{I} - i \sum_{n \in \mathcal{I}^c} \int_0^\tau \Gamma_t P \tilde{X}_{n,t} P dt + \Delta_{\mathcal{I}^c} \ ,
        \end{equation}
        and thus, using $[\tilde{X}_{n,t}, Z_m] = 0$ if $m \neq n$ and $[P, Z_n] = 0$ for all $n$, as well as the definition of $\tilde{X}_{n,t} = \exp (ith_nZ_n) X_n \exp (-ith_nZ_n)$ and $ZXZ = -X$,
        \begin{equation}
            \begin{aligned}
                \left( \prod_{n \in \mathcal{I}^c} iZ_n \right) \tilde{U}_{\mathcal{I}^c} \left( \prod_{n \in \mathcal{I}^c} -iZ_n \right) &= \mathbb{I} - i \cdot i \cdot (-i) \cdot \sum_{n \in \mathcal{I}^c} \int_0^\tau \Gamma_t Z_n P \tilde{X}_{n,t} P Z_n dt + \mathcal{O}(\| \Delta_{\mathcal{I}^c} \|) \\
                &= \mathbb{I} - i  \sum_{n \in \mathcal{I}^c} \int_0^\tau \Gamma_t P Z_n \tilde{X}_{n,t} Z_n P dt + \mathcal{O}(\| \Delta_{\mathcal{I}^c} \|) \\
                &= \mathbb{I} + i \sum_{n \in \mathcal{I}^c} \int_0^\tau \Gamma_t P \tilde{X}_{n,t} P dt + \mathcal{O}(\| \Delta_{\mathcal{I}^c} \|) \ ,
            \end{aligned}
        \end{equation}
        which is the inverse of $\tilde{U}_{\mathcal{I}^c}$ up to first order. Then, we find for the unitaries acting on $\mathcal{I}^c$ in Eq.~\eqref{eq:MultiQubitWithZ} that
        \begin{equation}
            \begin{aligned}
                \tilde{U}_{\mathcal{I}^c} \left( \prod_{n \in \mathcal{I}^c} iZ_n \right) \tilde{U}_{\mathcal{I}^c} \left( \prod_{n \in \mathcal{I}^c} -iZ_n \right) &= \mathbb{I} + \Delta_{\mathcal{I}^c}' \ ,
            \end{aligned}
            \label{eq:CancelledDrive}
        \end{equation}
        where $\| \Delta_{\mathcal{I}^c}' \| =  \mathcal{O} \left( \| \Delta_{\mathcal{I}^c} \| \right)$. For the operator Eq.~\eqref{eq:MultiQubitWithZ}, we find with the error Eq.~\eqref{eq:DeltaGammaZ} and Eq.~\eqref{eq:CancelledDrive} that
        \begin{equation}
            \begin{aligned}
                \tilde{U}_\Gamma U_Z(\tau_1) \tilde{U}_\Gamma U_Z(\tau_2) &= \tilde{U}_{\mathcal{I}} U_{Z, \mathcal{I}}(\phi_{\mathcal{I}, 1}) \left( \mathbb{I}+\Delta_{\mathcal{I}^c}' \right) \tilde{U}_{\mathcal{I}} U_{Z, \mathcal{I}}(\phi_{\mathcal{I}, 2}) + \Delta_{\Gamma Z} \\
                &= \tilde{U}_{\mathcal{I}} U_{Z, \mathcal{I}}(\phi_{\mathcal{I}, 1}) \tilde{U}_{\mathcal{I}} U_{Z, \mathcal{I}}(\phi_{\mathcal{I}, 2}) + \mathcal{O}\left( \| \Delta_{\mathcal{I}^c}' \| \right) + \Delta_{\Gamma Z} \\
                &= \tilde{U}_{\mathcal{I}} U_{Z, \mathcal{I}}(\phi_{\mathcal{I}, 1}) \tilde{U}_{\mathcal{I}} U_{Z, \mathcal{I}}(\phi_{\mathcal{I}, 2}) + \mathcal{O}\left(n_q' \varepsilon_{\text{phase}} + n_q'^2 \Gamma^{*2} \right) \ ,
            \end{aligned}
            \label{eq:IndependentDriveResult}
        \end{equation}
        where $\|\Delta_{\mathcal{I}^c}' \| =  \mathcal{O} \left( \| \Delta_{\mathcal{I}^c} \| \right) = \mathcal{O} \left( (n_q' \Gamma^*)^2 \right)$ from Eq.~\eqref{eq:DysonErrorSubsets}, which concludes the proof.
    \end{proof}
    Thus, we can conclude from Lemma~\ref{lemma:SecondStep} that for the right choice of waiting times $\tau_1$ and $\tau_2$ between and after two $\Gamma$-pulses, we can construct independent driving of the subset of qubits $\mathcal{I}$ programmable via the phases $\phi_{\mathcal{I},i}$ up to an error controlled by the $\tau_i$ and the maximal transverse field $\Gamma^*$.

    \subsection{Universal gates on subset $\mathcal{I}$}
    Having established how to emulate a targeted drive on a selected subset of qubits $\mathcal{I}$, we now show how to construct arbitrary conditional gates acting on qubits in $\mathcal{I} \in \{ \mathcal{A}, \mathcal{B}, \mathcal{C}, \mathcal{D}, \mathcal{B} \cup \mathcal{D} \}$. To this end, we will prove the following lemma using Lemma~\ref{lemma:UniversalSU2}. Recall that $\mathcal{I}$ only contains non-adjacent qubits.
    \begin{lemma}
        \label{lemma:UniversalGateSequence}
        Let $\tilde{U}_\Gamma$ and $U_Z$ as above. For any $V_\mathcal{K} \in \text{SU}(2)$, $\mathcal{K} \in \{ \mathcal{A}, \mathcal{B}, \mathcal{C}, \mathcal{D} \} \subseteq \mathcal{I}$, and its projected application $V_\mathcal{K}^{(c)}$ defined as in Section~\ref{sec:ConditionalGatesDefinition}, and any $\nu >0$ and $\varepsilon_{\text{phase}} >0$, there exist an $N\in\mathbb{N}$, $N = \Theta(\Gamma^{*-1})$ and waiting times $\tau_i = \mathcal{O}\left( \varepsilon_{\text{phase}}^{-3 -\nu} \right)$, $i=0, 1, ... N$, such that
        \begin{equation}
            \begin{aligned}
                U_Z(\tau_0)&\prod_{i=1}^{N/2} \tilde{U}_\Gamma U_Z(\tau_{2i-1}) \tilde{U}_\Gamma U_Z(\tau_{2i}) \\
                =& \prod_{\mathcal{K} \subseteq \mathcal{I}} \prod_{n\in\mathcal{K}} V_{\mathcal{K},n}^{(c)} + \mathcal{O}\left( n_q' \varepsilon_{\text{phase}} / \Gamma^{*} + n_q'^2 \Gamma^{*} \right) \ .
            \end{aligned}
            \label{eq:UniversalGateWithError}
        \end{equation}
    \end{lemma}
    \begin{proof}
        Using Lemma~\ref{lemma:SecondStep}, we can independently drive each relevant subset of qubits $\mathcal{I} \in \{ \mathcal{A}, \mathcal{B}, \mathcal{C}, \mathcal{D}, \mathcal{B} \cup \mathcal{D} \}$ by concatenating sequences of $\Gamma$-pulses with the respective waiting times $\tau_i$ between them. Note that $\tilde{U}_\mathcal{I}$ factorizes, since the CP-method only requires simultaneous driving of non-adjacent qubits, and therefore can be written as
        \begin{equation}
            \begin{aligned}
                \tilde{U}_\mathcal{I} &= \prod_{n\in\mathcal{I}} \mathcal{T}\exp\left(-i\int_0^\tau \Gamma_t P\tilde{X}_{n,t}P dt \right) \\
                &= \prod_{n\in\mathcal{I}} \left( \mathbb{I}-i\int_0^\tau \Gamma_t P\tilde{X}_{n,t}P dt + (-i)^2 \int_0^\tau \int_0^t \Gamma_t \Gamma_{t'} P\tilde{X}_{n,t} P \tilde{X}_{n,t'} P dt' dt + ... \right) \ .
            \end{aligned}
        \end{equation}
        Consider a concatenation of $\tilde{U}_\Gamma$ and $U_Z(\tau_i)$ and apply Lemma~\ref{lemma:SecondStep} to obtain
        \begin{equation}
            \begin{aligned}
                &U_Z(\tau_0) \prod_{i=1}^{N/2} \tilde{U}_\Gamma U_Z(\tau_{2i-1}) \tilde{U}_\Gamma U_Z(\tau_{2i}) \\
                =&U_{Z,\mathcal{I}}(\phi_{\mathcal{I},0}) \prod_{i=1}^{N/2} \left[ \tilde{U}_{\mathcal{I}} U_{Z, \mathcal{I}}(\phi_{\mathcal{I}, 2i-1}) \tilde{U}_{\mathcal{I}} U_{Z, \mathcal{I}}(\phi_{\mathcal{I},2i}) + \mathcal{O}\left(n_q' \varepsilon_{\text{phase}} + n_q'^2 \Gamma^{*2} \right) \right] \\
                =&U_{Z,\mathcal{I}}(\phi_{\mathcal{I},0}) \prod_{i=1}^{N/2} \tilde{U}_{\mathcal{I}} U_{Z, \mathcal{I}}(\phi_{\mathcal{I},2i-1}) \tilde{U}_{\mathcal{I}} U_{Z, \mathcal{I}}(\phi_{\mathcal{I},2i}) + \mathcal{O}\left(Nn_q' \varepsilon_{\text{phase}} + N n_q'^2 \Gamma^{*2} \right) \ ,
            \end{aligned}
            \label{eq:ApproximateSequence}
        \end{equation}
        where in the last line we again consider $\varepsilon_{\text{phase}}$ and $n_q' \Gamma^*$ to be small. Furthermore, in Eq.~\eqref{eq:ApproximateSequence}, we have chosen $\tau_0$ such that the phases $\phi_{\mathcal{I}^c, 0} = 0$ up to an error $\leq n_q' \varepsilon_{\text{phase}}$, which is absorbed into the error term.\\
        Recall that the $\Gamma$-pulses act on qubit $n \in \mathcal{K} \subseteq \mathcal{I}$ as a unitary of the form
        \begin{equation}
            \begin{aligned}
                \mathcal{T} \exp \left( -i \int_0^\tau \Gamma_t P\tilde{X}_{n,t}P dt \right) = \begin{pmatrix}
                    \sqrt{1-r_\mathcal{K}} e^{-i(\alpha_n + \beta_n)} & \sqrt{r_\mathcal{K}} e^{-i(\alpha_n - \beta_n)} \\
                    \sqrt{r_\mathcal{K}} e^{i(\alpha_n - \beta_n)} & \sqrt{1-r_\mathcal{K}} e^{i(\alpha_n + \beta_n)}
                \end{pmatrix}^{(c)}_n \ ,
            \end{aligned}
        \end{equation}
        where, as before, the subscript indicates that the operator acts on qubit $n$ and the superscript indicates that the operator acts conditionally if $n$ is not blockaded. If $n$ is blockaded, $\tilde{U}_{\mathcal{I}}$ acts as the identity. However, the phase gates $U_{Z,\mathcal{I}}$ act non-trivially in both cases. By definition, $\mathcal{I}$ only contains non-adjacent qubits. Therefore, the sequence in Eq.~\eqref{eq:ApproximateSequence} does not change if any of the qubits in $\mathcal{I}$ are blockaded or not and either the entire sequence acts as the uncontrolled gate, or only the phases are applied. Thus, we can construct arbitrary conditional gates by a proper choice of phases $\phi_{\mathcal{I},i}$, as we will demonstrate now.\\
        Recall that with Eq.~\eqref{eq:DysonSeries} and Eq.~\eqref{eq:DysonGenerator}, it follows for the off-diagonal element of the $\Gamma$-pulse acting on a non-blockaded qubit $n\in \mathcal{K} \subseteq \mathcal{I}$ that
        \begin{equation}
            \sqrt{r_\mathcal{K}} =\left|\bra{0} \mathcal{T} \exp \left( -i \int_0^\tau \Gamma_t \tilde{X}_{n,t} dt \right) \ket{1} \right| = \left| -i \int_0^\tau \Gamma_t e^{i2h_\mathcal{K} t} dt  + \mathcal{O}\left( \Gamma^{*2} \right) \right| = \Theta(\Gamma^*) \ ,
        \end{equation}
        where the last equality holds by assumption. Applying Lemma~\ref{lemma:UniversalSU2}, there is an $N_0 = \mathcal{O}\left( \max_{\mathcal{K} \subseteq \mathcal{I}} r_\mathcal{K}^{-1/2} \right) = \mathcal{O}\left( \Gamma^{*-1} \right)$ such that for $N / 3 \geq N_0$ the phases $\phi_{\mathcal{I},i}$ can be chosen such that the sequence acts as any given $V_n \in \text{SU(2)}$ element on $\mathcal{I}$, i.e.
        \begin{equation}
            U_{Z, \mathcal{I}}(\phi_{\mathcal{I}, 0})\prod_{i=1}^{N/6} \tilde{U}_{\mathcal{I}} U_{Z, \mathcal{I}}(\phi_{\mathcal{I}, 2i-1}) \tilde{U}_{\mathcal{I}} U_{Z, \mathcal{I}}(\phi_{\mathcal{I}, 2i}) = \prod_{n \in \mathcal{I}} V_n^+ \ ,
        \end{equation}
        where
        \begin{equation}
            V_n^+ = \begin{cases}
                V_n & \text{if $n$ is not blockaded} \\
                \exp(-i\phi_{V_n} Z_n) & \text{if $n$ is blockaded}
            \end{cases}
        \end{equation}
        and
        \begin{equation}
            \exp(-i\phi_{V_n} Z_n):= \prod_{i=0}^{N/3} \exp \left( -ih_n \tau_i Z_n \right) \ .
        \end{equation}
        The phase $\phi_{V_n}$ depends on the choice of $V_n$ via the choice of the waiting times $\tau_i$. Note that due to the phase acting on blockaded $n$, $V_n^+ \neq V_n^{(c)}$. To undo the undesired phase acting on blockaded qubits $n \in \mathcal{I}$, we apply five sequential operations. The sequence reads
        \begin{equation}
            \exp(-i \phi_{\text{corr}}Z) X^+_n \exp(-i \phi_{\text{corr}}Z) X^+_n V_n^+ \ .
        \end{equation}
        The phase gates can always be applied simply by idling at $\Gamma_t = 0$, while the $\tilde{X}_n$-gates have a representation of the form Eq.~\eqref{eq:ApproximateSequence}. The $X_n^+$ will act as bitflips on non-blockaded qubits, while acting as a phase $\exp(-i\phi_X Z)$ on blockaded qubits. Note, that $X_nZ_nX_n = -Z_n$ and therefore the $X_n$-gates and phase corrections cancel on non-blockaded qubits. On blockaded qubits, the sequence acts as phases, i.e.
        \begin{equation}
            \begin{aligned}
                &\exp(-i \phi_{\text{corr}}Z) X^+_n \exp(-i \phi_{\text{corr}}Z) X^+_n V_n^+ \\
                &= \begin{cases}
                \exp(-i \phi_{\text{corr}}Z) X \exp(-i \phi_{\text{corr}}Z) X V_n & \text{if $n$ is not blockaded} \\
                \exp\left(-i(2\phi_{\text{corr}} + 2\phi_X + \phi_V)Z\right) & \text{if $n$ is blockaded}
            \end{cases} \\
            &= \begin{cases}
                V_n & \text{if $n$ is not blockaded} \\
                \exp\left(-i(2\phi_{\text{corr}} + 2\phi_X + \phi_V)Z\right) & \text{if $n$ is blockaded}
            \end{cases}  \ .
            \end{aligned}
        \end{equation}
        Finally, we use the fact that $\phi_{\text{corr}}$ can always be chosen such that
        \begin{equation}
            2\phi_{\text{corr}} + 2\phi_X + \phi_V = 0
        \end{equation}
        and therefore
        \begin{equation}
        \exp(-i \phi_{\text{corr}}Z) X^+_n \exp(-i \phi_{\text{corr}}Z) X^+_n V_n^+ = 
            \begin{cases}
                V_n & \text{if $n$ is not blockaded} \\
                \mathbb{I} & \text{if $n$ is blockaded}
            \end{cases} = V_n^{(c)}
        \end{equation}
        can be realized by a sequence of $N$ $\Gamma$-pulses and phase gates. Each of the $X_n^+$ and $V_n^+$ requires $N/3$ pulses, resulting in a total of $N$ pulses. Thus, concatenating three sequences as in Eq.~\eqref{eq:ApproximateSequence} separated by appropriate phase gates, we can implement arbitrary conditional gates acting on $\mathcal{I}$ up to an additive error $\mathcal{O}\left( N n_q'\varepsilon_{\text{phase}} + N n_q'^2 \Gamma^{*2} \right)$ for some $N = \mathcal{O} \left( \max_{\mathcal{K} \subseteq \mathcal{I}} r_\mathcal{K}^{-1/2} \right)$. Therefore, we can set $N = \Theta(\Gamma^{*-1})$, resulting in
        \begin{equation}
            \begin{aligned}
                &U_Z(\tau_0)\prod_{i=1}^{N/2} \tilde{U}_\Gamma U_Z(\tau_{2i-1}) \tilde{U}_\Gamma U_Z(\tau_{2i}) \\
                =& \prod_{\mathcal{K} \subseteq \mathcal{I}} \prod_{n\in\mathcal{K}} V_{\mathcal{K}, n}^{(c)} + \mathcal{O}\left( n_q' \varepsilon_{\text{phase}} / \Gamma^{*} + n_q'^2 \Gamma^{*} \right)
            \end{aligned}
        \end{equation}
        for any conditional gates $V^{(c)}_{\mathcal{K}, n}$ in the limit of small $\Gamma^*$ and $\varepsilon_{\text{phase}}$. This concludes the proof.
    \end{proof}
    Since Lemma~\ref{lemma:UniversalGateSequence} applies to arbitrary conditional single-qubit gates on non-adjacent subsets $\mathcal{I} \in \{ \mathcal{A}, \mathcal{B}, \mathcal{C}, \mathcal{D}, \mathcal{B} \cup \mathcal{D} \}$, it naturally follows that any of the gates $U_{\mathcal{I}_i, i}$ required for the CP-method can be implemented up to a controllable error. There are two contributions to the error in Lemma~\ref{lemma:UniversalGateSequence}. While reducing the contribution from $n_q'^2 \Gamma^*$ by decreasing $\Gamma^*$ increases the error of the phase contribution $n_q' \varepsilon_{\text{phase}} \Gamma^{*-1}$, the phase error $\varepsilon_{\text{phase}}$ can be reduced independently by choosing longer waiting times $\mathcal{O}( \varepsilon_{\text{phase}}^{-3-\nu} )$, according to Theorem~\ref{theorem:FillingTime}. Therefore, the error contributions can be arbitrarily reduced. Thus, the concatenation of $\Gamma$-pulses with appropriately chosen waiting times $\tau_i$ between them can approximate the conditional single-qubit gates required for the CP-method.\\
    Note that we have proved Lemma~\ref{lemma:UniversalGateSequence} in the frame rotating with the local longitudinal fields $U_{Z,t} = \exp\left( it PH_ZP \right)$. The transformation into the lab frame is trivial and therefore Lemma~\ref{lemma:UniversalGateSequence} also holds for sequences consisting of $U_\Gamma$ and $U_Z(\tau_i)$.

    \subsection{Error due to finite $|J|$}
    Up to now, we considered the effect of the blockade Hamiltonian $H_J$ only in the limit of $|J| \rightarrow \infty$, such that it can be considered to act as the projector $P$ onto the computational subspace. Considering a large but finite $J$, we introduce an additional source of error. We need to show that for finite $J$, the error can be controlled at a cost polynomial in the inverse accuracy. To this end, we employ the Schrieffer-Wolff transformation represented by the unitary $\exp(S_t)$ with a time-dependent anti-Hermitian operator $S_t$. This represents a change into a time-dependent basis. Since we are free to choose this basis, we will select one in which the Hamiltonian generates the dynamics that we are interested in. Then we show that the error due to this basis change is negligible.\\
    Recall the full Hamiltonian
    \begin{equation}
        \begin{aligned}
            H_{t} &= \Gamma_t H_X + H_Z \\
            &= \Gamma_t \sum_n X_n + \sum_n h_n Z_n + H_J \ ,
        \end{aligned}
    \end{equation}
    with $H_J = \frac{J}{4} \sum_{(m,n) \in E(\mathcal{G})} (1-Z_m)(1-Z_n)$ as in Eq.~\eqref{eq:DefHJ}. Since the single-qubit terms commute with the interaction terms if $\Gamma_t = 0$, phase gates do not introduce an error due to finite $J$. It therefore suffices to consider the error accumulated during a $\Gamma$-pulse.\\
    $H_t$ can be written as
    \begin{equation}
        \begin{aligned}
            H_t = \begin{pmatrix}
            PH_tP & PH_t Q \\
            Q H_t P & QH_tQ
        \end{pmatrix} \ ,
        \end{aligned}
    \end{equation}
    where $P$ and $Q$ are the projectors onto the computational subspace $\mathcal{P}$ and its complement $\mathcal{Q}$ respectively. Note that our previous analysis concerned itself with the $\mathcal{P}$-component of $H_t$, i.e. $PH_tP = H_{P,t}$. Recall that we denote the time-evolution operator generated by $H_{P,t}$ with
    \begin{equation}
        U_\Gamma = \mathcal{T} \exp \left(-i\int_0^\tau H_{P,t} dt \right) \ .
    \end{equation}
    Furthermore, let us denote the time-evolution operator under the full Hamiltonian $H_t$ with
    \begin{equation}
        U_{0,\Gamma} = \mathcal{T} \exp \left( -i\int_0^\tau H_t dt \right) \ .
    \end{equation}
    We need to bound the error on the computational subspace $\| PU_{0,\Gamma}P - PU_\Gamma P \|$.  This can be achieved using the approximate time-dependent Schrieffer-Wolff transformation. To this end, we can write $H_{t}$ as
    \begin{equation}
        \begin{aligned}
            H_{t} &= H_d + V_{d, t} + V_{od,t} \ ,
        \end{aligned}
    \end{equation}
    where we use the time-independent diagonal
    \begin{equation}
        H_d = \sum_n h_n Z_n + \frac{J}{4} \sum_{(m,n) \in E(\mathcal{G})} (1 - Z_m) (1 - Z_n) \ ,
    \end{equation}
    while the time-dependent terms of the Hamiltonian are split into the block-diagonal terms
    \begin{equation}
        V_{d,t} = \Gamma_t PH_XP + \Gamma_t QH_XQ
    \end{equation}
    and the block-off-diagonal part
    \begin{equation}
        \begin{aligned}
            V_{od,t} &= \Gamma_t PH_XQ + \Gamma_t QH_XP \\
            &= \Gamma_t \sum_n \left( PX_nQ + QX_nP \right) \ ,
        \end{aligned}
    \end{equation}
    which contains the non-zero matrix elements of $\Gamma_t\sum_n X_n$ connecting the computational subspace $\mathcal{P}$ to its complement $\mathcal{Q}$. Consider the transformed Hamiltonian
    \begin{equation}
        \begin{aligned}
            H_t'=& \exp(S_t) (H_d + V_{d,t} + V_{od,t}) \exp(-S_t) \\
            =& \sum_{k=0}^\infty \frac{1}{k!} [S_t, H_{t}]_k \\
            =& H_d + V_{d,t} + V_{od,t} + [S_t, H_d] + [S_t, V_{d,t}] + [S_t, V_{od,t}] + \frac{1}{2} [S_t,[S_t, H_{t}]] + \ ... \ ,
        \end{aligned}
    \end{equation}
    where we used the Campbell identity. $S_t$ is an anti-Hermitian operator that generates the basis change. By choosing $S_t$ such that
    \begin{equation}
        V_{od,t} + [S_t, H_d] = 0 \ ,
        \label{eq:SWStep1}
    \end{equation}
    $H_t'$ is block-diagonal to first order. Using the definition of $\ad_{H_d} (A) := [H_d, A]$ and considering $d_z$ the eigenvalues of $H_d$, we find for the matrix elements
    \begin{equation}
        \left( \ad_{H_d}(A) \right)_{z,z'} = \bra{z} [H_d, A]\ket{z'} = (d_z - d_{z'})A_{z,z'} \ ,
    \end{equation}
    where $A_{z,z'} = \bra{z}A\ket{z'}$. From this it follows that if
    \begin{equation}
        \forall z, z': d_z - d_{z'} = 0 \Rightarrow A_{z,z'} = 0 \ ,
        \label{eq:InverseExistenceCondition}
    \end{equation}
    the inverse $\ad_{H_d}^{-1}$ exists if we restrict it to operators $A$ that only connect $\mathcal{P}$ and $\mathcal{Q}$ fulfilling the condition Eq.~\eqref{eq:InverseExistenceCondition}. Applying this to the problem at hand, since $V_{od,t}$ only connects $\mathcal{P}$ and $\mathcal{Q}$ via single bitflips, the associated energy difference transitioning from $\ket{z} \in \mathcal{P}$ to $\ket{z'} \in \mathcal{Q}$ is given by
    \begin{equation}
        |d_z - d_{z'}| = |\Delta_{z,z'} \pm J| \ ,
    \end{equation}
    where $\Delta_{z,z'}$ is the energy associated with the flip of the respective qubit, $J$ is the energy penalty imposed by the blockade Hamiltonian $H_J$, and the sign is determined according to the direction of the bit flip. Since $\Delta_{z,z'}$ is determined by a single bit flip, we have $|\Delta_{z,z'}| = 2|h_{\mathcal{K}}|$ for the $\mathcal{K}$ that contains the flipped qubit. Then, choosing $J$ such that $|d_z - d_{z'}| \neq 0 \ \forall z,z'$ with a Hamming distance of one (i.e. $z$ and $z'$ that differ by one bitflip), e.g. such that $|J| \gg \max_{\mathcal{K}} 2|h_\mathcal{K}|$, the inverse $\ad_{H_d}^{-1}$ exists and we can fulfill Eq.~\eqref{eq:SWStep1} by choosing
    \begin{equation}
        \begin{aligned}
            S_t :=& \ad_{H_d}^{-1}(V_{od,t}) \\
            =& \Gamma_t \ad_{H_d}^{-1}\left(\sum_n P X_n Q + Q X_n P \right) \ .
        \end{aligned}
        \label{eq:DefS}
    \end{equation}
    We will specify the proper choice of $J$ in the following. Note that $S_t$ is also Lipschitz continuous, since by assumption $\Gamma_t$ is Lipschitz continuous. Then, as per Rademacher's theorem, $S_t$ is differentiable almost everywhere (a.e.). Since the basis change generated by $S_t$ is time-dependent, the effective Hamiltonian in the time-dependent basis $\exp(S_t)$ includes an adiabatic gauge potential
    \begin{equation}
        H_{eff,t} = H_t' - i \partial_t S_t \ \text{ a.e.}
    \end{equation}
    Consider the definition of the nested commutator of the operators $A$ and $B$ as
    \begin{equation}
        \begin{aligned}
            &[A, B]_{1} := [A, B] \ , \\
            &[A, B]_k := [A, [A, B]_{k-1}] \ \text{if } k>1 \ .
        \end{aligned}
    \end{equation}
    Using $V_{d,t} + V_{od, t} = \Gamma_t \sum_n X_n$, we can define the error term a.e.
    \begin{equation}
        \begin{aligned}
            V_{\varepsilon, t} &:= -i\partial_t S_t + [S_t, V_{d,t}] + [S_t, V_{od,t}] + \sum_{k=2}^\infty \frac{1}{k!} [S_t, H_{t}]_k \\
            &= -i\partial_t S_t + \left[ S_t, \Gamma_t\sum_n X_n \right] + \sum_{k=2}^\infty \frac{1}{k!} [S_t, H_{t}]_k \ ,
        \end{aligned}
        \label{eq:DefVeps}
    \end{equation}
    such that
    \begin{equation}
        H_{eff, t} = H_d + V_{d,t} + V_{\varepsilon, t} = PH_tP + QH_tQ + V_{\varepsilon, t} \ .
    \end{equation}
    Therefore, in the time-dependent frame generated by $S_t$, the effective Hamiltonian is approximately block-diagonal acting on $\mathcal{P}$ as $H_{P,t}$. The error of the approximation is given by $V_{\varepsilon, t}$. The next steps are taken to bound the norm of $V_{\varepsilon, t}$. We will first bound the norm of $S_t$ and $\partial_t S_t$ using Gershgorin's circle theorem~\cite{Gershgorin31} for large $|J|$ as
     \begin{equation}
         \| S_t \| , \| \partial _tS_t \| = \mathcal{O} \left( \frac{n_q' \Gamma^*}{|J|} \right) \ .
     \end{equation}
    Subsequently, we show that
    \begin{equation}
        \| V_{\varepsilon, t} \| = \mathcal{O} \left( \Gamma^* n_q' |J|^{-1} +  \Gamma^{*2} n_q'^2|J|^{-1} \right) \ ,
    \end{equation}
    also in the limit of large $|J|$.
    \begin{lemma}
        \label{lemma:SBounds}
        Let $S_t$ be defined as in Eq.~\eqref{eq:DefS}. Then
        \begin{equation}
            \| S_t \| = \mathcal{O}\left( n_q' \Gamma^* |J|^{-1}\right)
        \end{equation}
        and almost everywhere
        \begin{equation}
            \| \partial_t S_t \| = \mathcal{O}\left( n_q' \Gamma^* |J|^{-1}\right) \ .
        \end{equation}
    \end{lemma}
    \begin{proof}
        Since $S_t$ is anti-Hermitian, and therefore also normal, by definition, its norm $\| S_t \|$ is its spectral radius $\max_i |\lambda_i|$, where $\lambda_i$ are the eigenvalues of $S_t$. The spectral radius in turn can be bounded by the Gershgorin circle theorem (Theorem~\ref{theorem:Gershgorin}). Since the diagonal elements of $S_t$ are all zero and the off-diagonal elements only connect eigenstates of $H_d$ in $\mathcal{P}$ and $\mathcal{Q}$ respectively, Gershgorin's circle theorem renders
        \begin{equation}
            \begin{aligned}
                \| S_t \| &= \max_i |\lambda_i|\\
                &\leq \max_z \sum_{z'} |\bra{z} S_t \ket{z'}| \\
                &=  \max_z \sum_{z'} \left| \frac{\bra{z} V_{od,t} \ket{z'}}{\Delta_{z,z'} \pm J} \right| \\
                &\leq \max_z \sum_{z'} \left| \frac{\bra{z} V_{od,t} \ket{z'}}{\min_{z,z'}|\Delta_{z,z'} \pm J|  } \right| \\
                &\leq n_q' \left| \frac{\Gamma_t}{\min_{z,z'}|\Delta_{z,z'} \pm J|  } \right| \\
                &\leq n_q' \frac{|\Gamma_t|}{|J| - \max_{z,z'} |\Delta_{z,z'}|} \ \text{ (for }|J| > \max_{z,z'}|\Delta_{z,z'}| \text{)} \ .
            \end{aligned}
            \label{eq:SBound1}
        \end{equation}
        Choosing a constant $C_S > 1$, we can further bound for $|J| > \frac{C_S}{C_S - 1} \max_{z,z'}|\Delta_{z,z'}|$ and for all $t \in [0, \tau]$
        \begin{equation}
            \|S_t\| \leq C_S n_q' \Gamma^* |J|^{-1} \ .
            \label{eq:SBound2}
        \end{equation}
        Note that since $C_S > 1$, then $C_S/(C_S - 1) > 1$ and the condition on $|J|$ in Eq.~\eqref{eq:SBound1} is also contained in the condition in Eq.~\eqref{eq:SBound2}. Furthermore, note that by construction we have $\max_{z,z'} |\Delta_{z,z'}| = \max_\mathcal{K}2|h_{\mathcal{K}}| = \mathcal{O}(1)$, so the choice of $C_S$ is also independent of $n_q'$.\\
        Similarly, since by the definitions of $S_t$ in Eq.~\eqref{eq:DefS} and $\Gamma_t$ in Section~\ref{sec:GammaPulseDefinition}, $\partial_t \Gamma_t$ exists a.e. with $|\partial_t \Gamma_t| = \mathcal{O}(\Gamma^*)$ and therefore by application of the above argument we find
        \begin{equation}
            \| \partial_t S_t \| = C_{\partial S} n_q' \Gamma^{*} |J|^{-1} \ \text{a.e.},
        \end{equation}
        again for some constant $C_{\partial S} > 0$. This concludes the proof.
    \end{proof}
    We now show that the norm of the error $V_{\varepsilon, t}$ vanishes for large $|J|$ and for small $n_q' \Gamma^*$ respectively. To this end, we will prove the following lemma.
    \begin{lemma}
        \label{lemma:HamiltonianErrorBound}
        Let $V_{\varepsilon, t}$, $S_t$ and $H_t$ be as above. Then
        \begin{equation}
            \| V_{\varepsilon, t} \| = \mathcal{O} \left( \Gamma^* n_q' |J|^{-1} +  \Gamma^{*2} n_q'^2|J|^{-1} \right) \ \text{a.e.}
        \end{equation}
    \end{lemma}
    \begin{proof}
        We will bound the terms in the definition of $V_{\varepsilon,t}$ in Eq.~\eqref{eq:DefVeps} and then combine the term-wise bounds using the triangle inequality. The bound on the first term follows directly from Lemma~\ref{lemma:SBounds}, which results in
        \begin{equation}
            \label{eq:FirstTermBound}
            \begin{aligned}
                \| \partial_t S_t \| &= C_{\partial S} n_q' \Gamma^* |J|^{-1} \ \text{a.e.}
            \end{aligned}
        \end{equation}
        for sufficiently large $|J|$. Similarly, the second term can be bounded as
        \begin{equation}
            \label{eq:SecondTermBound}
            \left\| \left[ S_t, \Gamma_t \sum_n X_n \right] \right\| \leq 2\|S_t\| \Gamma^* n_q' \leq 2 C_S \Gamma^{*2} n_q'^2 |J|^{-1} \ ,
        \end{equation}
        again using Lemma~\ref{lemma:SBounds} for the last inequality. Let us turn to the infinite sum. Consider that, using the definition of $S_t$ from Eq.~\eqref{eq:DefS} and $V_{d,t} + V_{od,t} = \Gamma_t \sum_n X_n$, we find
        \begin{equation}
            \begin{aligned}
                [S_t, H_t] &= [S_t, H_d] + [S_t, V_{d,t}] + [S_t, V_{od,t}] \\
                &= -V_{od,t} + \left[ S_t, \Gamma_t \sum_n X_n \right]
            \end{aligned}
        \end{equation}
        and therefore by using the linearity of the nested commutator
        \begin{equation}
            [S_t, H_t]_k = [S_t, -V_{od,t}]_{k-1} + \left[ S_t, \Gamma_t \sum_n X_n \right]_k \ .
        \end{equation}
        Using the triangle inequality and submultiplicativity of operator norms, we can bound
        \begin{equation}
            \begin{aligned}
                \left\| [S_t, H_t]_k \right\| &= \left\| [S_t, -V_{od,t}]_{k-1} + \left[ S_t, \Gamma_t \sum_n X_n \right]_k \right\| \\
                &\leq \left\| [S_t, -V_{od,t}]_{k-1} \right\| + \left\| \left[ S_t, \Gamma_t \sum_n X_n \right]_k \right\| \\
                &\leq 2^{k-1} \|S_t\|^{k-1} \|V_{od,t}\| + 2^k \| S_t \|^k \left\| \Gamma_t \sum_n X_n \right\| \ ,
            \end{aligned}
            \label{eq:NestedCommBound}
        \end{equation}
        where we have used the standard bound $\| [A, B ]_k \| \leq 2^k \|A\|^k \|B\|$. Using Lemma~\ref{lemma:SBounds}, $\| V_{od,t} \| \leq n_q' \Gamma^*$ and $\| \Gamma_t \sum_n X_n \| \leq n_q' \Gamma^*$ in Eq.~\eqref{eq:NestedCommBound}, we find
        \begin{equation}
            \begin{aligned}
                \| [S_t, H_t]_k \| &\leq \left( 2 C_S n_q' \Gamma^{*} |J|^{-1} \right)^{k-1} n_q' \Gamma^{*} + \left( 2 C_S n_q' \Gamma^{*} |J|^{-1} \right)^k  n_q' \Gamma^{*} \ .
            \end{aligned}
        \end{equation}
        It follows that the series in the definition of $V_{\varepsilon, t}$ in Eq.~\eqref{eq:DefVeps} converges absolutely. Then, we can bound the infinite sum using Lemma~\ref{lemma:ExponentialSeriesBound} according to
        \begin{equation}
            \label{eq:ThirdTermBound}
            \begin{aligned}
                \left\| \sum_{k=2}^\infty \frac{1}{k!} [S_t, H_{t}]_k \right\| &\leq \sum_{k=2}^\infty \frac{1}{k!} \left\| [S_t, H_{t}]_k \right\| \\
                &\leq n_q' \Gamma^{*} \sum_{k=2}^\infty \frac{1}{k!} \left( \left( 2 C_S n_q' \Gamma^{*} |J|^{-1} \right)^{k-1} + \left( 2 C_S n_q' \Gamma^{*} |J|^{-1} \right)^k \right) \\
                &= n_q' \Gamma^{*} \sum_{k=2}^\infty \frac{1}{k!} \left( 2 C_S n_q' \Gamma^{*} |J|^{-1} \right)^{k-1} + n_q' \Gamma^{*}\sum_{k=2}^\infty \frac{1}{k!} \left( 2 C_S n_q' \Gamma^{*} |J|^{-1} \right)^k \\
                &= n_q' \Gamma^{*} \left( \left( 2 C_S n_q' \Gamma^{*} |J|^{-1} \right)^{-1} + 1 \right) \sum_{k=2}^\infty \frac{1}{k!} \left( 2 C_S n_q' \Gamma^{*} |J|^{-1} \right)^k \\
                &\leq n_q' \Gamma^{*} \left( \left( 2 C_S n_q' \Gamma^{*} |J|^{-1} \right)^{-1} + 1 \right) \left( 2 C_S n_q' \Gamma^{*} |J|^{-1} \right)^2 e^{ 2 C_S n_q' \Gamma^{*} |J|^{-1}} \ .
            \end{aligned}
        \end{equation}
        Then, using Eq.~\eqref{eq:FirstTermBound}, Eq.~\eqref{eq:SecondTermBound} and Eq.~\eqref{eq:ThirdTermBound}, we can bound the error term almost everywhere
        \begin{equation}
            \begin{aligned}
                \| V_{\varepsilon, t} \| \leq& \|\partial_t S_t \| + 2 n_q' \Gamma^* \| S_t\| + \sum_{k=2}^\infty \frac{1}{k!} \| [S_t, H_{t}]_k \| \\ 
                \leq& C_{\partial S} n_q' \Gamma^* |J|^{-1} + 2 C_S n_q'^2 \Gamma^{*2} |J|^{-1} + 2 C_S n_q'^2 \Gamma^{*2} |J|^{-1} e^{ 2 C_S n_q' \Gamma^{*} |J|^{-1}} \\
                &+ n_q' \Gamma^{*} \left( 2 C_S n_q' \Gamma^{*} |J|^{-1} \right)^2 e^{ 2 C_S n_q' \Gamma^{*} |J|^{-1}}
            \end{aligned}
            \label{eq:VBound1}
        \end{equation}
        Hence, for small $n_q' \Gamma^{*} |J|^{-1}$, Eq.~\eqref{eq:VBound1} gives
        \begin{equation}
            \| V_{\varepsilon, t} \| = \mathcal{O} \left( \Gamma^* n_q' |J|^{-1} +  \Gamma^{*2} n_q'^2|J|^{-1} \right) \ \text{a.e.},
        \end{equation}
        which concludes the proof.
    \end{proof}
    Lemma~\ref{lemma:HamiltonianErrorBound} establishes the bound
    \begin{equation}
        \| H_d + V_{d,t} - H_{eff,t} \| = \mathcal{O} \left( \Gamma^* n_q' |J|^{-1} + \Gamma^{*2} n_q'^2|J|^{-1} \right) \ \text{a.e.},
    \end{equation}
    which we can now use to bound the error on the time evolution. Considering the time-evolution under $H_d + V_{d,t} + V_{\varepsilon, t}$, we can treat $V_{\varepsilon, t}$ as a perturbation for small $\| V_{\varepsilon,t}\|$, which is true for sufficiently large $|J|$, and by applying Lemma~\ref{lemma:TimeEvolutionError}, we find that
    \begin{equation}
        \left\| \mathcal{T} \exp \left(-i\int_0^{\tau} H_{eff,t} dt \right) - \mathcal{T} \exp \left( -i \int_0^\tau (H_d + V_{d,t}) dt \right) \right\| = \mathcal{O}\left( \tau\Gamma^* n_q' |J|^{-1} + \tau \Gamma^{*2} n_q'^2 |J|^{-1} \right) \ .
        \label{eq:TimeEvoBound1}
    \end{equation}
    The unitary generated by $H_{eff,t}$ is in the dressed basis generated by $S_t$. However, by construction of $\Gamma_t$ we have $\Gamma_0 = \Gamma_\tau = 0$ and therefore also $S_0 = S_\tau = 0$. Thus, in the computational basis, the unitary reads
    \begin{equation}
        \begin{aligned}
            U_{0,\Gamma} &= e^{-S_\tau} U_{eff,\Gamma} e^{S_0} = U_{eff,\Gamma} \ ,
        \end{aligned}
        \label{eq:TimeEvoBound2}
    \end{equation}
    where we have defined
    \begin{equation}
        U_{eff,\Gamma} := \mathcal{T} \exp \left(-i\int_0^{\tau} H_{eff,t} dt \right) \ .
    \end{equation}
    This is where the assumption $\Gamma_0 = \Gamma_\tau = 0$ is convenient, since without it, $S_0$ and $S_\tau$ would not vanish and we would have to track an additional error $\mathcal{O}(n_q' \Gamma^* |J|^{-1})$, which could be absorbed in the error bounds derived here. Then, using Eq.~\eqref{eq:TimeEvoBound1}, Eq.~\eqref{eq:TimeEvoBound2} and that $\tau = \mathcal{O}(1)$ as defined in Section~\ref{sec:GammaPulseDefinition}, we find
    \begin{equation}
        \left\| U_{0,\Gamma} - U_{\Gamma, PQ} \right\| = \mathcal{O} \left(  \Gamma^* n_q' |J|^{-1} +  \Gamma^{*2}n_q'^2 |J|^{-1}  \right) \ ,
        \label{eq:EffectiveGammaPulse}
    \end{equation}
    where, for ease of notation, we have defined
    \begin{equation}
        U_{\Gamma, PQ} := \mathcal{T} \exp \left( -i\int_0^\tau (H_d + V_{d,t}) dt \right) \ .
    \end{equation}
    $U_{\Gamma, PQ}$ is the ideal $\Gamma$-pulse $U_\Gamma$, except that it also acts non-trivially on the complement of the computational subspace $\mathcal{Q}$. Since $H_d + V_{d,t}$ commutes with $P$, it follows directly that
    \begin{equation}
        [P, U_{\Gamma, PQ}] = 0
    \end{equation}
    and
    \begin{equation}
        PU_{\Gamma,PQ} = U_{\Gamma,PQ}P = PU_\Gamma = U_\Gamma P \ .
    \end{equation}
    Considering a concatenation of $N = \mathcal{O}\left( \Gamma^{*-1} \right)$ $\Gamma$-pulses, we find
    \begin{equation}
        \begin{aligned}
            &U_Z(\tau_{0})\prod_{i=1}^{N/2} U_{0,\Gamma} U_Z(\tau_{2i-1}) U_{0,\Gamma} U_Z(\tau_{2i}) \\
            =& U_Z(\tau_0)\prod_{i=1}^{N/2} \left[ U_{\Gamma, PQ} U_Z(\tau_{2i-1}) U_{\Gamma, PQ} U_Z(\tau_{2i})  + \mathcal{O} \left( \Gamma^* n_q' |J|^{-1} + \Gamma^{*2} n_q'^2 |J|^{-1} \right) \right]\\
            =& U_Z(\tau_0) \prod_{i=1}^{N/2} \left[ U_{\Gamma, PQ} U_Z(\tau_{2i-1}) U_{\Gamma, PQ} U_Z(\tau_{2i}) \right] + \mathcal{O} \left( N\Gamma^* n_q' |J|^{-1} + N\Gamma^{*2} n_q'^2 |J|^{-1} \right) \ ,
        \end{aligned}
        \label{eq:EffectiveGammaPulseSequence}
    \end{equation}
    where the first equality follows using Eq.~\eqref{eq:EffectiveGammaPulse} and the second equality expands the product and considers the limit of small error. Furthermore, consider that with Eq.~\eqref{eq:EffectiveGammaPulse} we find that
    \begin{equation}
        \| U_{0,\Gamma} P - U_{\Gamma, PQ} P \| \leq \| U_{0,\Gamma} - U_{\Gamma, PQ} \| = \mathcal{O} \left(  \Gamma^* n_q' |J|^{-1} +  \Gamma^{*2}n_q'^2 |J|^{-1}  \right) \ .
    \end{equation}
    Then, using that $[U_\Gamma, P] = [U_{\Gamma,PQ}, P] = [U_Z, P] = 0$ and the fact that $N = \mathcal{O}(\Gamma^{*-1})$ we find for the time evolution of an initial state in $\mathcal{P}$ that
    \begin{equation}
        \begin{aligned}
            &\left( U_Z(\tau_{0})\prod_{i=1}^{N/2} U_{0,\Gamma} U_Z(\tau_{2i-1}) U_{0,\Gamma} U_Z(\tau_{2i}) \right) P \\
            = & \left( U_Z(\tau_{0})\prod_{i=1}^{N/2} U_{\Gamma, PQ} U_Z(\tau_{2i-1}) U_{\Gamma, PQ} U_Z(\tau_{2i}) \right) P + \mathcal{O}\left( n_q' |J|^{-1} + \Gamma^{*} n_q'^2 |J|^{-1}\right) \\
            = &\left( U_Z(\tau_{0})\prod_{i=1}^{N/2} U_{\Gamma} U_Z(\tau_{2i-1}) U_{\Gamma} U_Z(\tau_{2i}) \right) P + \mathcal{O}\left( n_q' |J|^{-1} + \Gamma^{*} n_q'^2 |J|^{-1}\right)\\
            = & \left( \prod_{\mathcal{K} \subseteq \mathcal{I}} \prod_{n\in\mathcal{K}} V_{\mathcal{K}, n}^{(c)} \right) P + \mathcal{O}\left( n_q' \varepsilon_{\text{phase}} / \Gamma^{*} + n_q'^2 \Gamma^* + n_q' |J|^{-1} + \Gamma^{*} n_q'^2 |J|^{-1}\right) \ ,
        \end{aligned}
    \end{equation}
    where the first equality follows from Eq.~\eqref{eq:EffectiveGammaPulseSequence}, the second equality from the observation that $U_{\Gamma, PQ} P = P U_{\Gamma, PQ} P = P U_{\Gamma} P = U_{\Gamma} P$, and the last equality follows from Lemma~\ref{lemma:UniversalGateSequence} for any $V_\mathcal{K} \in \text{SU}(2)$, $\mathcal{K} \in \{ \mathcal{A}, \mathcal{B}, \mathcal{C}, \mathcal{D} \} \subseteq \mathcal{I}$. In the following, we will show that we require a polynomially large $|J|\gg 1$, so $|J|^{-1} \ll 1$ and the $\Gamma^* n_q'^2 |J|^{-1}$ term in the error can be absorbed into the $\Gamma^* n_q'^2$ term. Thus, we have proved the following lemma.
    \begin{lemma}
        \label{lemma:FullConditionalGateWithError}
        For any set of conditional gates $V_{\mathcal{K}, n}^{(c)}$ acting on non-adjacent qubits $\mathcal{I} \in \{ \mathcal{A}, \mathcal{B}, \mathcal{C}, \mathcal{D}, \mathcal{B} \cup \mathcal{D} \}$ and any $\nu>0$ and $\varepsilon_{\text{phase}} >0$, there is a sequence of $N$ $\Gamma$-pulses and $N+1$ waiting times $\tau_i = \mathcal{O} \left( \varepsilon_{\text{phase}}^{-3-\nu} \right)$ such that
        \begin{equation}
            \begin{aligned}
                &\left\| \left( \prod_{\mathcal{K} \subseteq \mathcal{I}} \prod_{n\in\mathcal{K}} V_{\mathcal{K}, n}^{(c)} - U_Z(\tau_0) \prod_{i=1}^{N/2} U_{0,\Gamma} U_Z(\tau_{2i-1}) U_{0, \Gamma} U_Z(\tau_{2i}) \right) P \right\| \\
                &\leq \delta_{\text{gate}} = \mathcal{O}\left( n_q' \varepsilon_{\text{phase}} / \Gamma^{*} + n_q'^2 \Gamma^* + n_q' |J|^{-1} \right) \ ,
            \end{aligned}
            \label{eq:FullConditionalGateWithError}
        \end{equation}
        where $N = \mathcal{O}\left( \Gamma^{*-1} \right)$.
    \end{lemma}
    Lemma~\ref{lemma:FullConditionalGateWithError} tells us that for a state initialized in the computational subspace $\mathcal{P}$, the evolution under $\Gamma$-pulses and waiting times can implement controlled gates on non-adjacent subsets of qubits. There are three contributions to the error scaling in Lemma~\ref{lemma:FullConditionalGateWithError}, but we also have three parameters to control it. We will conclude with an analysis of the required resources to reduce the error below any desired level.

    \subsection{Total error scaling and resources}
    Using the CP-method, a quantum circuit on $n_q$ qubits and $p$ gates is mapped to a Hamiltonian of $n_q' = \mathcal{O}(n_q^2)$ qubits and $p' = \mathcal{O}(n_q p)$ conditional quantum gates $\{ U_{\mathcal{I}_i, i}\}_{i=1}^{p'}$. Since the CP-method simulates any quantum circuit exactly, we find that if we initialize the $n_q'$ qubits in $\ket{0}^{\otimes n_q'}$ and  apply the gates, we obtain
    \begin{equation}
        U_{\mathcal{I}_{p'}, p'} U_{\mathcal{I}_{p'-1}, p'-1} ... U_{\mathcal{I}_{1}, 1} \ket{0}^{\otimes n_q'} = U_{CP} \ket{0}^{\otimes n_q'} \ ,
    \end{equation}
    where we defined $U_{CP}:= U_{\mathcal{I}_{p'}, p'} U_{\mathcal{I}_{p'-1}, p'-1} ... U_{\mathcal{I}_{1}, 1}$ as the CP-mapped realization of the target circuit $U_{\text{target}}$. Tracing over all qubits that do not carry the logical quantum information, which we denote with $\Tr_{RO}$, we find that
    \begin{equation}
        \Tr_{RO} \left( U_{CP} \ket{0}^{\otimes n_q'} \bra{0}^{\otimes n_q'} U_{CP}^\dagger \right) = U_{\text{target}} \ket{0}^{\otimes n_q} \bra{0}^{\otimes n_q} U_{\text{target}}^{\dagger} \ .
        \label{eq:CPvsTarget}
    \end{equation}
    As we have shown with Lemma~\ref{lemma:FullConditionalGateWithError}, each of the $p'$ gates $U_{\mathcal{I}_{i}, i}$ can be expressed by concatenated $\Gamma$-pulses with waiting times $\tau_i$ between them up to an error bounded by $\delta_{\text{gate}}$ as in Eq.~\eqref{eq:FullConditionalGateWithError}. We can then similarly concatenate the pulse sequences from all the $p'$ gates to apply all the conditional gates required to run the CP-method. In the limit of small $\varepsilon_{\text{phase}}, \Gamma^*$, and $|J|^{-1}$, the errors are additive and the total error of $p'$ gates compiled into sequences as in Eq.~\eqref{eq:FullConditionalGateWithError} reads
    \begin{equation}
        \left\| \left(U_Z(\tau_0) \prod_{i=1}^{p'N/2} U_{0,\Gamma} U_Z(\tau_{2i-1}) U_{0, \Gamma} U_Z(\tau_{2i}) - U_{CP} \right) P \right\| \leq \delta_{\text{total}} \ ,
        \label{eq:CPCircuitError}
    \end{equation}
    where
    \begin{equation}
        \begin{aligned}
            \delta_{\text{total}} &\leq p'\delta_{\text{gate}} = \mathcal{O} \left( p'n_q' \varepsilon_{\text{phase}} / \Gamma^{*} + p'n_q'^2 \Gamma^{*} + p' n_q' |J|^{-1} \right) \\
            &= \mathcal{O} \left( p n_q^3 \varepsilon_{\text{phase}} / \Gamma^{*} + p n_q^5 \Gamma^{*}  + p n_q^3 |J|^{-1}\right) \ .
        \end{aligned}
        \label{eq:TotalError}
    \end{equation}
    The full unitary generated by the $\Gamma$-pulses is given by
    \begin{equation}
        U_T := U_Z(\tau_0) \prod_{i=1}^{p'N/2} U_{0,\Gamma} U_Z(\tau_{2i-1}) U_{0, \Gamma} U_Z(\tau_{2i}) \ .
    \end{equation}
    Note that since $U_T$ in this definition is a concatenation of waiting times ($\Gamma_t = 0$) and $\Gamma$-pulses, we can concatenate these pieces to create the schedule $\Gamma_t: [0, T] \rightarrow [0, \Gamma^*]$ such that $U_T = \mathcal{T} \exp (-i\int_0^T H_t dt )$, with $H_t$ as in Eq.~\eqref{eq:DefFullHamiltonian}.\\
    Since $P \ket{0}^{\otimes n_q'} = \ket{0}^{\otimes n_q'}$, then $U_T P \ket{0}^{\otimes n_q'} = U_T \ket{0}^{\otimes n_q'}$ and $U_{CP} P \ket{0}^{\otimes n_q'} = U_{CP} \ket{0}^{\otimes n_q'}$ and thus with Eq.~\eqref{eq:CPCircuitError}
    \begin{equation}
        \left\| U_T \ket{0}^{\otimes n_q'} \bra{0}^{\otimes n_q'} U_T^\dagger - U_{CP} \ket{0}^{\otimes n_q'} \bra{0}^{\otimes n_q'} U_{CP}^\dagger \right\| = \mathcal{O}(\delta_{\text{total}}) \ ,
    \end{equation}
    where we used the standard inequality
    \begin{equation}
        \| U P \ket{\psi} \bra{\psi} P U^{\dagger} - V P \ket{\psi}\bra{\psi}P V^\dagger \| \leq 2 \| UP - VP \| \leq 2 \| U - V \| \ .
    \end{equation}
    Then, with Eq.~\eqref{eq:CPvsTarget} and Eq.~\eqref{eq:TotalError}, we have
    \begin{equation}
        \begin{aligned}
            &\left\| \Tr_{RO} \left( U_T \ket{0}^{\otimes n_q'} \bra{0}^{\otimes n_q'} U_T^\dagger \right) - U_{\text{target}} \ket{0}^{\otimes n_q} \bra{0}^{\otimes n_q} U_{\text{target}}^{\dagger} \right\|\\
            =& \left\| \Tr_{RO} \left( U_T \ket{0}^{\otimes n_q'} \bra{0}^{\otimes n_q'} U_T^\dagger \right) -\Tr_{RO} \left( U_{CP} \ket{0}^{\otimes n_q'} \bra{0}^{\otimes n_q'} U_{CP}^\dagger \right) \right\| \\
            =& \delta_{\text{total}}' = \mathcal{O}(\delta_{\text{total}}) = \mathcal{O} \left( p n_q^3 \varepsilon_{\text{phase}} / \Gamma^{*} + p n_q^5 \Gamma^{*}  + p n_q^3 |J|^{-1}\right) \ .
        \end{aligned}
    \end{equation}
    Therefore, in order to simulate a given circuit with a maximal error $\delta_{\text{total}}' \leq \varepsilon$, we require the parameters
    \begin{equation}
        \begin{aligned}
            |J| &= \mathcal{O}\left(p n_q^3 \varepsilon^{-1} \right) \ , \\
            \Gamma^* &= \mathcal{O} \left( p^{-1} n_q^{-5} \varepsilon \right) \ , \\
            \varepsilon_{\text{phase}} &= \mathcal{O} \left( p^{-1} n_q^{-3}  \Gamma^* \varepsilon\right) \\
            &= \mathcal{O} \left( p^{-2} n_q^{-8}  \varepsilon^2 \right) \ .
        \end{aligned}
    \end{equation}
    While the scaling of $|J|$ gives the required energy scale, and the scaling of $\Gamma^*$ gives a precision of the time-dependent controls, the main contribution to the time overhead is given by Theorem~\ref{theorem:FillingTime} since $\tau_i = \mathcal{O}(\varepsilon_{\text{phase}}^{-3-\nu})$. Since there are $p'$ gates from the CP-method, each of which is compiled into $\mathcal{O}(\Gamma^{*-1})$ $\Gamma$-pulses and phase gates, the total computation time $T$ is
    \begin{equation}
        \begin{aligned}
            T &= \mathcal{O} \left( p' \Gamma^{*-1} \varepsilon_{\text{phase}}^{-3-\nu} \right) \\
            &= \mathcal{O} \left( p^{8+2\nu} n_q^{30 + 8\nu} \varepsilon^{-7-2\nu}\right)
        \end{aligned}
    \end{equation}
    for any $\nu > 0$. This concludes the proof of Theorem~\ref{theorem:MainResultFormal}.
\end{proof}